\title{\textbf{Trajectory tracking control for nonholonomic mobile robots under arbitrary reference input}}
\author{Xiaodong He}
\author{Zhiyong Geng\thanks{Corresponding author: zygeng@pku.edu.cn}}
\affil{State Key Laboratory for Turbulence and Complex Systems, Department of Mechanics and Engineering Science, College of Engineering, Peking University, Beijing 100871, China}
\newcommand{\trans}{\begin{scriptsize}\mbox{T}\end{scriptsize}}
\newtheorem{theorem}{Theorem}
\newtheorem{lemma}{Lemma}
\theoremstyle{definition}
\newtheorem{definition}{Definition}
\newtheorem{remark}{Remark}
\newtheorem{example}{Example}
\date{}
\begin{document}

\graphicspath{{figures/}}

\maketitle

\noindent \textbf{Abstract:} This paper studies the tracking control problem for nonholonomic mobile robots based on second order dynamics, with application to consensus tracking and formation tracking. The greatest novelty in this paper is that the reference trajectory can be arbitrarily chosen, in the sense that the condition of persistency excitation and any other requirements are not imposed on the leader. In this paper, at first, the tracking control of one leader with single follower is taken into consideration, which is converted to the stabilization of two relative subsystems by the design of an adjoint system. Then, the single follower tracking controller is extended to the consensus tracking of multiple nonholonomic mobile robots connected by a directed acyclic graph, in which the convex combination in nonlinear manifolds is introduced to construct a virtual leader for each follower. Next, the relationship between consensus control and formation control is established for multiple nonholonomic mobile robots, with the result that a new transformed system is constructed so as to obtain the formation tracking control strategy from the consensus tracking result. Finally, simulations are presented to verify the effectiveness of the control laws.

\

\noindent \textbf{Keywords:} nonholonomic mobile robot; trajectory tracking control; arbitrary reference input

\section{Introduction}

Cooperative control for networked multi-agent systems has received considerable attention during the past decade, to a certain extent motivated by its widely applications in various areas \cite{Fax,Oh,SunZY}. Particularly, numerous researchers have shown great interest in the trajectory tracking control for nonholonomic mobile robots, such as \cite{Jiang,Dixon,DoJ,Rudra,SunZ} and references therein. In the tracking control problem, the case where multiple followers are forced to achieve common values with the leader is generally mentioned as consensus tracking control. If the followers are required to move towards and maintain a desired geometric pattern with respect to the leader, then the problem can be referred to as formation tracking control.

The problem of nonholonomic mobile robots formation tracking or leader-follower formation has been investigated in a large number of literatures. In \cite{LiuJ}, a distributed controller for leader-follower formation was presented by the nonlinear small gain method. The paper \cite{Vos} investigated the formation problem under the port-Hamiltonian framework and presented a controller with the assumption that initial attitude angles of robots lie in a semicircle. In \cite{Loria}, the leader-follower formation controller was designed with the reference trajectory as a straight line path. The authors in \cite{Maghenem} proposed a nonlinear time-varying controller under the relaxed persistency of excitation condition of velocities' norm. In \cite{Cheng}, a disturbance observer-based output feedback controller was developed by the method of integral sliding-mode control.

Additionally, several works handle the formation tracking problem in a more complex case or with other constraints, which make the controller more practical and applicable in real situation. Author in \cite{Do} considered limited sensing range of mobile robots and collision avoidance when designing the formation tracking controller. In \cite{Yu}, the velocity constraints of mobile robots were taken into account in the formation controller design. To obtain the leader's states, in \cite{Miao}, a distributed estimator is proposed for each follower and the formation controller is designed based on the estimated information. In \cite{Chen}, the authors presented a receding-horizon controller to sharpen the performance of formation tracking errors convergence.

Above literatures are all relevant and significant works in the tracking control problem, and remarkably enrich and the nonlinear control theory for nonholonomic mobile robot, while there still exist several aspects that could be further improved, which are summarized in the following.

Firstly, most of existing results in leader-follower formation requires the persistency of excitation (PE) condition, roughly speaking, which means the velocity of the leader cannot be zero, or some equivalent conditions. In such cases, extra limitations are imposed on the trajectory of the leader, and the tracking controller is confined to a certain scope of application. However, in common reality, the mobile robots are able to freely move and stop in their motion, and the tracking controller should be effective under any situation. Thus, it would be more practical if the PE condition is removed and the trajectory of the leader can be arbitrarily chosen.

Secondly, a vast number of researches consider the states of the robots in linear space, while the actual configuration space is a nonlinear manifold. Only in exceptional circumstances can the configuration be described by vectors in the Euclidean space \cite{Bullo2005}. The most significant advantage of the manifold description lies in globalness and uniqueness. Specifically, the motion of mechanical systems can be described in a manifold independent of the local coordinates, contributing to global derived results. Therefore, in this paper the robot is modelled in the Lie group SE(2), with the full name as Special Euclidean group in two dimensions, which is an important class of manifolds for planar rigid body motion. The researches under Lie group framework can be found in \cite{Dong,LiuND,LiuRNC,Peng,Sun,TayefiND}.

Thirdly, plenty of results are mainly based on the kinematic model, in which the control input is velocity and steps and impulses are absolutely prohibited due to the physical nature. However, owing to no such requirements for force and torque, the controller at dynamic level are more flexible and practical. Some control strategies on dynamics are derived from kinematics controls by the backstepping method, which still requests the control algorithm to be designed on kinematics firstly. More importantly, in 3-dimensional space, it is extremely difficult to use the backstepping method to obtain control inputs on dynamics from those on kinematics, due to the gyroscopic torques. It would be more decent if the controller can be designed on dynamics directly.

Motivated by the existing literatures, in this paper we consider the the trajectory tracking control problem for nonholonomic mobile robots based on second order dynamics, and extend the controller to consensus tracking and formation tracking. The contributions of this paper are threefold.

\hangafter 1
\hangindent 2em
$\bullet$ At first, the tracking control problem of one leader with single follower is taken into consideration. By the design of an adjoint system, the original dynamics is decomposed into two subsystems. Thus, the tracking problem is able to be solved by designing the controller for each subsystem, which can be obtained based on several existing results.

\hangafter 1
\hangindent 2em
$\bullet$ Then, the single follower tracking controller is generalized to the consensus tracking of multiple nonholonomic mobile robots, which are connected by a directed acyclic graph with one root node. We introduce the convex combination in nonlinear manifolds to construct a local virtual leader for each follower, with the result that the consensus tracking is able to be converted to the single follower tracking.

\hangafter 1
\hangindent 2em
$\bullet$ Lastly, by coordinate transformation between different body-fixed frames, the relationship between consensus control and formation control is established for multiple nonholonomic mobile robots. Based on such a relationship, a new transformed system is proposed so as to obtain the formation tracking control strategy from the consensus tracking result.

The novelties of the presented tracking control strategy are as follows. 1) The reference trajectory of the leader can be arbitrarily chosen, that is to say, the PE condition and any other requirements are all removed from the leader. Thus, the tracking controller of the follower can be effective under all types of reference input. 2) The convergence domain of the controller is global, in other words, the robots are able to accomplish the tracking task from any initial position and orientation. Furthermore, the initial relative attitude angle specified as $\pi$ or $-\pi$ would correspond to different trajectories, which provides choices in case of limitations imposed by surroundings. 3) The control strategy is directly proposed based on the dynamic model, rather than be designed on kinematics and derived to dynamics by the backstepping method. Thus, such a controller in dynamics is possible to be extended to 3-dimensional motion.

With regards to the control of nonholonomic mobile robot at the dynamics level, the tracking problem under global convergence and arbitrary reference input possessed general universality and practicality. Unfortunately, the control strategy for such a kind of problem cannot be found in the existing literatures. To the best of our knowledge, it is the first time that a global tracking controller is proposed for arbitrary reference input in the sense of dynamics.

The remainder of this paper is organized as follows. The mathematical preliminaries and problem formulation are given in Section 2. In Section 3, we present the controller for single follower tracking, consensus tracking and formation tracking respectively. Numerical simulation examples are presented in Section 4 to verify the control algorithm. Finally, the paper comes to an end with the conclusion in Section 5.

\section{Preliminaries and Problem Formulation}

The mathematical preliminaries and problem formulation are provided in this section, where the notations used in this paper are introduced in Table \ref{tab_nontation}.

\begin{table}\small
\caption{Notations used in this paper}
\label{tab_nontation}
\centering
\begin{tabular}{ll}
\hline\noalign{\smallskip}
Symbol  &  Description \\
\noalign{\smallskip}\hline\noalign{\smallskip}
    $x$, $y$   & Position of the rigid body in the plane  \\
    $\bm{p}$   & Position vector, i.e. $\bm{p}=[x \ \ y]^{\begin{tiny}\mbox{T}\end{tiny}}$  \\
    $\theta$   & Attitude angle of the rigid body in the plane   \\
    $R$        & Rotation matrix through $\theta$   \\
    $v_x$, $v_y$ & Translational velocity of the rigid body   \\
    $\bm{v}$   & Translational velocity vector, i.e. $\bm{v}=[v_x \ \ v_y]^{\begin{tiny}\mbox{T}\end{tiny}}$  \\
    $\omega$   & Angular velocity of the rigid body  \\
    $g$        & Lie group element representing configuration \\
    $\hat{\xi}$  & Lie algebra element representing velocity, with $\bm{\xi}$ as its vector form  \\
    $\hat{X}$  & Exponential coordinates of $g$, with $\bm{X}$ as its vector form \\
    $\hat{u}$  & Control input at force and torque level, with $\bm{u}$ as its vector form \\
    $\bm{q}$   & Translational component of $\hat{X}$ and $\bm{q}=[q_x\ \ q_y]^{\begin{tiny}\mbox{T}\end{tiny}}$  \\
    $\alpha$   & Auxiliary angular term and $\alpha=(\theta/2)\cot(\theta/2)$  \\
    $\beta$    & Angle of the vector $\bm{q}$ in the body-fixed frame, i.e. $\beta=-\arctan(q_y/q_x)$  \\
    $g_{ij}$   & Relative configuration of $g_j$ with respect to $g_i$ \\
    $\hat{\xi}_{ij}$ & Relative velocity of $\hat{\xi}_j$ with respect to $\hat{\xi}_i$, with $\bm{\xi}_{ij}$ as its vector form  \\
    $\bar{g}_{ij}$   & Desired relative configuration of $g_j$ with respect to $g_i$ \\
    $g_{c_i}$  & Convex combination of the parent nodes of $g_{i}$ \\
    $\bar{g}_{c_{i}i}$  & Desired relative configuration of $g_i$ with respect to $g_{c_i}$ \\
    $g_{a_i}$  & Auxiliary transformed system of $g_{i}$ and $g_{a_{i}}=g_{c_{i}}\bar{g}_{c_{i}i}$ \\
    $I$        & Identity matrix \\
\hline\noalign{\smallskip}
\end{tabular}
\end{table}

\subsection{Systems described in the Lie group SE(2)}

Consider a rigid body moving in a plane. Let $\mathcal{F}_s$ denote the spacial or inertial frame attached to the earth, and let $\mathcal{F}_b$ represent the body-fixed frame, which is attached to the center of mass of the rigid body. The position of the rigid body in $\mathcal{F}_s$ is described by a vector $\bm{p}=[x \ \ y]^T\in\mathbb{R}^2$. The orientation of the rigid body is specified by a rotation matrix $R\in\mbox{SO(2)}$, where the Special Orthogonal group is defined as $\mbox{SO(2)}=\{R\in\mathbb{R}^{2\times 2}\ |\ R^TR=I_2,\mbox{det}R=1\}$. Specifically, $R$ is expressed as $R=\begin{bmatrix} \cos\theta & -\sin\theta \\ \sin\theta & \cos\theta \end{bmatrix}$, where $\theta$ is the attitude angle of $\mathcal{F}_b$ relative to $\mathcal{F}_s$. Thus, the configuration of the rigid body is able to be denoted by a matrix \begin{eqnarray} \label{eq_defi_g} g=\begin{bmatrix} R & \bm{p} \\ \textbf{0}_{1\times 2} & 1 \end{bmatrix}=\begin{bmatrix} \cos\theta & -\sin\theta & x \\ \sin\theta & \cos\theta & y \\ 0 & 0 & 1 \end{bmatrix}\in\mathbb{R}^{3\times 3}.\end{eqnarray} The set of all such configurations $g$, with matrix multiplication as a group operation on it, constitutes a matrix Lie group named Special Euclidean group SE(2). In other words, $g$ is the element in SE(2). Hence, the Lie group SE(2) is the configuration space of a planar rigid body, capturing the position and orientation simultaneously.

Denote $T_g\mbox{SE(2)}$ the tangent space of SE(2) at element $g$, so that $T_I\mbox{SE(2)}$ is the tangent space at identity element $I$. Define the following skew symmetric bilinear operation on $T_I\mbox{SE(2)}$ \begin{eqnarray*}[\hat{X}, \hat{Y}]=\hat{X}\hat{Y}-\hat{Y}\hat{X}, \quad \hat{X},\hat{Y}\in T_I\mbox{SE(2)},\end{eqnarray*} which is called Lie bracket. For a given $\hat{X}$, the Lie bracket $[\hat{X}, \hat{Y}]$ defines a linear map on $T_I\mbox{SE(2)}$ denoted by $\mbox{ad}_{\hat{X}}$, which is called the adjoint map defined by $\hat{X}$. Thus, there holds that $\mbox{ad}_{\hat{X}}\hat{Y}=[\hat{X}, \hat{Y}]$. Once the linear space $T_I\mbox{SE(2)}$ is endowed with Lie bracket, it is named as Lie algebra and denoted by $\mathfrak{se}(2)$. The element $\hat{\xi}\in\mathfrak{se}(2)$ is defined as \begin{eqnarray} \label{eq_defi_xi} \hat{\xi}=\begin{bmatrix} \hat{\omega} & \bm{v} \\ \bm{0}_{1\times 2} & 0 \end{bmatrix} = \begin{bmatrix} 0 & -\omega & v_x \\ \omega & 0 & v_y \\ 0 & 0 & 0 \end{bmatrix} \in \mathbb{R}^{3\times 3}, \end{eqnarray} where $\hat{\omega}\in\mathfrak{so}(2)$ (the Lie algebra associated with SO(2)) is a skew symmetric matrix representing the angular velocity, and $\bm{v}=[v_{x}\ \ v_{y}]^{\trans}\in\mathbb{R}^2$ is a vector denoting the translational velocity. Herein, $\wedge:\mathbb{R}\to\mathfrak{so}(2)$ is called hat map while its inverse is vee map $\vee:\mathfrak{so}(2)\to\mathbb{R}$. For a 3-dimensional vector $\bm{\xi}=[\omega \ \ v_x \ \ v_y]^T$, the $\hat{\xi}$ is defined in (\ref{eq_defi_xi}), with slight abuse of notation. As is portrayed in Figure \ref{fig_sys}, $g$ describes the position and orientation, and $\hat{\xi}$ represents the velocity, where $v_x$ and $v_y$ are along the positive direction of the body-fixed frame $\mathcal{F}_b$.

For a given $g\in\mbox{SE(2)}$, it can be shown that $g\dot{g}$ and $\dot{g}g$ are both in $\mathfrak{se}(2)$. The former is called the body velocity denoted by $\hat{\xi}$ (actually it is the velocity that we define in (\ref{eq_defi_xi})), which is expressed in the body-fixed frame $\mathcal{F}_b$, while the latter is called the spacial velocity denoted by $\hat{\xi}^s$, which is expressed in the spacial frame $\mathcal{F}_s$. The relation of these two velocities is given by $\hat{\xi}^s=\mbox{Ad}_{g}\hat{\xi}$, in which $\mbox{Ad}_{g}:\mathfrak{se}(2)\to\mathfrak{se}(2)$ is named the adjoint map and defined as \begin{equation} \label{eq_Ad} \mbox{Ad}_{g}\hat{\xi}=g\hat{\xi}g^{-1}. \end{equation} Hence, such a map plays an important role in the reference frame transformation for the velocity.

\begin{figure}
\centering
\includegraphics[width=2.2in]{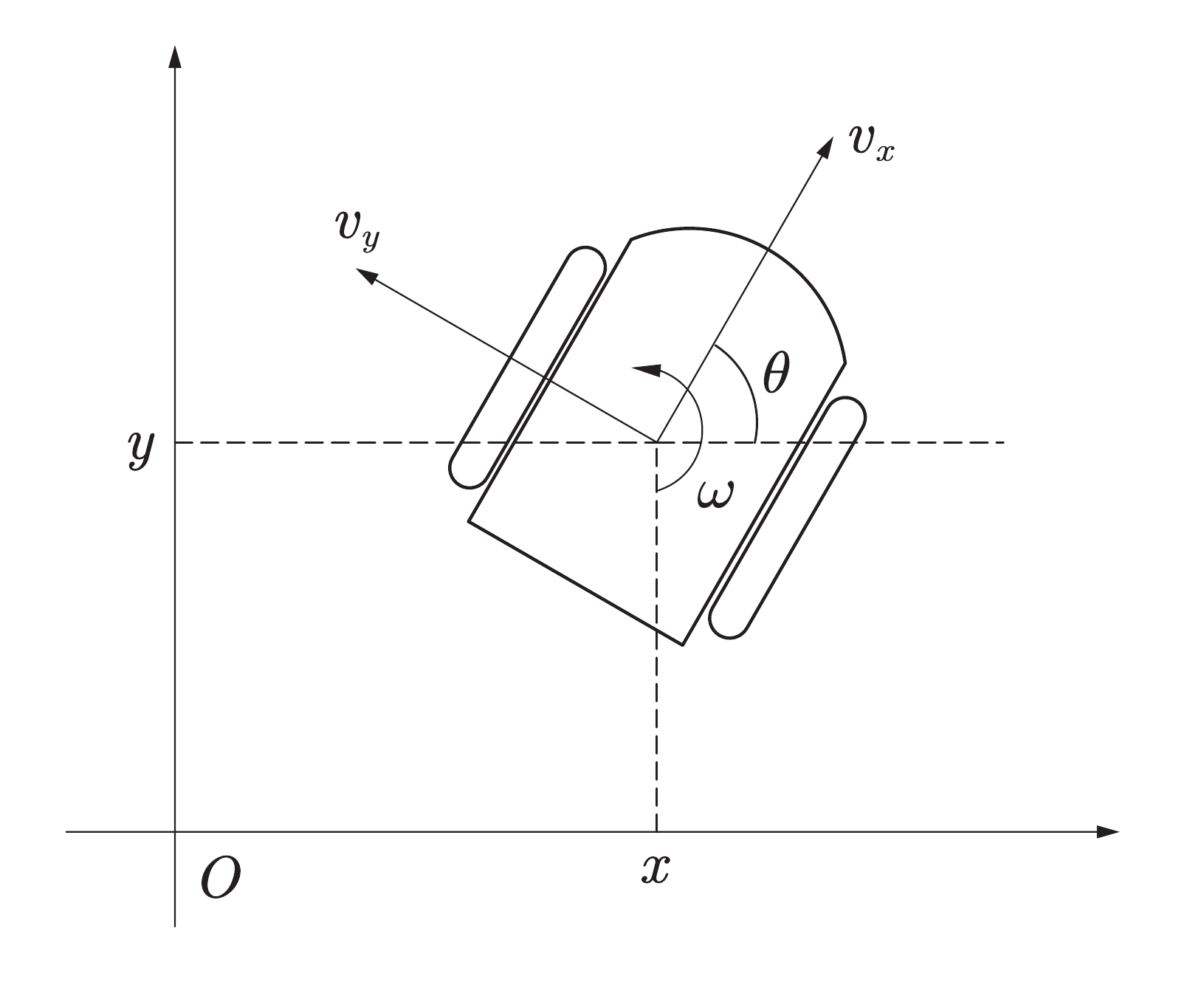}
\caption{Coordinates of a planar rigid body}
\label{fig_sys}
\end{figure}

\begin{definition}
For a planar rigid body, it is said to be nonholonomic constrained if the time derivative of the position vector $\bm{p}$ satisfies \begin{equation*} \label{non} [-\sin\theta \quad \cos\theta]\dot{\bm{p}}=0.\end{equation*} It can be shown that this is equivalent to $v_y=0$, which is also called the nonholonomic constraint.
\end{definition}

The model that would be considered in this paper is the nonholonomic mobile robot.

\subsection{Dynamics of the system}

The dynamics of the rigid body evolving on the tangent bundle $T\mbox{SE(2)}$ can be described by the $Euler-Poincar\acute{e}$ equation \begin{equation*} \left\{ \begin{aligned} & \dot{g}=g\hat{\xi} \\ & \dot{\hat{\xi}} = [\mathbb{I}]^{-1} ([\mbox{ad}_{\hat{\xi}}]^{\trans} [\mathbb{I}] \hat{\xi} + F) \end{aligned} \right. , \end{equation*} in which $[\mathbb{I}]$ is the inertial matrix, $[\mbox{ad}_{\hat{\xi}}]$ is the adjoint operator, and $F$ is the control force or torque. Generally, $[\mathbb{I}]$ and $[\mbox{ad}_{\hat{\xi}}]$ can be explicitly expressed in SE(2) as \begin{eqnarray*}[\mathbb{I}]=\begin{bmatrix} J & & \\ & m &  \\ & & m \end{bmatrix}, \quad [\mbox{ad}_{\hat{\xi}}]=\begin{bmatrix} 0 & 0 & 0 \\ v_y & 0 & -\omega \\ -v_x & \omega & 0 \end{bmatrix},\end{eqnarray*} where $J$ is the moment of inertia and $m$ is the mass.

For fully actuated systems, we can design $F$ compensating the drift term so as to express the equation in an integrator form. In other words, let $\hat{u}=[\mathbb{I}]^{-1} ([\mbox{ad}_{\hat{\xi}}]^T [\mathbb{I}] \hat{\xi} + F)$, then the dynamics of the system can be written as \begin{equation} \label{sys} \left\{ \begin{aligned} & \dot{g}=g\hat{\xi} \\  & \dot{\hat{\xi}}=\hat{u} \end{aligned}, \right. \end{equation} where $g$ and $\hat{\xi}$ are the states of the system, and $\hat{u}$ is the control input at the force and torque level. The vector form of the control input can be denoted by $\bm{u}=[u_{\theta}\ \ u_{x}\ \ u_{y}]^{\trans}$.  Hence, once having designed control input $\hat{u}$, we are able to obtain the real control forces and torques by the relationship $F=[\mathbb{I}]\hat{u}-[\mbox{ad}_{\hat{\xi}}]^T [\mathbb{I}] \hat{\xi}$.

When the nonholonomic system is taken into account, due to no sideways slip, it is actually a kind of underactuated system. However, as a matter of fact, only the input channel of $\dot{v}_y$ is constrained, while $\dot{\omega}$ and $\dot{v}_x$ are still no constraint channels. Thus, in the direction of $\dot{\omega}$ and $\dot{v}_x$, the above compensation can still be used to reach an integrator form.

In fact, in the direction of $\dot{v}_y$, the dynamic equation can also be written in a similar form. Owing to the nonholonomic constraint, on one hand, the side velocity $v_y=0$, resulting in that the time derivative $\dot{v}_y=0$. On the other hand, no control force can be exerted along the lateral direction, i.e. $u_y=0$. If any, it would be balanced by the constraint force. Thus, $\dot{v}_y$ and $u_y$ are both 0, leading to the equality $\dot{v}_y=u_y$, which is in an integrator form as well. Therefore, for the nonholonomic system, the dynamic model can also be written as (\ref{sys}), where the control input is $\bm{u}=[u_{\theta}\ \ u_{x}\ \ 0]^{\trans}$.

\subsection{Exponential Coordinates}

In the following part, we introduce the exponential coordinates and their time derivatives, which are acquired from \cite{Bullo1995}. Exponential map on matrices is defined as \begin{equation*}\mbox{exp}A=\sum^{+\infty}_{k=0}\frac{A^k}{k!},\end{equation*} which is $\mathbb{R}^{n\times n}\to\mathbb{R}^{n\times n}$. In the Lie group SE(2), it has a more explicit form. Given $\hat\theta\in\mathfrak{so}(2)$, $\bm{q}\in\mathbb{R}^2$ and $\hat{X}=(\hat\theta, \bm{q})\in\mathfrak{se}(2)$, the exponential map $\mbox{exp}_{\begin{scriptsize}\mbox{SE(2)}\end{scriptsize}}:\mathfrak{se}(2)\to\mbox{SE(2)}$ becomes \begin{equation*} g=\mbox{exp}_{\begin{scriptsize}\mbox{SE(2)}\end{scriptsize}}(\hat{X})= \begin{bmatrix} \mbox{exp}_{\begin{scriptsize}\mbox{SO(2)}\end{scriptsize}} (\hat{\theta}) & A(\theta)\bm{q} \\ \bm{0} & 1  \end{bmatrix}, \end{equation*} in which there holds $\mbox{exp}_{\begin{scriptsize}\mbox{SO(2)}\end{scriptsize}} (\hat{\theta})=\begin{bmatrix} \cos\theta & -\sin\theta \\ \sin\theta & \cos\theta \end{bmatrix}$ and $A(\theta)=\frac{\displaystyle 1}{\displaystyle \theta}\begin{bmatrix} \sin\theta & -(1-\cos\theta) \\ (1-\cos\theta) & \sin\theta \end{bmatrix}$.

The inverse of the exponential map is named logarithm map $\mbox{log}_{\begin{scriptsize}\mbox{SE(2)}\end{scriptsize}}:\mbox{SE(2)}\to\mathfrak{se}(2)$. Provided $R\in\mbox{SO(2)}$, $\bm{p}\in\mathbb{R}^2$, $g=(R,\bm{p})\in\mbox{SE(2)}$ and $\mbox{trace}(g)\ne-1$, $\mbox{log} _{\begin{scriptsize}\mbox{SE(2)}\end{scriptsize}}$ is defined as \begin{equation} \label{ex_cord} \hat{X}=\mbox{log}_{\begin{scriptsize}\mbox{SE(2)}\end{scriptsize} }(g)=\begin{bmatrix} \hat\theta & A^{-1}(\theta)\bm{p} \\ \bm{0} & 0 \end{bmatrix}, \end{equation} where $\hat\theta=\mbox{log} _{\begin{scriptsize}\mbox{SO(2)}\end{scriptsize}}(R)$, $A^{-1}(\theta)=\begin{bmatrix} \alpha(\theta) & \theta/2 \\ -\theta/2 & \alpha(\theta) \end{bmatrix}$ and $\alpha(\theta)=(\theta/2)\cot(\theta/2)$. Denote that $\bm{q}=[q_x \ \ q_y]^T=A^{-1}(\theta)\bm{p}$, then $\hat{X}=\mbox{log}_{\begin{scriptsize}\mbox{SE(2)}\end{scriptsize}} (g)=(\hat\theta,\bm{q})$ are referred to as the exponential coordinates of group element $g$, whose vector form is \begin{equation*} \bm{X}=\begin{bmatrix} \theta & q_{x} &q_{y} \end{bmatrix}^{\trans}. \end{equation*} Note that for the case of $\mbox{trace}(g)=-1$, $\hat{\theta}=\mbox{log} _{\begin{scriptsize}\mbox{SO(2)}\end{scriptsize}}(R)$ is a two valued function corresponding to $-\pi$ or $\pi$. Thus, we can specify its value as needed, which leads to more choices when designing the control law.

In the following, we introduce a stabilization control law using exponential coordinates, which will be employed in the controller design later.

\begin{lemma}[\cite{He}, Theorem 1]
\label{le_sta_nonho}
For a nonholonomic mobile robot evolving in SE(2) with dynamics (\ref{sys}), the system state $g$ can be stabilized to the identity element from arbitrary initial configuration under the following control law \begin{equation*} u=-k_p\begin{bmatrix} \theta+k\beta \\ q_x \\ 0 \end{bmatrix} -k_d\begin{bmatrix} \omega \\ v_x \\ 0 \end{bmatrix}, \end{equation*} where $\beta=-\arctan(q_y/q_x)$, $\bm{q}=[q_x \ \ q_y ]^{\begin{scriptsize}\mbox{T}\end{scriptsize}} =A^{-1}(\theta)\bm{p}$, and $k_p$, $k_d$, $k$ are all scalar control gains.
\end{lemma}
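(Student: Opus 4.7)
The plan is to work in the exponential coordinates $\bm{X}=(\theta,q_x,q_y)$ and apply a Lyapunov/LaSalle argument to the resulting closed-loop system, exploiting the partial decoupling produced by the nonholonomic constraint $v_y=0$.

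First I would write down the closed-loop dynamics in these coordinates. Using the differential of $\exp_{\mathrm{SE}(2)}$ together with $\dot{g}=g\hat{\xi}$ and $v_y=0$ gives $\dot{\theta}=\omega$ and expressions for $\dot{q}_x,\dot{q}_y$ that are linear in $(\omega,v_x)$ with coefficients depending on $\theta,q_x,q_y$ (coming from $A^{-1}(\theta)$ and its derivative). Combined with $\dot{\omega}=u_\theta$, $\dot{v}_x=u_x$ and the proposed feedback, this yields a system in which the translational channel $(q_x,v_x)$ is driven by the PD action $-k_p q_x-k_d v_x$ while the rotational channel is a PD action on the composite angle $\theta+k\beta$.

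Second, I would try the Lyapunov candidate
\begin{equation*}
V=\tfrac{1}{2}\omega^2+\tfrac{k_p}{2}(\theta+k\beta)^2+\tfrac{1}{2}v_x^2+\tfrac{k_p}{2}q_x^2,
\end{equation*}
compute $\dot V$ by substituting $\dot\beta=(q_x\dot q_y-q_y\dot q_x)/(q_x^2+q_y^2)$ and the $q$-dynamics from the previous step, and tune the gains so that the cross-terms cancel and $\dot V\le -k_d(\omega^2+v_x^2)$, which is only negative semidefinite. Positive definiteness of $V$ in the relevant variables is immediate from its quadratic form.

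Third, I would invoke LaSalle's invariance principle on $\{\dot V=0\}=\{\omega=0,\,v_x=0\}$. On the maximal invariant set inside this locus, $\dot\omega\equiv 0$ and $\dot v_x\equiv 0$ force $u_\theta\equiv 0$ and $u_x\equiv 0$, hence $\theta+k\beta\equiv 0$ and $q_x\equiv 0$; together with $v_y=0$ the configuration $g$ is frozen so $q_y$ is constant. The remaining task is to identify the identity as the unique limit. The main obstacle I anticipate is twofold: the singularity of $\beta=-\arctan(q_y/q_x)$ on the line $q_x=0$ makes the closed-loop vector field non-smooth there, so the $\dot V$ calculation has to be justified by a limiting argument or a regularisation; and the algebraic equations characterising the invariant set admit spurious stationary configurations with $q_y\ne 0$ (where $\beta=\mp\pi/2$ forces $\theta=\pm k\pi/2$). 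Ruling these out — by a careful analysis of the $(\theta,q_y)$ subdynamics in a neighbourhood of the singular line, and by exploiting the branch-choice freedom of the logarithm at $\mathrm{trace}(g)=-1$ highlighted in the preliminaries to argue that trajectories cannot accumulate there — is what I expect to be the technical heart of the proof.
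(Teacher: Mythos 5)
You should first note that the paper itself contains no proof of this lemma: it is imported wholesale from the companion paper \cite{He} (Theorem 1 there), so there is no internal argument to measure your plan against, and the burden of proof you have taken on is exactly the content of that external reference. Judged on its own terms, your proposal is a reasonable outline but it stops short of a proof precisely at the decisive points. The claimed bound $\dot V\le -k_d(\omega^2+v_x^2)$ is asserted, not derived: with the stated feedback, $\dot V$ contains the residual terms $k_pk(\theta+k\beta)\dot\beta$ and $k_p q_x(\dot q_x-v_x)$, and in SE(2) exponential coordinates $\dot q_x\ne v_x$ in general (the kinematics of $\bm q$ pick up $\omega$-dependent contributions through $A^{-1}(\theta)$ and its derivative, even with $v_y=0$). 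Whether these cross terms cancel, or can be dominated by a choice of $k_p$, $k_d$, $k$, is the whole difficulty; ``tune the gains so that the cross-terms cancel'' replaces the core computation with a hope, and if it fails the candidate $V$ must be modified, not merely re-tuned.

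Beyond that, three further gaps remain even if $\dot V\le 0$ were granted. First, $V$ contains no $q_y$ term, so it does not establish boundedness of trajectories; LaSalle (and any claim of \emph{global} asymptotic stability) requires precompactness of the positive orbit, which you have not argued. Second, the closed loop is not smooth: $\beta$ is discontinuous across $q_x=0$ and the logarithm branches at $\mathrm{trace}(g)=-1$, so the standard invariance principle does not apply directly and needs a nonsmooth or limiting version, which you only gesture at. Third, you yourself identify spurious stationary configurations in $\{\omega=0,\ v_x=0\}$ with $q_y\ne0$ and defer ruling them out to ``a careful analysis'' of the $(\theta,q_y)$ subdynamics; as written, your argument would equally ``prove'' convergence for feedback laws that demonstrably possess such extra equilibria. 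Until the $\dot V$ computation is carried out explicitly and the invariant set is shown to reduce to the identity, the proposal is a plan rather than a proof, and it neither reproduces nor replaces the cited result in \cite{He}.
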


\subsection{Problem Formulation}

Consider $N+1$ nonholonomic mobile robots evolving in SE(2) labeled as $0,1,\cdots,N$, where robot $0$ is the leader and others are the followers. The dynamics of robot $i$ is described by the following equations \begin{equation} \label{eq_sysi} \Sigma_{i}:\left\{\begin{aligned} & \dot{g}_i=g_i\hat{\xi}_i \\ & \dot{\hat{\xi}}_i = \hat{u}_i \end{aligned},\right. i=0,1,\cdots,N, \end{equation} where $g_i\in\mbox{SE(2)}$ is the configuration of the $i$th robot, $\hat{\xi}_i\in\mathfrak{se}(2)$ is the velocity and $\hat{u}_i$ represents its control input. It is assumed that the control input of the leader has been predefined and independent of other followers. In the following, we define the relative configuration $g_{0i}$ in SE(2).

Referring to the definition of error in Lie groups, the relative configuration of robot $i$ with respect to the leader robot $0$ can be defined as \begin{equation} g_{0i}=g^{-1}_0g_i,\quad i=0,1,\cdots,N. \end{equation} Taking the derivative of $g_{0i}$ with respect to time $t$, we have \begin{align}\dot{g}_{0i} &= \frac{\mbox{d}}{\mbox{d}t}(g_{0}^{-1})g_i + g_{0}^{-1}\dot{g}_{i} \nonumber \\ &= -g_{0}^{-1}\dot{g}_{0}g_{0}^{-1}g_{i} + g_{0}^{-1}\dot{g}_{i}, \label{eq_dot_g01_m}\end{align} Substitute (\ref{eq_sysi}) into (\ref{eq_dot_g01_m}), it is obtained \begin{align} \dot{g}_{0i} &= -g_{0}^{-1}g_{0}\hat{\xi}_{0}g_{0}^{-1}g_{i} + g_{0}^{-1}g_{i}\hat{\xi}_{i} \nonumber \\ &= g_{0i}(\hat{\xi}_i-\mbox{Ad}_{g^{-1}_{0i}}\hat{\xi}_0), \label{eq_dot_g01_m2} \end{align} where the adjoint map in (\ref{eq_Ad}) is employed in the derivation and the term $\mbox{Ad}_{g^{-1}_{0i}}\hat{\xi}_0$ represents the leader's velocity $\hat{\xi}_{0}$ expressed in the body frame of follower $i$. Similarly, for any two robots $i$ and $j$, the relative configuration of robot $j$ with respect to robot $i$ is defined as $g_{ij}=g_{i}^{-1}g_{j}$, whose time derivative is calculated as $\dot{g}_{ij} = g_{ij}(\hat{\xi}_j-\mbox{Ad}_{g^{-1}_{ij}}\hat{\xi}_i),\ i,\ j=0,1,\cdots,N$.

It is said that nonholonomic mobile robots realize the desired formation if their states satisfy the following equations \begin{subequations} \label{eq_for_0i} \begin{align} &\lim_{t\to\infty}g_{0i}(t)=\bar{g}_{0i}, \\ &\lim_{t\to\infty}(\hat{\xi}_{i}(t)-\mbox{Ad}_{g_{0i}^{-1}(t)}\hat{\xi}_{0}(t))=0, \end{align} \end{subequations} or equivalently, \begin{subequations} \label{eq_for_ij} \begin{align} \label{eq_for_ij_g} &\lim_{t\to\infty}g_{ij}(t)=\bar{g}_{ij}, \\ \label{eq_for_ij_xi} &\lim_{t\to\infty}(\hat{\xi}_{j}(t)-\mbox{Ad}_{g_{ij}^{-1}(t)}\hat{\xi}_{i}(t))=0, \end{align} \end{subequations} where $\bar{g}_{0i}\in\mbox{SE}(2)$ is a constant group element provided in advance relying on the formation task. That is to say, $\bar{g}_{0i}$ is able to be interpreted as the desired relative configuration of robot $i$ with respect to leader robot $0$, which uniquely decides the geometric pattern of the formation. Similarly, $\bar{g}_{ij}$ represents the desired configuration of robot $j$ with respect to robot $i$. Once $\bar{g}_{0i}$ has been specified depending on the formation task, $\bar{g}_{ij}$ will be determined naturally by the relationship $\bar{g}_{ij}=\bar{g}_{0i}^{-1}\bar{g}_{0j}$. The desired formation $\bar{g}_{0i}$ contains the desired relative position vector $\bar{\bm{p}}_{01}$ and desired relative attitude angle $\bar{\theta}_{0i}$. It should be emphasized that $\bar{\bm{p}}_{01}$ and $\bar{\theta}_{0i}$ are not independent for nonholonomic mobile robot formation. The relation between them will be interpreted in the following Section.

The velocity requirement (\ref{eq_for_ij_xi}) implies that the velocity of robot $i$, which is observed in the body-fixed frame of robot $j$, will be asymptotically equal to the velocity of robot $j$. This is a necessary requirement for the formation maintenance. If the desired formation requirement (\ref{eq_for_ij_g}) is maintained, the space velocity, or the velocity expressed in the spacial frame $\mathcal{F}_s$, of the mobile robots should be equivalent, i.e. $\hat{\xi}_{i}^{s}=\hat{\xi}_{j}^{s}$. Although $\hat{\xi}_{i}$ and $\hat{\xi}_{j}$ are the body velocities of robot $i$ and robot $j$ in their own body frames, we can obtain the space velocities by the coordinate transformation $\mbox{Ad}_g$, that is, $\hat{\xi}_{i}^{s}=\mbox{Ad}_{g_{i}}\hat{\xi}_{i}$ and $\hat{\xi}_{j}^{s}=\mbox{Ad}_{g_{j}}\hat{\xi}_{j}$. Hence, from the relationship $\mbox{Ad}_{g_{i}}\hat{\xi}_{i}=\mbox{Ad}_{g_{j}}\hat{\xi}_{j}$, it is easy to acquire that $\hat{\xi}_{j}-\mbox{Ad}_{g_{ij}^{-1}}\hat{\xi}_{i}=0$, which is the velocity requirement (\ref{eq_for_ij_xi}).

Therefore, the problem to be solved in this paper is to design control input $\hat{u}_{i}$, specifically to design $u_{\theta}$ and $u_{x}$ for each nonholonomic mobile robot $i$ ($i=1,2,\cdots,N$), which is able to achieve the formation requirement (\ref{eq_for_0i}) or (\ref{eq_for_ij}). It should be noted that if the desired formation is chosen as $\bar{g}_{0i}=I$, it is able to be obtained $g_{ij}(t)\to I$ and $\hat{\xi}_{j}\to\hat{\xi}_{i}$ as $t\to\infty$, indicating that the formation problem degenerates to the consensus problem of all the robots.

\section{Control Strategy Design}

In this section, above all, we consider the tracking control problem under the case of single follower, then investigate the consensus tracking problem for multiple nonholonomic mobile robots, and finally present the control strategy for formation tracking problem.

\subsection{Single Follower Tracking}

The tracking control problem of one leader with single follower is studied at first. The configurations of leader and follower are denoted by $g_0$ and $g_1$ respectively, so that the relative configuration of the follower with respect to the leader is defined as $g_{01}=g_{0}^{-1}g_1$, that is, \begin{align} g_{01}=\begin{bmatrix} R^{\trans}_0 & -R^{\trans}_0\bm{p}_0 \\ \bm{0} & 1 \end{bmatrix}\begin{bmatrix} R_1 & \bm{p}_1 \\ \bm{0} & 1 \end{bmatrix} = \begin{bmatrix} R_{0}^{\trans}R_{1} & R_{0}^{\trans}(\bm{p}_{1}-\bm{p}_{0}) \\ \bm{0} & 1 \end{bmatrix}  \triangleq  \begin{bmatrix} R(\theta_{01}) & \bm{r}_{01} \\ \bm{0} & 1 \end{bmatrix}, \label{eq_g01} \end{align} where the rotation matrix is written as \begin{equation*} R(\theta_{01})=\begin{bmatrix} \cos\theta_{01} & -\sin\theta_{01} \\ \sin\theta_{01} & \cos\theta_{01} \end{bmatrix}, \end{equation*} and the position vector is $\bm{r}_{01}=[r_{x01}\ \ r_{y01}]^{\trans}$. The definitions of $\theta_{01}$, $r_{x01}$ and $r_{y01}$ are that $\theta_{01}=\theta_1-\theta_0$, $r_{x01}=(x_1-x_0)\cos{\theta_0}+(y_1-y_0)\sin{\theta_0}$, and $r_{y01}=-(x_1-x_0)\sin{\theta_0}+(y_1-y_0)\cos{\theta_0}$. Define the relative velocity as \begin{equation} \label{eq_xi_01_0} \hat{\xi}_{01}=\hat{\xi}_1-\mbox{Ad}_{g^{-1}_{01}}\hat{\xi}_0, \end{equation} then according to (\ref{eq_dot_g01_m2}), the kinematics of the relative system can be expressed as \begin{equation} \dot{g}_{01}=g_{01}\hat{\xi}_{01}. \end{equation} Before deriving the dynamics of the relative velocity $\hat{\xi}_{01}$, we introduce the following Lemma.

\begin{lemma}[\cite{Dong}, Lemma 6]
\label{le_dong}
For the relative configuration $g_{01}$ and relative velocity $\hat{\xi}_{01}$, there holds the following properties. \begin{align*}  & \frac{\mbox{d}}{\mbox{d}t}(\mbox{Ad}_{g_{10}})\hat{\xi}_{0}= \mbox{Ad}_{g_{10}}[\hat{\xi}_{10},\hat{\xi}_{0}], \\ & \mbox{Ad}_{g_{10}}[\hat{\xi}_{0},\hat{\xi}_{10}]= -[\hat{\xi}_{1},\hat{\xi}_{01}]. \end{align*}
\end{lemma}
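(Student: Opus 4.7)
The plan is to verify each identity by direct computation, using only the matrix definition of $\mbox{Ad}_g$ in (\ref{eq_Ad}), the kinematic relation $\dot{g}_{10}=g_{10}\hat{\xi}_{10}$ obtained from the analogue of (\ref{eq_dot_g01_m2}) for the pair $(1,0)$, and the definition (\ref{eq_xi_01_0}) of the relative velocity.

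For the first identity I would read the left-hand side as $\left(\frac{\mbox{d}}{\mbox{d}t}\mbox{Ad}_{g_{10}}\right)\hat{\xi}_0$, i.e.\ the operator $\mbox{Ad}_{g_{10}}$ is differentiated while $\hat{\xi}_0$ is frozen (this is consistent with the fact that no $\dot{\hat{\xi}}_0$ term appears on the right). Starting from $\mbox{Ad}_{g_{10}}\hat{\xi}_0=g_{10}\hat{\xi}_0 g_{10}^{-1}$, I would apply the product rule, substitute $\dot{g}_{10}=g_{10}\hat{\xi}_{10}$ and $\frac{\mbox{d}}{\mbox{d}t}(g_{10}^{-1})=-g_{10}^{-1}\dot{g}_{10}g_{10}^{-1}=-\hat{\xi}_{10}g_{10}^{-1}$, then factor $g_{10}$ on the left and $g_{10}^{-1}$ on the right. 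What remains inside is the matrix commutator $\hat{\xi}_{10}\hat{\xi}_0-\hat{\xi}_0\hat{\xi}_{10}=[\hat{\xi}_{10},\hat{\xi}_0]$, so the whole expression is $\mbox{Ad}_{g_{10}}[\hat{\xi}_{10},\hat{\xi}_0]$, as required.

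For the second identity I would exploit the fact that $\mbox{Ad}_g$ is a Lie algebra homomorphism, so $\mbox{Ad}_{g_{10}}[\hat{\xi}_0,\hat{\xi}_{10}]=[\mbox{Ad}_{g_{10}}\hat{\xi}_0,\,\mbox{Ad}_{g_{10}}\hat{\xi}_{10}]$, which reduces the task to computing the two images. The definition (\ref{eq_xi_01_0}) rearranged gives $\mbox{Ad}_{g_{10}}\hat{\xi}_0=\hat{\xi}_1-\hat{\xi}_{01}$; the $(1,0)$-analogue reads $\hat{\xi}_{10}=\hat{\xi}_0-\mbox{Ad}_{g_{01}}\hat{\xi}_1$, so applying $\mbox{Ad}_{g_{10}}$ and using $\mbox{Ad}_{g_{10}}\mbox{Ad}_{g_{01}}=\mbox{Ad}_{g_{10}g_{01}}=\mbox{Ad}_{I}=\mbox{id}$ yields $\mbox{Ad}_{g_{10}}\hat{\xi}_{10}=(\hat{\xi}_1-\hat{\xi}_{01})-\hat{\xi}_1=-\hat{\xi}_{01}$. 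Substituting and expanding the bracket bilinearly, the $[\hat{\xi}_{01},\hat{\xi}_{01}]$ term vanishes by antisymmetry and what is left is exactly $-[\hat{\xi}_1,\hat{\xi}_{01}]$.

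Neither step poses a real obstacle: one is a single application of the product rule and one is a single application of the homomorphism property together with two bookkeeping substitutions. The only place where I expect to have to be careful is tracking whether $g_{10}$ or its inverse $g_{01}$ sits inside each adjoint, since it is precisely the cancellation $\mbox{Ad}_{g_{10}}\mbox{Ad}_{g_{01}}=\mbox{id}$ that drives the minus sign in the second identity.
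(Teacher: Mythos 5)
Your computations are correct: both identities check out, and your reading of the left-hand side of the first one (differentiating the operator $\mbox{Ad}_{g_{10}}$ with $\hat{\xi}_0$ frozen) is exactly the sense in which the paper later uses it, since it applies the lemma in the form $\frac{\mbox{d}}{\mbox{d}t}(\mbox{Ad}_{g_{01}^{-1}})\hat{\xi}_0=[\hat{\xi}_1,\hat{\xi}_{01}]$, which is your first identity composed with your second. Note, however, that the paper itself gives no proof of this statement at all: it is imported verbatim as Lemma 6 of the cited reference [Dong], so there is no in-paper argument to compare against. What you have done is supply the self-contained verification that the citation stands in for, using only the matrix form $\mbox{Ad}_g\hat{\xi}=g\hat{\xi}g^{-1}$, the relative kinematics $\dot{g}_{10}=g_{10}\hat{\xi}_{10}$, the homomorphism property $\mbox{Ad}_g[\hat X,\hat Y]=[\mbox{Ad}_g\hat X,\mbox{Ad}_g\hat Y]$, and the cancellation $\mbox{Ad}_{g_{10}}\mbox{Ad}_{g_{01}}=\mbox{Ad}_I=\mbox{id}$; the key substitutions $\mbox{Ad}_{g_{10}}\hat{\xi}_0=\hat{\xi}_1-\hat{\xi}_{01}$ and $\mbox{Ad}_{g_{10}}\hat{\xi}_{10}=-\hat{\xi}_{01}$ are both right. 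This is the standard argument for such adjoint identities on matrix Lie groups, so your proposal is a valid (and arguably preferable, because self-contained) replacement for the external citation; the only thing the citation buys the paper is brevity.
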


Now, we compute the time derivative of the relative velocity $\hat{\xi}_{01}=\hat{\xi}_1-\mbox{Ad}_{g^{-1}_{01}}\hat{\xi}_0$, that is \begin{equation} \label{eq_dot_rexi_m} \dot{\hat{\xi}}_{01} = \dot{\hat{\xi}}_{1} - \mbox{Ad}_{g_{01}^{-1}}\dot{\hat{\xi}}_{0} - \frac{\mbox{d}}{\mbox{d}t}(\mbox{Ad}_{g_{01}^{-1}})\hat{\xi}_{0}. \end{equation} According to Lemma \ref{le_dong}, there holds \begin{equation} \label{eq_dot_ad} \frac{\mbox{d}}{\mbox{d}t}(\mbox{Ad}_{g_{01}^{-1}})\hat{\xi}_{0} = [\hat{\xi}_{1},\hat{\xi}_{01}]. \end{equation} Substitute (\ref{eq_sysi}) and (\ref{eq_dot_ad}) into (\ref{eq_dot_rexi_m}), then the time derivative of the relative velocity $\hat{\xi}_{01}$ is obtained as \begin{equation} \label{eq_dot_rexi} \dot{\hat{\xi}}_{01} = \hat{u}_{1} - \mbox{Ad}_{g_{01}^{-1}}\hat{u}_{0} - [\hat{\xi}_{1},\hat{\xi}_{01}]. \end{equation} Thus, the dynamics of the relative system is \begin{equation} \label{eq_sys01} \Sigma_{01}: \left\{ \begin{aligned}  & \dot{g}_{01}=g_{01}\hat{\xi}_{01} \\ & \dot{\hat{\xi}}_{01} = \hat{u}_{01} \end{aligned}, \right. \end{equation} where the control input is $\hat{u}_{01} = \hat{u}_{1} - \mbox{Ad}_{g_{01}^{-1}}\hat{u}_{0} - [\hat{\xi}_{1},\hat{\xi}_{01}]$.

Having derived the dynamics of the relative system $\Sigma_{01}$, we are able to convert the problem of tracking into that of control for $\Sigma_{01}$, which is interpreted in the following Lemma.

\begin{lemma}
\label{le_sys_01}
The follower achieves the trajectory tracking with respect to the leader, if the relative system $\Sigma_{01}$ can be stabilized to the identity element.
\end{lemma}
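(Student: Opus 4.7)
The plan is to argue that the lemma is essentially a direct unpacking of the definitions already established: stabilizing $\Sigma_{01}$ to the identity is, by design, exactly the pair of conditions that characterize tracking in the single-follower case of (\ref{eq_for_0i}).

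First, I would spell out what it means for $\Sigma_{01}$ to be stabilized to the identity element. Since the state of $\Sigma_{01}$ as defined in (\ref{eq_sys01}) consists of the pair $(g_{01},\hat{\xi}_{01})$, stabilization to the identity means $g_{01}(t) \to I$ and $\hat{\xi}_{01}(t) \to 0$ as $t \to \infty$.

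Next, I would recall the tracking requirements. The single-follower tracking problem is the specialization of the formation requirements (\ref{eq_for_0i}) to $i=1$ with $\bar{g}_{01} = I$, namely
\begin{equation*}
\lim_{t\to\infty} g_{01}(t) = I, \qquad \lim_{t\to\infty}\bigl(\hat{\xi}_1(t) - \mathrm{Ad}_{g_{01}^{-1}(t)}\hat{\xi}_0(t)\bigr) = 0.
\end{equation*}
The first of these is immediate from the stabilization hypothesis. For the second, I would appeal to the definition (\ref{eq_xi_01_0}) of the relative velocity, which rewrites the bracket inside the limit as $\hat{\xi}_{01}(t)$; the stabilization hypothesis $\hat{\xi}_{01}(t) \to 0$ then gives exactly the required velocity condition.

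There is no real technical obstacle here — the lemma is conceptual, asserting that the construction of the relative system $\Sigma_{01}$ in (\ref{eq_sys01}), together with the relative velocity definition (\ref{eq_xi_01_0}), is correctly aligned with the notion of tracking in (\ref{eq_for_0i}). The only point worth being careful about is noting that $g_{01}(t) \to I$ is equivalent to $g_1(t)$ approaching $g_0(t)$ in the sense of the Lie group, via $g_1 = g_0 g_{01}$, so that the relative-state formulation genuinely captures the tracking of the leader's trajectory rather than some weaker notion.
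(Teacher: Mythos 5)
Your proposal is correct and follows essentially the same route as the paper: the paper's proof likewise reads off $g_{01}\to I$ and $\hat{\xi}_{01}\to 0$ from stabilization of $\Sigma_{01}$, then invokes the definition (\ref{eq_xi_01_0}) to recognize these as the formation requirement (\ref{eq_for_0i}) with $\bar{g}_{01}=I$. Your closing remark relating $g_{01}\to I$ to $g_1$ approaching $g_0$ is a harmless extra clarification not in the paper.
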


\begin{proof}
If the relative system $\Sigma_{01}$ is stabilized with the control input $\hat{u}_{01}$, there holds $g_{01}\to I$ and $\hat{\xi}_{01}\to 0$ as $t\to\infty$. According to the definition of $\hat{\xi}_{01}$, it becomes \begin{align*} &\lim_{t\to\infty}g_{01}=I, \\&\lim_{t\to\infty}(\hat{\xi}_{1}-\mbox{Ad}_{g_{01}^{-1}}\hat{\xi}_{0})=0,
\end{align*} which is the formation requirement (\ref{eq_for_0i}) with desired configuration $\bar{g}_{01}=I$. That is to say, the follower achieves consensus with the leader.
\end{proof}

Thus, if we are able to design a control law $\hat{u}_{01}$ which can stabilize the relative system $\Sigma_{01}$, then the tracking controller for the follower can be easily obtained by the equation \begin{equation} \hat{u}_{1} = \hat{u}_{01} + \mbox{Ad}_{g_{01}^{-1}}\hat{u}_{0} + [\hat{\xi}_{1},\hat{\xi}_{01}]. \end{equation} Therefore, the tracking problem is converted to designing the stabilization control law $\hat{u}_{01}$ for the relative system $\Sigma_{01}$. Of course, here the control input $\hat{u}_1$ must be designed to be underactuated due to the nonholonomic constraint.

In Lemma \ref{le_sta_nonho}, we have introduced a stabilization control law for the nonholonomic system. If the relative system $\Sigma_{01}$ satisfies the nonholonomic constraint, then $\hat{u}_{01}$ can be directly designed based on Lemma \ref{le_sta_nonho}. Let $\bm{\xi}_{01}=[\omega_{01} \ v_{x01} \ v_{y01}]^{\trans}$ denote the vector form of the relative velocity $\hat{\xi}_{01}$. In the following, we verify whether the side velocity satisfies $v_{y01}=0$. Based on the definition of $\hat{\xi}_{01}$, the term $\mbox{Ad}_{g^{-1}_{01}}\hat{\xi}_0$ should be first calculated, so that we have
\begin{align*} \mbox{Ad}_{g^{-1}_{01}}\hat{\xi}_0 = g_{01}^{-1}\hat{\xi}_{0}g_{01} =  \begin{bmatrix} \hat{\omega}_{0} & R_{01}^{\trans}\left(\hat{\omega}_{0}\bm{r}_{01}+\begin{bmatrix} v_{x0} \\ 0 \end{bmatrix}\right) \\ \bm{0} & 0 \end{bmatrix}, \end{align*} where the vector representing translational motion is \begin{align*} R_{01}^{\trans}\left(\hat{\omega}_{0}\bm{r}_{01}+\begin{bmatrix} v_{x0} \\ 0 \end{bmatrix}\right) &= \begin{bmatrix} \cos\theta_{01} & \sin\theta_{01} \\ -\sin\theta_{01} & \cos\theta_{01} \end{bmatrix} \left(\begin{bmatrix} 0 & -\omega_{0} \\ \omega_{0} & 0 \end{bmatrix}\begin{bmatrix} r_{01x} \\ r_{01y} \end{bmatrix}+\begin{bmatrix} v_{x0} \\ 0 \end{bmatrix}\right) \\ &= \begin{bmatrix} (v_{x0}-\omega_{0}r_{y01})\cos\theta_{01}+\omega_{0}r_{x01}\sin\theta_{01} \\ \omega_{0}r_{x01}\cos\theta_{01}-(v_{x0}-\omega_{0}r_{y01})\sin\theta_{01} \end{bmatrix}. \end{align*} Therefore, the relative velocity $\bm{\xi}_{01}$ is that \begin{align} \label{eq_xi_01} \bm{\xi}_{01} &=\bm{\xi}_1-(\mbox{Ad}_{g_{01}^{-1}}\hat{\xi}_{0})^{\vee} \nonumber \\ &= \begin{bmatrix} \omega_{1}-\omega_{0} \\ v_{x1}-(v_{x0}-\omega_{0}r_{y01})\cos\theta_{01}-\omega_{0}r_{x01}\sin\theta_{01} \\ (v_{x0}-\omega_{0}r_{y01})\sin\theta_{01}-\omega_{0}r_{x01}\cos\theta_{01} \end{bmatrix} \nonumber \\ & = \begin{bmatrix} \omega_{01} & v_{x01} & v_{y01} \end{bmatrix}^{\trans}, \end{align} which indicates the side velocity $$v_{y01}=(v_{x0}-\omega_{0}r_{y01})\sin\theta_{01}-\omega_{0}r_{x01}\cos\theta_{01}\ne0.$$ In other words, the relative system $\Sigma_{01}$ do not satisfy the nonhonomic constraint, with the result that Lemma \ref{le_sta_nonho} cannot be directly used for the stabilization of the relative system $\Sigma_{01}$.

\begin{remark}
As a matter of fact, it is not necessary to make the relative system $\Sigma_{01}$ satisfy the nonholonomic constraint, since the tracking task can be achieved as long as $\Sigma_{01}$ is stabilized to the identity element. However, we still hope that the stabilization of the relative system $\Sigma_{01}$ can be solved by Lemma \ref{le_sta_nonho}, yet which is used for the nonholonomic system particularly. This is because under such an operation the condition of persistency excitation would not be imposed on the leader, which contributes to the tracking control of arbitrary reference trajectories. To be specific, if Lemma \ref{le_sta_nonho} is applied to the stabilization of $\Sigma_{01}$, the tracking task can also be achieved in the case where the leader's velocity is equivalent to 0. The reason is that when the velocity of the leader becomes 0, the tracking problem would naturally degenerate to the stabilization, while the achievement of which is guaranteed by Lemma \ref{le_sta_nonho} completely. Thus, under such a situation, for any type of the reference trajectory, the follower is able to track all of them.
\end{remark}

Based on above analysis, in the following we need to make the relative system $\Sigma_{01}$ nonholonomic constrained so as to use the stabilization control in Lemma \ref{le_sta_nonho}.

According to the relative velocity $\bm{\xi}_{01}$ in (\ref{eq_xi_01}), let the side velocity $v_{y01}=0$, we obtain that \begin{equation} \label{eq_vy01_0} (v_{x0}-\omega_{0}r_{y01})\sin\theta_{01}-\omega_{0}r_{x01}\cos\theta_{01}=0. \end{equation} The configuration that satisfies (\ref{eq_vy01_0}) is named the adjoint orbit \cite{LiuND}. Now, define the following auxiliary attitude angle \begin{equation} \label{eq_til_theta} \tilde{\theta}_{01}=\arctan\frac{\omega_{0}r_{x01}}{v_{x0}-\omega_{0}r_{y01}}, \end{equation} which is called the adjoint attitude angle \cite{TayefiIJC}. Define the following auxiliary configuration, or adjoint configuration \begin{equation} \label{eq_til_g1} \tilde{g}_{1}=\begin{bmatrix} R(\tilde{\theta}_{1}) & \bm{p}_{1} \\ \bm{0} & 1 \end{bmatrix}, \end{equation} where $\tilde{\theta}_{1}=\theta_{0}+\tilde{\theta}_{01}$. From the definition in (\ref{eq_til_g1}), it can be observed that the auxiliary configuration $\tilde{g}_1$ has same position as follower $g_{1}$, while its orientation is decided by the leader's attitude $\theta_{0}$ and the adjoint attitude $\tilde{\theta}_{01}$.

In the following, we shall define two new relative subsystems, and establish the relationship between the relative system $\Sigma_{01}$ and two relative subsystems.

Define the relative configuration \begin{equation} \label{eq_til_g01} \tilde{g}_{01}=g_{0}^{-1}\tilde{g}_{1}, \end{equation} which represents the relative configuration of the auxiliary system $\tilde{g}_{1}$ with respect to the leader $g_{0}$. Let $\tilde{\bm{\xi}}_{1}=[ \tilde{\omega}_{1} \ \ \tilde{v}_{x1} \ \ \tilde{v}_{y1} ]^{\trans}$ denote the velocity of the auxiliary system, and $\tilde{\bm{u}}_{1}$ denotes the control input. With the similar derivation of the dynamics of $g_{01}$ in (\ref{eq_sys01}), we obtain the dynamics of $\tilde{g}_{01}$ as \begin{equation} \label{eq_til_sys01} \tilde{\Sigma}_{01}: \left\{ \begin{aligned} & \dot{\tilde{g}}_{01}=\tilde{g}_{01}\hat{\tilde{\xi}}_{01} \\ & \dot{\hat{\tilde{\xi}}}_{01}=\hat{\tilde{u}}_{01} \end{aligned}, \right. \end{equation}
where the relative velocity is \begin{equation} \label{eq_til_xi01} \hat{\tilde{\xi}}_{01} = \hat{\tilde{\xi}}_{1} - \mbox{Ad}_{\tilde{g}_{01}^{-1}} \hat{\xi}_{0}, \end{equation} and the relative control input is \begin{equation} \label{eq_til_u01} \hat{\tilde{u}}_{01} = \hat{\tilde{u}}_{1} - \mbox{Ad}_{\tilde{g}_{01}^{-1}}\hat{u}_{0} - [\hat{\tilde{\xi}}_{1},\hat{\tilde{\xi}}_{01}]. \end{equation}

\begin{lemma}
\label{le_til_g01_non}
If the side velocity of the auxiliary system is equivalent to zero, i.e. $\tilde{v}_{y1}=0$, then the relative system $\tilde{\Sigma}_{01}$ is nonholonomic constrained.
\end{lemma}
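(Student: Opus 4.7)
The plan is to mimic the derivation of the vector form of the relative velocity $\bm{\xi}_{01}$ in equation (\ref{eq_xi_01}) for the auxiliary relative system $\tilde{\Sigma}_{01}$, and then show that the defining equation (\ref{eq_til_theta}) for the adjoint attitude angle $\tilde{\theta}_{01}$ forces the side component to vanish once $\tilde{v}_{y1}=0$ is assumed.

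First I would unpack $\tilde{g}_{01}=g_0^{-1}\tilde{g}_1$. By definition (\ref{eq_til_g1}), $\tilde{g}_1$ carries the follower's position $\bm{p}_1$ but the rotated orientation $\tilde{\theta}_1=\theta_0+\tilde{\theta}_{01}$. Hence the translational block of $\tilde{g}_{01}$ is still $R_0^{\trans}(\bm{p}_1-\bm{p}_0)=\bm{r}_{01}$, exactly as in (\ref{eq_g01}), while its rotation block becomes $R(\tilde{\theta}_{01})$. So all of the bookkeeping in (\ref{eq_xi_01}) carries over verbatim with $\theta_{01}$ replaced by $\tilde{\theta}_{01}$ and with $v_{y1}$ replaced by $\tilde{v}_{y1}$.

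Second, applying the same calculation that produced (\ref{eq_xi_01}) to $\hat{\tilde{\xi}}_{01}=\hat{\tilde{\xi}}_1-\mbox{Ad}_{\tilde{g}_{01}^{-1}}\hat{\xi}_0$, I would read off the third (side) component as
\begin{equation*}
\tilde{v}_{y01}=\tilde{v}_{y1}+(v_{x0}-\omega_0 r_{y01})\sin\tilde{\theta}_{01}-\omega_0 r_{x01}\cos\tilde{\theta}_{01}.
\end{equation*}
The hypothesis $\tilde{v}_{y1}=0$ kills the first term. Then I would invoke the definition (\ref{eq_til_theta}), which by construction means $\tan\tilde{\theta}_{01}=\omega_0 r_{x01}/(v_{x0}-\omega_0 r_{y01})$, i.e.\ $(v_{x0}-\omega_0 r_{y01})\sin\tilde{\theta}_{01}=\omega_0 r_{x01}\cos\tilde{\theta}_{01}$. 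Substituting yields $\tilde{v}_{y01}=0$, which is precisely the nonholonomic constraint for the relative system $\tilde{\Sigma}_{01}$ as defined in Definition 1.

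I do not expect any real obstacle here: the only subtlety is noticing that $\tilde{g}_{01}$ shares the translational part $\bm{r}_{01}$ with $g_{01}$ (because $\tilde{g}_1$ and $g_1$ share the position $\bm{p}_1$), so that the algebra of (\ref{eq_xi_01}) transports without modification and the adjoint angle $\tilde{\theta}_{01}$ was engineered by (\ref{eq_vy01_0})--(\ref{eq_til_theta}) to be exactly the rotation that cancels the remaining two terms. A short remark could also be added that this explains why (\ref{eq_til_theta}) is the right definition: it is literally the equation $v_{y01}=0$ of (\ref{eq_vy01_0}) solved for the relative attitude angle.
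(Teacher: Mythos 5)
Your proposal is correct and follows essentially the same route as the paper's proof: compute $\tilde{g}_{01}=g_0^{-1}\tilde{g}_1$, observe that its translational part equals $\bm{r}_{01}$ because $\tilde{g}_1$ and $g_1$ share the position $\bm{p}_1$, read off the side component $\tilde{v}_{y01}=\tilde{v}_{y1}+(v_{x0}-\omega_0 r_{y01})\sin\tilde{\theta}_{01}-\omega_0 r_{x01}\cos\tilde{\theta}_{01}$, and use the defining relation (\ref{eq_til_theta}) for the adjoint attitude angle together with $\tilde{v}_{y1}=0$ to conclude $\tilde{v}_{y01}=0$. Your closing remark that (\ref{eq_til_theta}) is precisely (\ref{eq_vy01_0}) solved for the relative attitude angle matches the paper's motivation for introducing the adjoint configuration.
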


\begin{proof}
According to the definition in (\ref{eq_til_g01}), $\tilde{g}_{01}$ can be calculated as \begin{equation*} \tilde{g}_{01} = \begin{bmatrix} R_{0}^{\trans}R(\tilde{\theta}_{1}) & R_{0}^{\trans}(\bm{p}_{1}-\bm{p}_{0}) \\ \bm{0} & 1 \end{bmatrix} \triangleq \begin{bmatrix} R(\tilde{\theta}_{01}) & \tilde{\bm{r}}_{01} \\ \bm{0} & 1 \end{bmatrix}, \end{equation*} where the rotation matrix is \begin{equation*} R(\tilde{\theta}_{01})=\begin{bmatrix} \cos\tilde{\theta}_{01} & -\sin\tilde{\theta}_{01} \\ \sin\tilde{\theta}_{01} & \cos\tilde{\theta}_{01} \end{bmatrix}, \end{equation*} and the position vector is $\tilde{\bm{r}}_{01}=[\tilde{r}_{x01}\ \ \tilde{r}_{y01}]^{\trans}$. Compared with the definition of $g_{01}$ in (\ref{eq_g01}), it can be obtained that $\bm{r}_{01}=\tilde{\bm{r}}_{01}$, that is, \begin{equation} \label{eq_r_e_tr} [r_{x01}\ \ r_{y01}]^{\trans}=[\tilde{r}_{x01}\ \ \tilde{r}_{y01}]^{\trans}. \end{equation} Based on (\ref{eq_til_xi01}), the vector form of the relative velocity $\tilde{\bm{\xi}}_{01}$ is \begin{align} \tilde{\bm{\xi}}_{01} &= \begin{bmatrix} \tilde{\omega}_{1}-\omega_{0} \\ \tilde{v}_{x1}-(v_{x0}-\omega_{0}\tilde{r}_{y01})\cos\tilde{\theta}_{01}-\omega_{0}\tilde{r}_{x01}\sin\tilde{\theta}_{01} \\ \tilde{v}_{y1}+(v_{x0}-\omega_{0}\tilde{r}_{y01})\sin\tilde{\theta}_{01}-\omega_{0}\tilde{r}_{x01}\cos\tilde{\theta}_{01} \end{bmatrix} \nonumber \\ &\triangleq [\tilde{\omega}_{01}\ \ \tilde{v}_{x01}\ \ \tilde{v}_{y01}]^{\trans} \end{align} Substitute (\ref{eq_til_theta}) and (\ref{eq_r_e_tr}) into $\tilde{v}_{y01}$, and employ the condition $\tilde{v}_{y1}=0$, we have \begin{equation*} \tilde{v}_{y01}=0, \end{equation*} which indicates that the relative velocity $\tilde{\bm{\xi}}_{01}$ satisfies the nonholonomic constraint.
\end{proof}

Having provided the definition of $\tilde{g}_{01}$, we define another relative configuration \begin{equation} \label{eq_defi_ge} g_{e}=\tilde{g}_{1}^{-1}g_{1}, \end{equation} which represents the relative configuration of the follower $g_{1}$ with respect to the auxiliary system $\tilde{g}_{1}$. The dynamics of $g_{e}$ can be derived as \begin{equation} \label{eq_sys_e} \Sigma_{e}:\left\{ \begin{aligned} & \dot{g}_{e}={g}_{e}\hat{\xi}_{e}, \\ & \dot{\hat{\xi}}_{e}=\hat{u}_{e}, \end{aligned}, \right. \end{equation} where the relative velocity is \begin{equation} \label{eq_xie} \hat{\xi}_{e}=\hat{\xi}_{1}-\mbox{Ad}_{g_{e}^{-1}}\hat{\tilde{\xi}}_{1}, \end{equation} and the relative control input is \begin{equation} \label{eq_ue} \hat{u}_{e}=\hat{u}_{1}-\mbox{Ad}_{g_{e}^{-1}}\hat{\tilde{u}}_{1}-[\hat{\xi}_{1},\hat{\xi}_{e}]. \end{equation}

Up to now, we have defined two relative subsystems $\tilde{\Sigma}_{01}$ and $\Sigma_{e}$, the relationship between which and the relative system $\Sigma_{01}$ is revealed in the following Lemma.

\begin{lemma}[Decomposition Lemma]
\label{le_two_sys}
The relative system $\Sigma_{01}$ is stabilized to the identity element, if two relative subsystems $\tilde{\Sigma}_{01}$ and $\Sigma_{e}$ are both stabilized to the identity element.
\end{lemma}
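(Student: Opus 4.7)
The plan is to reduce the lemma to two algebraic identities, one at the group level and one at the velocity level, each of which is just a rearrangement of the definitions (\ref{eq_til_g01}), (\ref{eq_defi_ge}), (\ref{eq_xi_01_0}), (\ref{eq_til_xi01}), and (\ref{eq_xie}). Once both identities are in place, the conclusion follows from continuity of group multiplication and of the adjoint map, with no further analytic work required.

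For the configuration part, I would telescope through the auxiliary configuration:
\[
g_{01} = g_{0}^{-1}g_{1} = (g_{0}^{-1}\tilde{g}_{1})(\tilde{g}_{1}^{-1}g_{1}) = \tilde{g}_{01}\,g_{e}.
\]
Since matrix multiplication on SE(2) is continuous, the hypotheses $\tilde{g}_{01}\to I$ and $g_{e}\to I$ immediately yield $g_{01}\to I$.

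For the velocity part, I would start from (\ref{eq_xie}) to get $\hat{\xi}_{1}=\hat{\xi}_{e}+\mbox{Ad}_{g_{e}^{-1}}\hat{\tilde{\xi}}_{1}$, then substitute $\hat{\tilde{\xi}}_{1}=\hat{\tilde{\xi}}_{01}+\mbox{Ad}_{\tilde{g}_{01}^{-1}}\hat{\xi}_{0}$ taken from (\ref{eq_til_xi01}), and collapse the nested adjoints via the group homomorphism property
\[
\mbox{Ad}_{g_{e}^{-1}}\mbox{Ad}_{\tilde{g}_{01}^{-1}} = \mbox{Ad}_{g_{e}^{-1}\tilde{g}_{01}^{-1}} = \mbox{Ad}_{(\tilde{g}_{01}g_{e})^{-1}} = \mbox{Ad}_{g_{01}^{-1}},
\]
where the last equality uses the telescoping identity above. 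Subtracting $\mbox{Ad}_{g_{01}^{-1}}\hat{\xi}_{0}$ from both sides and invoking (\ref{eq_xi_01_0}) delivers the clean decomposition
\[
\hat{\xi}_{01} = \hat{\xi}_{e} + \mbox{Ad}_{g_{e}^{-1}}\hat{\tilde{\xi}}_{01}.
\]
Because $\mbox{Ad}$ is continuous in its group argument and $g_{e}\to I$ keeps $\mbox{Ad}_{g_{e}^{-1}}$ bounded near the identity, the hypotheses $\hat{\xi}_{e}\to 0$ and $\hat{\tilde{\xi}}_{01}\to 0$ force $\hat{\xi}_{01}\to 0$, completing the verification that $\Sigma_{01}$ is stabilized to the identity.

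The only real obstacle is clerical: ordering the two adjoint compositions correctly so that they collapse to $\mbox{Ad}_{g_{01}^{-1}}$ and cancel against the leader contribution $\mbox{Ad}_{g_{01}^{-1}}\hat{\xi}_{0}$ in the definition of $\hat{\xi}_{01}$. Once the two identities $g_{01}=\tilde{g}_{01}g_{e}$ and $\hat{\xi}_{01}=\hat{\xi}_{e}+\mbox{Ad}_{g_{e}^{-1}}\hat{\tilde{\xi}}_{01}$ are recorded, the proof is essentially a one-line continuity argument.
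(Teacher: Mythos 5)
Your proposal is correct and follows essentially the same route as the paper: the telescoping identity $g_{01}=\tilde{g}_{01}g_{e}$, the velocity decomposition $\hat{\xi}_{01}=\hat{\xi}_{e}+\mbox{Ad}_{g_{e}^{-1}}\hat{\tilde{\xi}}_{01}$ obtained via the adjoint homomorphism property, and then passage to the limit. The only cosmetic difference is that you phrase the final step as continuity of multiplication and of $\mbox{Ad}$, whereas the paper writes out the limit computation explicitly.
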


\begin{proof}
Based on the definition of $g_{01}$, there holds \begin{equation} \label{eq_tilg_ge} g_{01} = g_{0}^{-1}g_{1} = g_{0}^{-1}\tilde{g}_{1}\tilde{g}_{1}^{-1}g_{1}=\tilde{g}_{01}g_{e}. \end{equation} From (\ref{eq_til_xi01}), we obtain $\hat{\tilde{\xi}}_{1}$ as \begin{equation} \label{eq_til_xi_1} \hat{\tilde{\xi}}_{1} = \hat{\tilde{\xi}}_{01} + \mbox{Ad}_{\tilde{g}_{01}^{-1}} \hat{\xi}_{0}. \end{equation} Left multiple $\mbox{Ad}_{g_{e}^{-1}}$ to both sides of (\ref{eq_til_xi_1}) and substitute the result into (\ref{eq_xie}), there holds \begin{equation} \label{eq_le_bet_m} \hat{\xi}_{e} = \hat{\xi}_{1}-\mbox{Ad}_{g_{e}^{-1}}\hat{\tilde{\xi}}_{01} - \mbox{Ad}_{g_{e}^{-1}}\mbox{Ad}_{\tilde{g}_{01}^{-1}}\hat{\xi}_{0}. \end{equation} With the following property \begin{equation*}\mbox{Ad}_{g_{e}^{-1}}\mbox{Ad}_{\tilde{g}_{01}^{-1}}\hat{\xi}_{0} = \mbox{Ad}_{(\tilde{g}_{01}g_{e})^{-1}}\hat{\xi}_{0} = \mbox{Ad}_{g_{01}^{-1}}\hat{\xi}_{0},\end{equation*} equation (\ref{eq_le_bet_m}) can be rewritten as \begin{equation*} \hat{\xi}_{e}+\mbox{Ad}_{g_{e}^{-1}}\hat{\tilde{\xi}}_{01}=\hat{\xi}_{1}-\mbox{Ad}_{g_{01}^{-1}}\hat{\xi}_{0}, \end{equation*} that is \begin{equation} \label{eq_tilxi_xie} \hat{\xi}_{01}=\hat{\xi}_{e}+\mbox{Ad}_{g_{e}^{-1}}\hat{\tilde{\xi}}_{01}, \end{equation} where (\ref{eq_xi_01_0}) is employed. When the relative subsystems $\tilde{\Sigma}_{01}$ and $\Sigma_{e}$ are both stabilized to the identity element, there holds \begin{equation*} \begin{aligned} &\lim_{t\to\infty}\tilde{g}_{01}=I \\ &\lim_{t\to\infty}\hat{\tilde{\xi}}_{01}=0 \end{aligned} \quad \mbox{and} \quad \begin{aligned} &\lim_{t\to\infty}g_{e}=I \\ &\lim_{t\to\infty}\hat{\xi}_{e}=0. \end{aligned}\end{equation*} Thus, according to (\ref{eq_tilg_ge}) and (\ref{eq_tilxi_xie}), the limits of $g_{01}$ and $\hat{\xi}_{01}$ can be computed as \begin{align*} &\lim_{t\to\infty}g_{01}=\lim_{t\to\infty}\tilde{g}_{01}\lim_{t\to\infty}g_{e}=I, \\ &\lim_{t\to\infty}\hat{\xi}_{01}=\lim_{t\to\infty}\hat{\xi}_{e}+\mbox{Ad}_{g_{e}^{-1}}(\lim_{t\to\infty}\hat{\tilde{\xi}}_{01})=0, \end{align*} which indicates that the relative system $\Sigma_{01}$ is stabilized to the identity element.
\end{proof}

Therefore, the tracking problem is converted to the stabilization for two relative subsystems $\tilde{\Sigma}_{01}$ and $\Sigma_{e}$. Since the task is to design the controller for the follower, it is necessary to derive the expression of the control input $\hat{u}_{1}$. From (\ref{eq_til_u01}), the control input for the auxiliary system is obtained as \begin{equation} \hat{\tilde{u}}_{1} = \hat{\tilde{u}}_{01} + \mbox{Ad}_{\tilde{g}_{01}^{-1}}\hat{u}_{0} + [\hat{\tilde{\xi}}_{1},\hat{\tilde{\xi}}_{01}]. \end{equation} Substitute it into (\ref{eq_ue}), it can be followed that the control input of the follower has the following form \begin{equation} \label{eq_u1_v1} \hat{u}_{1}=\hat{u}_{e} + \mbox{Ad}_{g_{e}^{-1}}(\hat{\tilde{u}}_{01} + \mbox{Ad}_{\tilde{g}_{01}^{-1}}\hat{u}_{0} + [\hat{\tilde{\xi}}_{1},\hat{\tilde{\xi}}_{01}]) +[\hat{\xi}_{1},\hat{\xi}_{e}]. \end{equation} Thus, as long as we design the stabilization control laws $\hat{\tilde{u}}_{01}$ and $\hat{u}_{e}$ and substitute them into (\ref{eq_u1_v1}), then the follower's tracking control law $\hat{u}_{1}$ can be obtained naturally. In the following, we shall design the control inputs $\hat{\tilde{u}}_{01}$ and $\hat{u}_{e}$ for stabilization.

\vspace{1ex}
\emph{1) Design for $\hat{u}_{e}$}
\vspace{1ex}

Based on the definition in (\ref{eq_defi_ge}), the matrix expression of $g_{e}$ is \begin{equation*} g_{e}=\begin{bmatrix} \tilde{R}_{1}^{\trans} & -\tilde{R}_{1}^{\trans}\bm{p}_{1} \\ \bm{0} & 1 \end{bmatrix} \begin{bmatrix} R_{1} & \bm{p}_{1} \\ \bm{0} & 1 \end{bmatrix} = \begin{bmatrix} R(\theta_{1}-\tilde{\theta}_{1}) & 0 \\ \bm{0} & 1 \end{bmatrix}. \end{equation*} Design the velocity of the auxiliary system as \begin{equation*} \hat{\tilde{\xi}}_{1}=\mbox{Ad}_{g_{e}}\hat{\xi}_{1}, \end{equation*} and substitute it into (\ref{eq_xie}), so that the relative velocity becomes \begin{equation} \label{eq_xie_e0} \hat{\xi}_{e}=\hat{\xi}_{1}-\mbox{Ad}_{g_{e}^{-1}}\mbox{Ad}_{g_{e}}\hat{\xi}_{1}=0. \end{equation} Now, we introduce a Lemma to design the stabilization control input $\hat{u}_{e}$.

\begin{lemma}[\cite{Bullo1995}, Theorem 6]
\label{le_sta_ful}
For the fully-actuated system evolving in SE(2) with dynamics (\ref{sys}), the system state $g$ can be stabilized to the identity element from any initial condition with $\mbox{tr}(g)\ne-1$ under the following control law \begin{equation*} \hat{u} = -k_p\log_{\begin{scriptsize}\mbox{SE(2)}\end{scriptsize}}(g) - k_d\hat{\xi}, \end{equation*} where $k_p$ and $k_d$ are positive control gains.
\end{lemma}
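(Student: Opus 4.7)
The plan is a direct Lyapunov argument carried out in the exponential coordinates introduced in (\ref{ex_cord}), closed by LaSalle's invariance principle. First I would rewrite the closed-loop system in terms of $\hat{X}=\log_{\mbox{SE(2)}}(g)$ and the body velocity $\hat{\xi}$. Using the inverse of the trivialized differential of the exponential, $\dot{\hat{X}}$ can be expressed as a linear map $T(\hat{X})$ applied to $\hat{\xi}$, a map that equals the identity at $\hat{X}=0$ and is well-defined precisely on the chart where $\log_{\mbox{SE(2)}}$ is, i.e.\ where $\mbox{tr}(g)\ne-1$. The hypothesis on the initial condition therefore places the trajectory inside this chart at $t=0$.

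Next I would propose the candidate Lyapunov function
\begin{equation*}
V(\hat{X},\hat{\xi})=\frac{k_p}{2}\langle\hat{X},\hat{X}\rangle+\frac{1}{2}\langle\hat{\xi},\hat{\xi}\rangle
\end{equation*}
with $\langle\cdot,\cdot\rangle$ an inner product on $\mathfrak{se}(2)$ chosen to be compatible with the bracket structure (concretely, the pairing induced on the vector representatives after the feedback linearization that produced (\ref{sys})). Differentiating along the closed loop,
\begin{equation*}
\dot{V}=k_p\langle\hat{X},T(\hat{X})\hat{\xi}\rangle+\langle\hat{\xi},\hat{u}\rangle.
\end{equation*}
The central step is the identity $\langle\hat{X},T(\hat{X})\hat{\xi}\rangle=\langle\hat{X},\hat{\xi}\rangle$: expanding $T(\hat{X})$ as a Bernoulli/$\mbox{ad}$ series, only the leading identity term survives, since the higher-order corrections are iterated brackets involving $\mbox{ad}_{\hat{X}}\hat{X}=0$ paired against $\hat{X}$. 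Substituting the control law $\hat{u}=-k_p\hat{X}-k_d\hat{\xi}$ then cancels the $k_p$ cross term and gives
\begin{equation*}
\dot{V}=-k_d\|\hat{\xi}\|^2\le 0.
\end{equation*}

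Finally, sublevel sets of $V$ are compact inside the coordinate chart $\{\mbox{tr}(g)\ne-1\}$ because $V$ diverges as $\theta\to\pm\pi$; the monotone decrease of $V$ therefore keeps the trajectory forward-invariant in this chart, so the logarithm remains defined for all $t\ge 0$. Applying LaSalle's invariance principle, on the largest invariant subset of $\{\dot{V}=0\}=\{\hat{\xi}=0\}$ one has $\dot{\hat{\xi}}\equiv 0$ as well, so the control law forces $\hat{X}\equiv 0$, i.e.\ $g\equiv I$. Stability of $g=I$ combined with this attractivity gives the claimed convergence from every initial condition satisfying $\mbox{tr}(g(0))\ne-1$.

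The main obstacle is the cancellation in the cross term: it rests on a genuine compatibility between the chosen inner product on $\mathfrak{se}(2)$ and the $\mbox{ad}$-action, and is not automatic for an arbitrary metric since SE(2) admits no bi-invariant Riemannian structure. In a full write-up I would verify it by a direct computation in the variables $(\theta,\bm{q})$ using the explicit form of $A(\theta)$ and $A^{-1}(\theta)$ from (\ref{ex_cord}); this is the technical heart of \cite{Bullo1995}, Theorem 6.
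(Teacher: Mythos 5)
This lemma is not proved in the paper at all: it is imported verbatim from the cited reference (Bullo--Murray 1995, Theorem 6), so the only thing to assess is whether your sketch would actually establish it, and its central step does not go through. You justify the key cancellation $\langle\hat{X},T(\hat{X})\hat{\xi}\rangle=\langle\hat{X},\hat{\xi}\rangle$ by saying the higher-order terms of the Bernoulli/$\mbox{ad}$ series are "iterated brackets involving $\mbox{ad}_{\hat{X}}\hat{X}=0$", but the series acts on $\hat{\xi}$, not on $\hat{X}$: the corrections are $\mbox{ad}_{\hat{X}}^{k}\hat{\xi}$, and what you need is $\langle\hat{X},[\hat{X},\hat{Y}]\rangle=0$ for all $\hat{Y}$, i.e.\ an $\mbox{ad}$-invariant inner product, which $\mathfrak{se}(2)$ does not admit. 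A direct check with the standard inner product and the paper's $A^{-1}(\theta)$ refutes the identity: writing $\bm{X}=[\theta\ \ \bm{q}^{\trans}]^{\trans}$ and $S=\begin{bmatrix}0&-1\\1&0\end{bmatrix}$, one finds $A^{-1}(\theta)R(\theta)=\alpha(\theta)I+(\theta/2)S^{\trans}$, so for $\omega=0$,
\begin{equation*}
\frac{\mbox{d}}{\mbox{d}t}\tfrac12\bigl(\theta^{2}+\|\bm{q}\|^{2}\bigr)=\bm{q}^{\trans}A^{-1}(\theta)R(\theta)\bm{v}=\alpha(\theta)\,\bm{q}^{\trans}\bm{v}-\tfrac{\theta}{2}\,\bm{q}^{\trans}S\bm{v},
\end{equation*}
which differs from $\bm{q}^{\trans}\bm{v}$ (e.g.\ $\theta=\pi/2$ and $\bm{v}\parallel\bm{q}$ gives the factor $\alpha=\pi/4$ instead of $1$); for $\omega\neq0$ an additional $\omega\|\bm{q}\|^{2}$-term appears. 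Hence with your $V$ the cross term $k_p(\langle\hat{X},\dot{\hat{X}}\rangle-\langle\hat{X},\hat{\xi}\rangle)$ does not cancel, $\dot V=-k_d\|\hat\xi\|^{2}$ is false, and no choice of constant metric on $\mathfrak{se}(2)$ repairs it. This is precisely the difficulty created by the absence of a bi-invariant metric on SE(2), and it is the part of the cited theorem that requires genuine work (handling or bounding these extra terms), not a computation one can defer.

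There is a second gap in the invariance/compactness step. You claim the sublevel sets of $V$ are compact inside the chart because $V$ diverges as $\theta\to\pm\pi$; it does not. Since $A^{-1}(\theta)=\alpha(\theta)I-(\theta/2)S$ and $\alpha(\pm\pi)=0$, the exponential coordinates $\bm{q}=A^{-1}(\theta)\bm{p}$ stay bounded as $\mbox{tr}(g)\to-1$, so $V$ remains finite on the chart boundary and monotonicity of $V$ alone does not keep the trajectory away from $\theta=\pm\pi$, where $\log_{\mbox{SE(2)}}$ (and hence the feedback) is discontinuous. Forward invariance of $\{\mbox{tr}(g)\neq-1\}$ therefore needs its own argument before LaSalle can be applied. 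As written, the proposal's two load-bearing steps --- the cross-term cancellation and the chart invariance --- both fail, so the sketch does not prove the lemma.
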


According to Lemma \ref{le_sta_ful}, and employing the property in (\ref{eq_xie_e0}), the stabilization control input $\hat{u}_{e}$ can be designed as \begin{equation*} \hat{u}_{e}=-k_e\log_{\begin{scriptsize}\mbox{SE(2)}\end{scriptsize}}(g_{e})-\hat{\xi}_{e}= -k_e\log_{\begin{scriptsize}\mbox{SE(2)}\end{scriptsize}}(g_{e}), \end{equation*} where $k_e$ is a positive control gain. By the definition of the logarithm map $ \log_{\begin{scriptsize}\mbox{SE(2)}\end{scriptsize}}$, the vector form of $\hat{u}_{e}$ is that \begin{equation} \label{eq_ue_sta} \bm{u}_{e}=-k_e\begin{bmatrix} \theta_{1}-\tilde{\theta}_{1} \\ 0 \\ 0 \end{bmatrix}. \end{equation}

\emph{2) Design for $\hat{\tilde{u}}_{01}$}
\vspace{1ex}

Since control input $u_{e}$ in (\ref{eq_ue_sta}) can stabilize the relative system $\Sigma_{e}$, it is able to obtain that $\tilde{g}_{1}\to g_{1}$ and $\tilde{\xi}_{1}\to\xi_{1}$, resulting in that the side velocity of the auxiliary system satisfying $\tilde{v}_{y1}\to v_{y1}=0$. Then, according to Lemma \ref{le_til_g01_non}, the relative system $\tilde{\Sigma}_{01}$ becomes nonholonomic constrained. By dint of Lemma \ref{le_sta_nonho}, the stabilization control input $\hat{\tilde{u}}_{01}$ for the nonholonomic system $\tilde{\Sigma}_{01}$ can be designed in the vector form as \begin{equation*} \tilde{u}_{01}=-k_p\begin{bmatrix} \tilde{\theta}_{01}+k\tilde{\beta}_{01} \\ \tilde{q}_{x01} \\ 0 \end{bmatrix} -k_d\begin{bmatrix} \tilde{\omega}_{01} \\ \tilde{v}_{x01} \\ 0 \end{bmatrix}, \end{equation*} where $\tilde{\beta}_{01}=-\arctan(\tilde{q}_{y01}/ \tilde{q}_{x01})$, $\tilde{\bm{q}}=[\tilde{q}_{x01} \ \ \tilde{q}_{y01} ]^{\trans} =A^{-1}(\tilde{\theta}_{01})\tilde{\bm{r}}_{01}$, and $k_p$, $k_d$, $k$ are all scalar control gains.

Up to now, we have designed the stabilization control law $\hat{u}_{e}$ and $\hat{\tilde{u}}_{01}$. With the help of $u_{e}$, there holds that $g_{e}\to I$ and $\tilde{g}_{01}\to g_{01}$. Thus, $\tilde{u}_{01}$ can be rewritten as \begin{equation} \label{eq_til_u01_sta} \tilde{u}_{01}=-k_p\begin{bmatrix} \theta_{01}+k\beta_{01} \\ q_{x01} \\ 0 \end{bmatrix} -k_d\begin{bmatrix} \omega_{01} \\ v_{x01} \\ 0 \end{bmatrix}, \end{equation} where $\beta_{01}=-\arctan(q_{y01}/ q_{x01})$, $\bm{q}=[q_{x01} \ \ q_{y01} ]^{\trans} =A^{-1}(\theta_{01})\bm{r}_{01}$, and the follower's control input $\hat{u}_{1}$ in (\ref{eq_u1_v1}) can be rewritten as \begin{equation} \label{eq_u1_v2} \hat{u}_{1} = \hat{u}_{e} + \hat{\tilde{u}}_{01} + \mbox{Ad}_{g_{01}^{-1}}\hat{u}_{0} + [\hat{\xi}_{1},\hat{\xi}_{01}]. \end{equation} Therefore, the tracking control law can be obtained by substituting (\ref{eq_ue_sta}) and (\ref{eq_til_u01_sta}) into (\ref{eq_u1_v2}), and after computation we acquire the vector form of the tracking control law $\bm{u}_{1}$, whose components in rotation and translation are \begin{equation} \label{eq_u1_v3} \begin{aligned}u_{\theta1}=&-k_{e}(\theta_{1}-\tilde{\theta}_{1})-k_p(\theta_{01}+k\beta_{01})-k_d\omega_{01}+u_{\theta 0}, \\ u_{x1}=&-k_pq_{x01}-k_dv_{x01}+(u_{x0}-u_{\theta 0 }r_{y01})\cos\theta_{01}+u_{\theta 0}r_{x01}\sin\theta_{01}. \end{aligned} \end{equation}
We summarize the above results in the following Theorem.

\begin{theorem}[Single Follower Tracking]
\label{the_sing_tra}
Consider two nonholonomic mobile robots described by dynamics (\ref{sys}), which are connected by a directed edge. For arbitrary reference trajectory of the leader, if the control strategy is designed as (\ref{eq_u1_v3}), then the follower is able to track the trajectory of the leader globally and asymptotically.
\end{theorem}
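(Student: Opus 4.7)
The plan is to reduce the tracking problem to two stabilization subproblems that the paper has already set up. By Lemma \ref{le_sys_01}, tracking of the leader is equivalent to stabilization of $\Sigma_{01}$ to the identity; by the Decomposition Lemma (Lemma \ref{le_two_sys}), this reduces in turn to simultaneous stabilization of the subsystems $\tilde{\Sigma}_{01}$ and $\Sigma_{e}$. Hence it suffices to verify that the controls $\hat{\tilde{u}}_{01}$ and $\hat{u}_{e}$ designed just above the theorem each asymptotically stabilize their respective subsystem to the identity element; the explicit expression (\ref{eq_u1_v3}) for $\bm{u}_{1}$ is then obtained by direct substitution into (\ref{eq_u1_v1}), a calculation already carried out in the body of the paper.

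First I would handle $\Sigma_{e}$. With the design choice $\hat{\tilde{\xi}}_{1}=\mathrm{Ad}_{g_{e}}\hat{\xi}_{1}$, equation (\ref{eq_xie_e0}) forces $\hat{\xi}_{e}\equiv 0$, and by construction $g_{e}$ has vanishing translational block since $\tilde{g}_{1}$ and $g_{1}$ share the position $\bm{p}_{1}$. Thus $\Sigma_{e}$ collapses to a fully actuated rotation-only system, and $\hat{u}_{e}=-k_{e}\log_{\mathrm{SE}(2)}(g_{e})$ is exactly the feedback of Lemma \ref{le_sta_ful} with the velocity damping term automatically zero. Lemma \ref{le_sta_ful} then yields $g_{e}\to I$ asymptotically, and consequently $\tilde{g}_{1}\to g_{1}$, $\hat{\tilde{\xi}}_{1}\to\hat{\xi}_{1}$, and in particular $\tilde{v}_{y1}\to v_{y1}=0$.

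Next I would turn to $\tilde{\Sigma}_{01}$, and this is where the main obstacle lies. Lemma \ref{le_sta_nonho} requires a genuinely nonholonomic system, whereas Lemma \ref{le_til_g01_non} certifies the nonholonomic property of $\tilde{\Sigma}_{01}$ only under the hypothesis $\tilde{v}_{y1}=0$, which holds only in the limit $g_{e}\to I$. The cleanest way to close this loop is a cascade-stability argument: the $g_{e}$-dynamics are autonomously stabilized by $\hat{u}_{e}$ and converge independently of $\tilde{\Sigma}_{01}$, so the residual $\tilde{v}_{y1}$ enters the $\tilde{\Sigma}_{01}$ equations as a vanishing perturbation. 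On the invariant manifold $\{g_{e}=I\}$ the subsystem is exactly nonholonomic and Lemma \ref{le_sta_nonho} delivers global asymptotic stabilization of $\tilde{g}_{01}$ to $I$; convergence of the full cascade then follows from standard results on globally asymptotically stable systems driven by inputs that decay to zero.

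Finally I would assemble: with $g_{e}\to I$, $\hat{\xi}_{e}\to 0$, $\tilde{g}_{01}\to I$, and $\hat{\tilde{\xi}}_{01}\to 0$ in hand, the Decomposition Lemma gives $g_{01}\to I$ and $\hat{\xi}_{01}\to 0$, and Lemma \ref{le_sys_01} concludes that the follower asymptotically tracks the leader for arbitrary reference input. The \emph{global} qualifier is inherited from the global convergence in Lemmas \ref{le_sta_nonho} and \ref{le_sta_ful}, with the two-valued branch of $\log_{\mathrm{SE}(2)}$ at $\mathrm{tr}(g)=-1$ handled by the choice $\theta=\pm\pi$ discussed just after (\ref{ex_cord}).
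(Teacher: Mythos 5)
Your proposal is correct and follows essentially the same route as the paper: reduction via Lemma \ref{le_sys_01} and the Decomposition Lemma \ref{le_two_sys} to stabilization of $\Sigma_{e}$ and $\tilde{\Sigma}_{01}$, stabilization of $\Sigma_{e}$ by Lemma \ref{le_sta_ful} with $\hat{\xi}_{e}=0$, and of $\tilde{\Sigma}_{01}$ by Lemma \ref{le_sta_nonho} once $\tilde{v}_{y1}\to v_{y1}=0$ renders it nonholonomic via Lemma \ref{le_til_g01_non}, with (\ref{eq_u1_v3}) recovered by substitution into (\ref{eq_u1_v1}). In fact your explicit cascade/vanishing-perturbation treatment of the transient phase, where $\tilde{v}_{y1}\neq 0$ and Lemma \ref{le_til_g01_non} does not yet apply, is more careful than the paper's own proof, which simply asserts that the two subsystem controllers are globally asymptotically stabilizing and assembles the conclusion.
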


\begin{proof}
The tracking control law (\ref{eq_u1_v3}) is derived from the control laws (\ref{eq_ue_sta}) and (\ref{eq_til_u01_sta}), which can stabilize two relative subsystems $\Sigma_{e}$ and $\tilde{\Sigma}_{01}$ to the identity element. According to Lemma \ref{le_two_sys}, this is equivalent to the stabilization of the relative system $\Sigma_{01}$, which guarantees the tracking of the follower with respect to the leader by dint of Lemma \ref{le_sys_01}. Furthermore, the global and asymptotical convergence of the tracking controller can be ensured by the stabilization control laws (\ref{eq_ue_sta}) and (\ref{eq_til_u01_sta}) both globally and asymptotically stable.
\end{proof}

\subsection{Consensus Tracking}

In this subsection, we investigate the problem of consensus tracking for multiple nonholonomic mobile robots, under the assumption that the communication topology among them is given by a directed acyclic graph with one root node. Compared with the directed spanning tree (DST), the directed acyclic graph (DAG) are more general and reliable, since each child node can acquire information from more than one parent nodes. However, to design the consensus tracking control law under DAG is more difficult and involved than that under DST. In the communication topology of DST, each child node has only one parent node, in other words, each follower only has one leader of its own, so that the consensus tracking problem can be converted into one leader and one follower tracking problem. In contrast, under DAG communication topology, each child node may obtain information from more than parent nodes, that is to say, the follower has multiple leaders, which brings more difficulties for the tracking control law design.

Thus, the essential problem of consensus tracking under DAG is that one follower tracks multiple leaders of its own. Herein, the ``leader" does not mean the root node, but the parent node of some robot. In the following, for clear illustration, we employ ``global leader" to describe the root vertex of the whole network and ``local leader" to represent the parent vertex of a certain robot. Therefore, the problem is to design a controller for the follower to track multiple local leaders. In order to employ the result of single follower tracking, we introduce the convex combination in nonlinear manifolds \cite{Peng} to construct a virtual local leader, so that one follower tracking multiple local leaders can be converted to tracking only one virtual local leader.

For any nonholonomic mobile robot $i$, let $g_{i}^{1},g_{i}^{2},\cdots,g_{i}^{M_{i}}$ denote the configuration of its local leaders, where $M_{i}$ is the number of the local leaders. The convex combination of $g_{i}^{1},g_{i}^{2},\cdots,g_{i}^{M_{i}}$ is denoted by $g_{c_{i}}$, which is iteratively defined as \begin{equation} \label{eq_g_c} \begin{aligned} & g_{i}^{1,2}=g_{i}^{1}\exp(\lambda_{i}^{1}(\log((g_{i}^{1})^{-1}g_{i}^{2}))), \\ &
g_{c_{i}}=g_{i}^{1,\cdots,M_{i}-1}\exp(\lambda_{i}^{M_{i}-1}(\log((g_{i}^{1,\cdots,M_{i}-1})^{-1}g_{i}^{M_{i}}))),
\end{aligned}\end{equation} where $\lambda_{i}$ is the convex combination coefficient and satisfies $0\leq\lambda_{i}\leq1$, $i=1,2,\cdots,M_{i}-1$. Let $\xi_{i}^{1},\xi_{i}^{2},\cdots,\xi_{i}^{M_{i}}$ denote the velocity of the local leaders of the robot $i$. Then, the convex combination of $\xi_{i}^{1},\xi_{i}^{2},\cdots,\xi_{i}^{M_{i}}$, represented by $\xi_{c_{i}}$, is defined as \begin{equation} \label{eq_xi_c} \begin{aligned} & \xi_{i}^{1,2}=(1-\lambda_{i}^{1})\xi_{i}^{1}+\lambda_{i}^{1}\xi_{i}^{2}, \\ & \xi_{c_{i}}=(1-\lambda_{i}^{M_{i}-1})\xi_{i}^{1,\cdots,M_{i}-1}+\lambda_{i}^{M_{i}-1}\xi_{i}^{M_{i}}. \end{aligned} \end{equation} Similarly, let $u_{i}^{1},u_{i}^{2},\cdots,u_{i}^{M_{i}}$ denote the control input the local leaders of robot $i$, and the definition of their convex combination $u_{c_{i}}$ is \begin{equation} \label{eq_u_c} \begin{aligned} & u_{i}^{1,2}=(1-\lambda_{i}^{1})u_{i}^{1}+\lambda_{i}^{1}u_{i}^{2}, \\ & u_{c_{i}}=(1-\lambda_{i}^{M_{i}-1})u_{i}^{1,\cdots,M_{i}-1}+\lambda_{i}^{M_{i}-1}u_{i}^{M_{i}}. \end{aligned} \end{equation} More information about the convex combination in nonlinear manifolds can be founded in \cite{Peng}.

\begin{lemma}[\cite{Peng}, Corollary 3.2]
The dynamics of the convex combination configuration $g_{c_{i}}$, defined in (\ref{eq_g_c}), is \begin{equation} \label{eq_sys_c} \Sigma_{c_{i}}: \left\{ \begin{aligned} & \dot{g}_{c_{i}}=g_{c_{i}}\hat{\xi}_{c_{i}} \\ & \hat{\xi}_{c_{i}}=\hat{u}_{c_{i}} \end{aligned}, \right. \end{equation} where the definitions of velocity $\xi_{c_{i}}$ and control input $u_{c_{i}}$ are defined in (\ref{eq_xi_c}) and (\ref{eq_u_c}) respectively.
\end{lemma}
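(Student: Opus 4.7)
The plan is to proceed by induction on the number $M_i$ of local leaders, using only the iterative definitions (\ref{eq_g_c})--(\ref{eq_u_c}) together with the fact that each local leader satisfies $\dot g_i^k = g_i^k \hat\xi_i^k$ and $\dot{\hat\xi}_i^k = \hat u_i^k$. The base case $M_i=1$ is immediate since $g_{c_i} = g_i^1$, $\xi_{c_i} = \xi_i^1$, $u_{c_i} = u_i^1$. The inductive step reduces to the two-input case: if one accepts that both $g_i^{1,\ldots,M_i-1}$ and $g_i^{M_i}$ satisfy the desired form of the dynamics, then combining them via $g_{c_i} = g_i^{1,\ldots,M_i-1}\exp(\lambda_i^{M_i-1}\log((g_i^{1,\ldots,M_i-1})^{-1} g_i^{M_i}))$ is structurally identical to the $M_i=2$ computation with $\lambda_i^1$ replaced by $\lambda_i^{M_i-1}$. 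So the real work lies in proving the two-input statement.

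For the two-input case, I would set $h(t) = (g_i^1(t))^{-1} g_i^2(t)$ and write $g_i^{1,2} = g_i^1 \exp(\lambda \log h)$. Differentiating,
\begin{equation*}
\dot g_i^{1,2} = \dot g_i^1 \exp(\lambda \log h) + g_i^1 \frac{\mathrm d}{\mathrm dt}\exp(\lambda \log h).
\end{equation*}
The first term contributes $g_i^{1,2}\,\mathrm{Ad}_{\exp(-\lambda\log h)}\hat\xi_i^1$ after moving $g_i^{1,2}$ out front. For the second term, I would invoke the right-trivialized differential of the exponential map (the dexp formula) applied to the curve $\lambda \log h(t)$, which reduces the derivative to $\lambda\,\mathrm{dexp}_{\lambda\log h}(\tfrac{\mathrm d}{\mathrm dt}\log h)$. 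A further use of $\dot h = -\hat\xi_i^1 h + h\hat\xi_i^2$ and the derivative of the logarithm map converts $\tfrac{\mathrm d}{\mathrm dt}\log h$ into a linear combination of $\hat\xi_i^1$ and $\hat\xi_i^2$, pulled back through $\mathrm{Ad}$.

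The hard part will be collecting all these adjoint and dexp correction terms and showing that, when evaluated on SE(2), they telescope into the plain Euclidean convex combination $\hat\xi_{c_i} = (1-\lambda)\hat\xi_i^1 + \lambda\hat\xi_i^2$ in the body frame of $g_{c_i}$. On a general Lie group this identity is false because of Baker--Campbell--Hausdorff commutator corrections, so some structural feature must be used — either the specific block form (\ref{eq_defi_xi}) of $\mathfrak{se}(2)$ which makes the relevant brackets simple enough to cancel explicitly, or an identification of the body frame of $g_{c_i}$ along the geodesic that absorbs the correction. This is the step where I would lean most heavily on \cite{Peng}, whose Corollary 3.2 is precisely designed to package this cancellation.

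Once the velocity identity $\dot g_{c_i} = g_{c_i}\hat\xi_{c_i}$ with $\xi_{c_i}$ given by (\ref{eq_xi_c}) is established, the second equation of (\ref{eq_sys_c}) follows cheaply: differentiating $\xi_{c_i} = (1-\lambda_i^{M_i-1})\xi_i^{1,\ldots,M_i-1} + \lambda_i^{M_i-1}\xi_i^{M_i}$ term by term and substituting $\dot{\hat\xi}_i^{1,\ldots,M_i-1} = \hat u_i^{1,\ldots,M_i-1}$ (by the inductive hypothesis) and $\dot{\hat\xi}_i^{M_i} = \hat u_i^{M_i}$ gives $\dot{\hat\xi}_{c_i} = \hat u_{c_i}$ directly from definition (\ref{eq_u_c}). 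Note that this second step is purely linear in $\mathfrak{se}(2)$ and does not interact with the nonlinear geometry, which is why it requires no additional machinery beyond the first step.
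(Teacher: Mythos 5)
The paper does not prove this lemma at all: it is imported verbatim as Corollary 3.2 of \cite{Peng} and used as a black box, so there is no internal argument for yours to parallel. Your outline has a sensible shape (induction on $M_i$, reduction to the two-leader case, differentiation of $g_{c_i}=g_i^{1}\exp(\lambda\log((g_i^{1})^{-1}g_i^{2}))$ via the differential of $\exp$ and $\log$), and your second half --- obtaining $\dot{\hat{\xi}}_{c_i}=\hat{u}_{c_i}$ by linear differentiation of (\ref{eq_xi_c}) with constant $\lambda$'s, after silently correcting the typo $\hat{\xi}_{c_i}=\hat{u}_{c_i}$ in (\ref{eq_sys_c}) to $\dot{\hat{\xi}}_{c_i}=\hat{u}_{c_i}$ --- is fine conditional on the first half.

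The genuine gap is exactly the step you label ``the hard part'' and then hand back to \cite{Peng}: showing that the $\mathrm{dexp}$/$\mathrm{Ad}$ correction terms collapse so that the body velocity of the geodesic interpolation equals the plain Euclidean combination $(1-\lambda)\hat{\xi}_i^{1}+\lambda\hat{\xi}_i^{2}$. Deferring that to the cited corollary is circular, because that cancellation \emph{is} the lemma; and it is not mere bookkeeping on $\mathfrak{se}(2)$ that will ``telescope''. Concretely, take $g_i^{1}\equiv I$ fixed (so $\bm{\xi}_i^{1}=0$), let $g_i^{2}$ have constant heading $\theta_2=\pi/2$ and unit forward body speed, so $\dot{\bm{p}}_2=(0,1)$, and take $\lambda=1/2$. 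Then $g_{c_i}=\exp_{\mbox{\scriptsize SE(2)}}(\tfrac12\log_{\mbox{\scriptsize SE(2)}}g_i^{2})$ has constant heading $\pi/4$ and translation $\bm{p}_c=\tfrac12 A(\pi/4)A^{-1}(\pi/2)\bm{p}_2$, hence $\dot{\bm{p}}_c=\tfrac12 A(\pi/4)A^{-1}(\pi/2)(0,1)=\tfrac12(\sqrt2-1,\ 1)\approx(0.21,\ 0.50)$, whereas (\ref{eq_xi_c}) would force the body velocity $(\,\tfrac12,\,0\,)$ in the frame of $g_{c_i}$, i.e. $\dot{\bm{p}}_c=R(\pi/4)(\tfrac12,0)\approx(0.35,\ 0.35)$. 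So the identity cannot emerge from the differentiation plan alone with the definitions as restated here; it must rely on whatever additional hypotheses, closed-loop properties, or Ad-transported definition of $\xi_{c_i}$ the original Corollary 3.2 carries, and your proposal neither identifies nor supplies that ingredient.
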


Note that it has been proved in \cite{TayefiND} that the convex combination of nonholonomic mobile robots still satisfies nonholonomic constraint. That is to say, the system $\Sigma_{c_{i}}$ is a nonholonomic system. Thus, for any robot $i$ ($i=1,2,\cdots,N$), we can construct a virtual local leader with the dynamics (\ref{eq_sys_c}). With the aid of such a virtual system, the consensus tracking of the whole vehicles with the global leader can be converted to the tracking of each vehicle with its virtual local leader. The following Lemma explains this relationship in detail.

\begin{lemma}[\cite{TayefiND}, Corollary 1]
\label{le_convex_consen_tra}
Consider a networked system of $N+1$ nonholonomic mobile robots in SE(2), which are connected by a directed acyclic graph with one root node. All the robots achieve consensus with the root node (or global leader), i.e. \begin{equation*}\lim_{t\to\infty}g_{i}=g_{0},\ i=1,2,\cdots,N, \end{equation*} if each robot $i$ track the convex combination of its parent nodes (or local leaders), i.e. \begin{equation*}\lim_{t\to\infty}g_{i}=g_{c_{i}},\ i=1,2,\cdots,N. \end{equation*}
\end{lemma}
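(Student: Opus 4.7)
The plan is to induct on a topological ordering of the communication DAG. Since the graph is acyclic with the global leader as its unique root, there is an ordering of the vertices $0, 1, \ldots, N$ in which every parent of node $i$ appears strictly before $i$. Along this ordering I would prove that $\lim_{t\to\infty} g_0^{-1}(t) g_i(t) = I$ for every $i$, which is the precise meaning of $g_i \to g_0$ for the time-varying global leader.

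The base case $i = 0$ is automatic. For the inductive step, fix $i \geq 1$ and assume that every parent configuration $g_i^k$ of node $i$ (for $k = 1, \ldots, M_i$) already satisfies $g_i^k \to g_0$. The hypothesis of the lemma gives $g_i \to g_{c_i}$, so it suffices to show that $g_{c_i} \to g_0$ as well. The algebraic engine here is an idempotency property of the convex combination (\ref{eq_g_c}): if all inputs coincide at some $g^\ast$, then every difference $(g_i^{1,\ldots,k})^{-1} g_i^{k+1}$ collapses to the identity, every $\log_{\mathrm{SE(2)}}$ returns zero, every $\exp_{\mathrm{SE(2)}}$ returns $I$, and the recursion yields $g_{c_i} = g^\ast$. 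Since $\exp_{\mathrm{SE(2)}}$ and $\log_{\mathrm{SE(2)}}$ are smooth on $\{g : \mathrm{tr}(g) \neq -1\}$, the map $(g_i^1, \ldots, g_i^{M_i}) \mapsto g_{c_i}$ is continuous in the group sense, so the inductive hypothesis $g_0^{-1} g_i^k \to I$ for each $k$ propagates through the recursion to give $g_0^{-1} g_{c_i} \to I$, and composing with $g_i \to g_{c_i}$ finishes the step.

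The main obstacle I anticipate is making the continuity argument honest along the entire trajectory, since the iterated logarithm in (\ref{eq_g_c}) is only well-defined off the set $\mathrm{tr}(\cdot) = -1$, and a priori an intermediate product $(g_i^{1,\ldots,k})^{-1} g_i^{k+1}$ could hit this singular set at finite time. Two remedies are natural: one can exploit the two-valued branch choice of $\log_{\mathrm{SO(2)}}$ noted after (\ref{ex_cord}) to extend the recursion through the singularity, or one can argue that since we only need the asymptotic behavior and all pairwise products tend to $I$ under the inductive hypothesis, the tail of the trajectory eventually enters the smooth neighborhood of the identity where continuity is unambiguous. The parallel statements for the velocities $\hat{\xi}_{c_i}$ and inputs $\hat{u}_{c_i}$ are strictly easier, as (\ref{eq_xi_c}) and (\ref{eq_u_c}) are plain convex combinations in the linear space $\mathfrak{se}(2) \cong \mathbb{R}^3$, for which idempotency and continuity are trivial, so the same induction transports the consensus result from configurations to the full state.
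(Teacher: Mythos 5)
Your argument is sound, but note that the paper itself gives no proof of this lemma: it is imported verbatim from \cite{TayefiND} (Corollary 1), so there is no in-paper proof to compare against. What you supply is a self-contained justification, and it is essentially the natural one: induct along a topological ordering of the DAG (which exists and places every parent before its child precisely because the graph is acyclic with the single root $0$ as the only parentless vertex), use the hypothesis $g_{c_i}^{-1}g_i\to I$ at each node, and push the inductive hypothesis $g_0^{-1}g_i^k\to I$ through the convex-combination recursion (\ref{eq_g_c}) via its idempotency. One suggestion to make the ``continuity in the group sense'' step airtight: rather than appealing to continuity of the map $(g_i^1,\dots,g_i^{M_i})\mapsto g_{c_i}$ against the moving target $g_0(t)$, write the recursion in terms of relative errors, e.g. $g_0^{-1}g_i^{1,2}=(g_0^{-1}g_i^1)\exp(\lambda_i^1\log((g_i^1)^{-1}g_i^2))$ with $(g_i^1)^{-1}g_i^2=(g_0^{-1}g_i^1)^{-1}(g_0^{-1}g_i^2)\to I$, so the correction factor tends to $I$ and left-invariance does all the work; this sidesteps any uniformity question and, combined with your observation that the pairwise relative configurations eventually stay in a neighborhood of $I$ where $\log_{\mathrm{SE(2)}}$ is smooth, disposes of the $\mathrm{trace}=-1$ singularity. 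Your remark that the velocity and input combinations (\ref{eq_xi_c})--(\ref{eq_u_c}) live in the linear space $\mathfrak{se}(2)$ and are therefore trivial is correct, though strictly speaking the statement being proved here only concerns the configurations; the full state-consensus version is what the cited corollary in \cite{TayefiND} covers. In short: the paper buys the result by citation, while your induction buys a transparent, elementary proof at the cost of spelling out the SE(2) bookkeeping.
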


In the following, based on Lemma \ref{le_convex_consen_tra}, we derive the consensus tracking control law for the robot $i$. Design the relative configuration of $g_{i}$ with respect to $g_{c_{i}}$, denoted by $g_{c_{i}i}$, as \begin{equation} g_{c_{i}i}=g_{c_{i}}^{-1}g_{i} \triangleq \begin{bmatrix} R(\theta_{c_{i}i}) & \bm{r}_{c_{i}i} \\ 0 & 1 \end{bmatrix}, \end{equation} and the exponential coordinates of $g_{c_{i}i}$ is \begin{equation} \bm{X}_{c_{i}i} = \log_{\begin{scriptsize}\mbox{SE(2)}\end{scriptsize}}(g_{c_{i}i})\triangleq\begin{bmatrix} \theta_{c_{i}i} & q_{xc_{i}i} & q_{yc_{i}i}\end{bmatrix}^{\trans}.\end{equation} Define the relative velocity corresponding to $g_{c_{i}i}$ \begin{equation} \bm{\xi}_{c_{i}i}=\bm{\xi}_{i}-(\mbox{Ad}_{g_{c_{i}i}^{-1}}\hat{\xi}_{c_{i}i})^{\vee} \triangleq \begin{bmatrix} \omega_{c_{i}i} & v_{xc_{i}i} & v_{yc_{i}i} \end{bmatrix}^{\trans}. \end{equation} Similar to (\ref{eq_til_theta}), we design the following adjoint attitude \begin{equation} \label{eq_til_theta_i} \tilde{\theta}_{c_{i}i}=\arctan\frac{\omega_{c_{i}}r_{xc_{i}i}}{v_{xc_{i}}-\omega_{c_{i}}r_{yc_{i}i}}, \end{equation} where $\omega_{c_{i}}$ and $v_{xc_{i}}$ are the components of the convex combination velocity vector $\bm{\xi}_{c_{i}}$, and $r_{xc_{i}i}$ and $r_{yc_{i}i}$ are the components of the relative position vector $\bm{r}_{c_{i}i}$. Analogous to the single follower case, we can design the following control input for robot $i$, that is \begin{equation} \label{eq_ui_con_tr_0} \begin{aligned}u_{\theta i}=&-k_{e}(\theta_{i}-\tilde{\theta}_{i})-k_p(\theta_{c_{i}i}+k\beta_{c_{i}i})-k_d\omega_{c_{i}i}+u_{\theta c_{i}},\\ u_{x i}=&-k_pq_{xc_{i}i}-k_dv_{xc_{i}i}+(u_{xc_{i}}-u_{\theta c_{i} }r_{yc_{i}i})\cos\theta_{c_{i}i} +u_{\theta c_{i}}r_{xc_{i}i}\sin\theta_{c_{i}i}, \end{aligned} \end{equation} in which $\tilde{\theta}_{i}=\theta_{c_{i}}+\tilde{\theta}_{c_{i}i}$, $\beta_{c_{i}i}=-\arctan(q_{yc_{i}i}/q_{xc_{i}i})$, $u_{\theta c_{i}}$ and $u_{x c_{i}}$ are the components of the convex combination control input $\bm{u}_{c_{i}}$, $i=1,2,\cdots,N$. The Theorem of consensus tracking control can be presented as follows.

\begin{theorem}[Consensus Tracking]
\label{the_multi_tra}
Consider $N+1$ nonholonomic mobile robots with dynamics (\ref{sys}), which are connected by a directed acyclic graph with one root node. For arbitrary reference trajectory of the leader, if the control strategy is designed as (\ref{eq_ui_con_tr_0}), then all the followers are able to achieve consensus tracking with the leader globally and asymptotically.
\end{theorem}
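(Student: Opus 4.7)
The plan is to reduce the multi-robot consensus tracking problem to repeated application of Theorem \ref{the_sing_tra}, using the convex-combination virtual leader as the bridge. First I would observe that the control law (\ref{eq_ui_con_tr_0}) has exactly the same structural form as the single-follower law (\ref{eq_u1_v3}), with the role of the leader $g_{0}$, $\hat{\xi}_{0}$, $\hat{u}_{0}$ replaced by the virtual local leader $g_{c_{i}}$, $\hat{\xi}_{c_{i}}$, $\hat{u}_{c_{i}}$, and the relative quantities $\theta_{01},q_{x01},\beta_{01},\omega_{01},v_{x01},r_{x01},r_{y01}$ replaced by their $c_{i}i$ analogs. This identification is legitimate because the convex combination system $\Sigma_{c_{i}}$ in (\ref{eq_sys_c}) obeys the same dynamics (\ref{sys}) as a genuine nonholonomic mobile robot (the cited result from \cite{TayefiND} guarantees the convex combination preserves the nonholonomic constraint).

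Second, I would order the nodes of the directed acyclic graph in a topological order $0,1,\dots,N$, with the root node as $0$. Processing follower $i$, all of its parent nodes have smaller index, and their states $g_{i}^{k},\hat{\xi}_{i}^{k},\hat{u}_{i}^{k}$ are already determined (either by the predefined leader input for the root, or by induction for previously processed followers). Therefore the virtual local leader $(g_{c_{i}},\hat{\xi}_{c_{i}},\hat{u}_{c_{i}})$ defined through (\ref{eq_g_c})--(\ref{eq_u_c}) is a well-defined nonholonomic trajectory that is independent of the $i$th robot. Treating $(g_{c_{i}},\hat{\xi}_{c_{i}})$ as a bona fide leader for robot $i$ and applying Theorem \ref{the_sing_tra} yields, globally and asymptotically,
\begin{equation*}
\lim_{t\to\infty}g_{c_{i}i}(t)=I,\qquad \lim_{t\to\infty}\bigl(\hat{\xi}_{i}(t)-\mathrm{Ad}_{g_{c_{i}i}^{-1}(t)}\hat{\xi}_{c_{i}}(t)\bigr)=0,
\end{equation*}
i.e. $\lim_{t\to\infty}g_{i}(t)=g_{c_{i}}(t)$ in the sense required by Lemma \ref{le_convex_consen_tra}.

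Third, I would apply Lemma \ref{le_convex_consen_tra}, which packages the fact that once each follower tracks the convex combination of its parents, the whole DAG collapses onto the root. The induction over the topological order transfers the pairwise tracking guarantee of Theorem \ref{the_sing_tra} into the global consensus statement $\lim_{t\to\infty}g_{i}(t)=g_{0}(t)$ for every $i=1,\dots,N$, together with the associated velocity consensus that follows from the adjoint-transformed velocity error vanishing along the chain. Global asymptotic convergence is preserved because each link in the induction uses a globally and asymptotically stabilizing single-follower controller.

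The main obstacle I anticipate is not algebraic but structural: one must check that the arbitrary-reference-input property of Theorem \ref{the_sing_tra} is truly inherited by the virtual local leader $g_{c_{i}}$. Since the convex-combination velocity $\hat{\xi}_{c_{i}}$ may vanish even when none of the parent velocities does (and vice versa), it is essential that Theorem \ref{the_sing_tra} imposes no persistency-of-excitation-type hypothesis on the leader, which the Remark preceding it explicitly guarantees. Once this is pointed out, the chain of implications Single-follower law $\Rightarrow$ tracking of virtual local leader (Theorem \ref{the_sing_tra}) $\Rightarrow$ consensus with global leader (Lemma \ref{le_convex_consen_tra}) closes the proof.
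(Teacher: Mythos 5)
Your proposal is correct and follows essentially the same route as the paper's proof: apply the single-follower result (Theorem \ref{the_sing_tra}) to each robot with its convex-combination virtual local leader $g_{c_{i}}$, then invoke Lemma \ref{le_convex_consen_tra} to pass from local tracking to consensus with the root. Your additional remarks (topological ordering of the DAG, the nonholonomic property of $\Sigma_{c_{i}}$, and the inheritance of the arbitrary-reference-input property) merely make explicit details the paper leaves implicit.
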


\begin{proof}
According to Theorem \ref{the_sing_tra}, the control law (\ref{eq_ui_con_tr_0}) can drive the robot $i$ asymptotically track the convex combination of its local leaders (or parent nodes), that is \begin{equation*} \lim_{t\to\infty}g_{i}=g_{c_{i}},\ i=1,2,\cdots,N. \end{equation*} Then, with the aid of Lemma \ref{le_convex_consen_tra}, the consensus tracking of all the followers with the leader can be achieved under such a controller.
\end{proof}

\subsection{Formation Tracking}

We have obtained the consensus tracking control law in the previous subsection. If the control problem of formation tracking can be converted to that of consensus tracking, then the above presented result can be applied to the formation tracking control, which greatly simplifies the control design process. Hence, in the following, we introduce a new transformed system and reveal the relationship between consensus control and formation control.

We firstly consider the formation tracking control of one leader and one follower. The desired relative configuration of the follower with respect to the leader is denoted by $\bar{g}_{01}$, that is \begin{equation} \bar{g}_{01}=\begin{bmatrix} R(\bar{\theta}_{01}) & \bar{\bm{p}}_{01} \\ \bm{0} & 1\end{bmatrix}, \end{equation} where $\bar{\theta}_{01}$ is the desired relative attitude angle, and $\bar{\bm{p}}_{01}=[\bar{x}_{01}\ \ \bar{y}_{01}]^{\trans}$ is the desired relative position vector. Due to the nonholonomic constraint, $\bar{\theta}_{01}$ and $\bar{p}_{01}$ are not independent each other. It has been proved in \cite{TayefiND} that for arbitrary relative position vector $\bar{\bm{p}}_{01}$, the relative attitude angle $\bar{\theta}_{01}$ is decided by the equality \begin{equation} \label{eq_bar_the} \bar{\theta}_{01} = \arctan\frac{\omega_{0}\bar{x}_{01}}{v_{x0}-\omega_{0}\bar{y}_{01}}. \end{equation}

Now, define a new transformed configuration \begin{equation} g_{a_{1}}=g_{0}\bar{g}_{01}. \end{equation} Take the derivative of $g_{a_{1}}$ with respect to time, and we have \begin{equation*} \dot{g}_{a_{1}} = \dot{g}_{0}\bar{g}_{01} = g_{0}\hat{\xi}_{0}\bar{g}_{01} = g_{a_{1}}\mbox{Ad}_{\bar{g}_{01}^{-1}}\hat{\xi}_{0}. \end{equation*} Define the transformed velocity $\hat{\xi}_{a_{1}} = \mbox{Ad}_{\bar{g}_{01}^{-1}}\hat{\xi}_{0}$, and its time derivative is $\dot{\hat{\xi}}_{a_{1}}=\mbox{Ad}_{\bar{g}_{01}^{-1}}\dot{\hat{\xi}}_{0} = \mbox{Ad}_{\bar{g}_{01}^{-1}}\hat{u}_{0}$. Thus, the dynamics of the transformed system is \begin{equation} \Sigma_{a_{1}}: \left\{ \begin{aligned} & \dot{g}_{a_{1}}=g_{a_{1}}\hat{\xi}_{a_{1}} \\ & \dot{\hat{\xi}}_{a_{1}}=\hat{u}_{a_{1}} \end{aligned}, \right.  \end{equation} where the transformed control input is $\hat{u}_{a_{1}}=\mbox{Ad}_{\bar{g}_{01}^{-1}}\hat{u}_{0}$.

\begin{lemma}
\label{le_for_con_rela}
The follower $\Sigma_{1}$ achieves the formation $\bar{g}_{01}$ with respect to the leader $\Sigma_{0}$, if $\Sigma_{1}$ tracks the transformed system $\Sigma_{a_{1}}$.
\end{lemma}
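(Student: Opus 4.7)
The plan is to reduce the formation requirement \eqref{eq_for_0i} for the pair $(g_0, g_1)$ to the tracking requirement for the pair $(g_{a_1}, g_1)$ by exhibiting an explicit algebraic identity between $g_{01}$ and $g_{a_1 1}$. Since $g_{a_1} = g_0 \bar{g}_{01}$ by construction, I would start by computing
\begin{equation*}
g_{01} = g_0^{-1} g_1 = g_0^{-1} g_{a_1} g_{a_1}^{-1} g_1 = \bar{g}_{01}\, g_{a_1 1},
\end{equation*}
which immediately shows that the configuration part of the formation requirement $g_{01} \to \bar{g}_{01}$ is equivalent to the tracking requirement $g_{a_1 1} \to I$.

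Next I would handle the velocity requirement. By hypothesis, tracking of $\Sigma_{a_1}$ by $\Sigma_1$ means $\hat{\xi}_1 - \mbox{Ad}_{g_{a_1 1}^{-1}} \hat{\xi}_{a_1} \to 0$, where by definition $\hat{\xi}_{a_1} = \mbox{Ad}_{\bar{g}_{01}^{-1}} \hat{\xi}_0$. Using the composition property of the adjoint map together with the identity $g_{01} = \bar{g}_{01} g_{a_1 1}$ derived above, I would write
\begin{equation*}
\mbox{Ad}_{g_{01}^{-1}} \hat{\xi}_0
= \mbox{Ad}_{(\bar{g}_{01} g_{a_1 1})^{-1}} \hat{\xi}_0
= \mbox{Ad}_{g_{a_1 1}^{-1}} \mbox{Ad}_{\bar{g}_{01}^{-1}} \hat{\xi}_0
= \mbox{Ad}_{g_{a_1 1}^{-1}} \hat{\xi}_{a_1}.
\end{equation*}
Substituting this into the tracking limit gives $\hat{\xi}_1 - \mbox{Ad}_{g_{01}^{-1}} \hat{\xi}_0 \to 0$, which is exactly the velocity part of the formation requirement.

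Combining the two steps yields both conditions of \eqref{eq_for_0i} with $\bar{g}_{01}$ as the desired relative configuration, concluding the proof. I do not expect any real obstacle here: the lemma is essentially a change-of-variable statement, and the only subtlety is keeping the left/right placement of $\bar{g}_{01}$ straight in the product $g_0 \bar{g}_{01}$ so that the factorization $g_{01} = \bar{g}_{01} g_{a_1 1}$ comes out correctly. Once that factorization is in hand, the adjoint-composition property does all the remaining work.
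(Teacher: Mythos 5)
Your proof is correct and takes essentially the same route as the paper: the same factorization $g_{01}=\bar{g}_{01}\,g_{a_{1}1}$ (the paper writes it as $g_{a_{1}}^{-1}g_{1}=\bar{g}_{01}^{-1}g_{01}$) together with the adjoint composition property and $\hat{\xi}_{a_{1}}=\mbox{Ad}_{\bar{g}_{01}^{-1}}\hat{\xi}_{0}$ carries the argument. The only cosmetic difference is that you state the velocity hypothesis directly in the body-frame form (\ref{eq_for_ij_xi}) for the pair $(\Sigma_{a_{1}},\Sigma_{1})$, whereas the paper starts from the spatial-velocity condition and factors out $\mbox{Ad}_{g_{1}}$; these are equivalent by the paper's own discussion in Section 2.
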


\begin{proof}
The tracking of $\Sigma_{1}$ with $\Sigma_{a_{1}}$ implies \begin{subequations} \begin{align} \label{eq_ga1_for_1}& \lim_{t\to\infty}g_{a_{1}}^{-1}g_{1}=I \\ \label{eq_xia1_for_1} & \lim_{t\to\infty}(\hat{\xi}_{1}^{s}-\hat{\xi}_{a_{1}}^{s})=0 \end{align} \end{subequations} where $g_{a_{1}}^{-1}g_{1}$ is the relative configuration of $\Sigma_{1}$ with respect to $\Sigma_{a_{1}}$, $\hat{\xi}_{1}^{s}$ and $\hat{\xi}_{a_{1}}^{s}$ are velocities of two systems in spacial frame. According to the definition of $g_{a_{1}}$, there holds \begin{equation*} g_{a_{1}}^{-1}g_{1}=\bar{g}_{01}^{-1}g_{0}^{-1}g_{1}=\bar{g}_{01}^{-1}g_{01}. \end{equation*} Thus, with further computation, (\ref{eq_ga1_for_1}) becomes \begin{equation} \label{eq_ga1_for_2} \lim_{t\to\infty}g_{01}=\bar{g}_{01}. \end{equation} From the definition of spacial velocity $\hat{\xi}_{1}^{s}$ and $\hat{\xi}_{a_{1}}^{s}$, it follows \begin{align*} \hat{\xi}_{1}^{s}-\hat{\xi}_{a_{1}}^{s} &= \mbox{Ad}_{g_{1}}\hat{\xi}_{1}-\mbox{Ad}_{g_{a_{1}}}\hat{\xi}_{a_{1}} \\ &=\mbox{Ad}_{g_{1}}(\hat{\xi}_{1}-\mbox{Ad}_{g_{1}^{-1}g_{0}\bar{g}_{01}}\hat{\xi}_{a_{1}}). \end{align*} Substitute $\hat{\xi}_{a_{1}} = \mbox{Ad}_{\bar{g}_{01}^{-1}}\hat{\xi}_{0}$ into above equation, we have \begin{align*} \hat{\xi}_{1}^{s}-\hat{\xi}_{a_{1}}^{s} &= \mbox{Ad}_{g_{1}}(\hat{\xi}_{1}-\mbox{Ad}_{g_{01}^{-1}\bar{g}_{01}}\mbox{Ad}_{\bar{g}_{01}^{-1}}\hat{\xi}_{0}) \\ & = \mbox{Ad}_{g_{1}}(\hat{\xi}_{1}-\mbox{Ad}_{g_{01}^{-1}}\hat{\xi}_{0}). \end{align*} Based on (\ref{eq_xia1_for_1}), it is obtained \begin{equation} \label{eq_xia1_for_2} \lim_{t\to\infty}(\hat{\xi}_{1}-\mbox{Ad}_{g_{01}^{-1}}\hat{\xi}_{0})=0. \end{equation} Therefore, (\ref{eq_ga1_for_2}) and (\ref{eq_xia1_for_2}) suggest that the leader $\Sigma_{0}$ and the follower $\Sigma_{1}$ achieve the formation specified by $\bar{g}_{01}$.
\end{proof}

The formation problem is converted to the tracking problem employing Lemma \ref{le_for_con_rela}. The follower $\Sigma_{1}$ satisfies the nonholonomic constraint naturally. If the transformed system $\Sigma_{a_{1}}$ is also nonholonomic constrained, the controller (\ref{eq_u1_v3}) can be directly applied to the tracking problem of $\Sigma_{1}$ with $\Sigma_{a_{1}}$. Based on the definition of $\hat{\xi}_{a_{1}}=\mbox{Ad}_{\bar{g}_{01}^{-1}}\hat{\xi}_{0}$, the vector form of the transformed velocity is \begin{equation*} \bm{\xi}_{a_{1}}=\begin{bmatrix} \omega_{0} \\ (v_{x0}-\omega_{0}\bar{y}_{01})\cos\bar{\theta}_{01}+\omega_{0}\bar{x}_{01}\sin\bar{\theta}_{01} \\ -(v_{x0}-\omega_{0}\bar{y}_{01})\sin\bar{\theta}_{01}+\omega_{0}\bar{x}_{01}\cos\bar{\theta}_{01} \end{bmatrix} \triangleq \begin{bmatrix} \omega_{a_{1}} \\ v_{xa_{1}} \\ v_{ya_{1}} \end{bmatrix}. \end{equation*} Because $\bar{\theta}_{01}$ is determined by $\bar{x}_{01}$ and $\bar{y}_{01}$, by substituting (\ref{eq_bar_the}) into $v_{ya_{1}}$ it follows that $v_{ya_{1}}=0$, which means that the transformed system $\Sigma_{a_{1}}$ satisfies the nonholonomic constraint and that Theorem \ref{the_sing_tra} can be used to solve the tracking problem of $\Sigma_{1}$ with $\Sigma_{a_{1}}$. The formation controller of the follower can be obtain by replacing the leader $\Sigma_{0}$ with the transformed system $\Sigma_{a_{1}}$ in Theorem \ref{the_sing_tra}.

Next, the multiple followers formation tracking is taken into consideration. Similar to the case of consensus tracking, for robot $\Sigma_{i}$ ($i=1,2,\cdots,N$), we construct a virtual system $\Sigma_{c_{i}}$ in (\ref{eq_sys_c}), which is the convex combination of the parent vertexes of robot $i$. Thus, we choose the system $\Sigma_{c_{i}}$ as the virtual leader for system $\Sigma_{i}$. Let $\bar{g}_{c_{i}i}$ denote the relative configuration of $\Sigma_{i}$ with respect to its virtual leader $\Sigma_{c_{i}}$, that is \begin{equation} \bar{g}_{c_{i}i}=\begin{bmatrix} R(\bar{\theta}_{c_{i}i}) & \bar{\bm{p}}_{c_{i}i} \\ \bm{0} & 1\end{bmatrix}, \end{equation} where $\bar{\bm{p}}_{c_{i}i}=[\bar{x}_{c_{i}i}\ \ \bar{y}_{c_{i}i}]^{\trans}$ the desired relative position, and the desired relative attitude angle is decided by \begin{equation} \label{eq_bar_the_i} \bar{\theta}_{c_{i}i} = \arctan\frac{\omega_{c_{i}}\bar{x}_{c_{i}i}}{v_{x0}-\omega_{c_{i}i}\bar{y}_{c_{i}i}}. \end{equation}

For each robot $i$, the transformed system is defined by \begin{equation} \Sigma_{a_{i}}: \left\{ \begin{aligned} & \dot{g}_{a_{i}}=g_{a_{i}}\hat{\xi}_{a_{i}} \\ & \dot{\hat{\xi}}_{a_{i}}=\hat{u}_{a_{i}} \end{aligned}, \right. i=1,2,\cdots,N, \end{equation} where the transformed configuration is $g_{a_{i}}=g_{c_{i}}\bar{g}_{c_{i}i}$, the transformed velocity is $\hat{\xi}_{a_{i}} = \mbox{Ad}_{\bar{g}_{c_{i}i}^{-1}}\hat{\xi}_{c_{i}}$ and the transformed control input is $\hat{u}_{a_{i}} = \mbox{Ad}_{\bar{g}_{c_{i}i}^{-1}}\hat{u}_{c_{i}}$. Similar to the case of formation tracking of one leader and one follower, when the follower $\Sigma_{i}$ tracks the transformed system $\Sigma_{a_{i}}$, the formation specified by $\bar{g}_{c_{i}i}$ can be achieved. Therefore, the formation control law for $\Sigma_{i}$ is able to be obtained based on the tracking results. In other words, as long as the virtual leader $\Sigma_{c_{i}}$ in Theorem \ref{the_multi_tra} is replaced with the transformed system $\Sigma_{a_{i}}$, then the formation controller can be obtained. Once every mobile robot achieves the formation with its virtual leader, the goal of formation tracking of the whole networked system can be realized eventually. Therefore, we can directly propose the formation tracking strategy as follows.

\begin{theorem}[Formation Tracking]
\label{the_multi_tra_for}
Consider $N+1$ nonholonomic mobile robots with dynamics (\ref{sys}), which are connected by a directed acyclic graph with one root node. For arbitrary reference trajectory of the leader, if the control strategy is designed as \begin{equation} \label{eq_ui_for_tr_0} \begin{aligned} u_{\theta i} = &-k_{e}(\theta_{i}-\tilde{\theta}_{i}) -k_p(\theta_{a_{i}i}+k\beta_{a_{i}i}) -k_d\omega_{a_{i}i}+u_{\theta a_{i}},\\ u_{x i}=&-k_pq_{xa_{i}i}-k_dv_{xa_{i}i}+(u_{xa_{i}}-u_{\theta a_{i} }r_{ya_{i}i})\cos\theta_{a_{i}i}+u_{\theta a_{i}}r_{xa_{i}i}\sin\theta_{a_{i}i}, \end{aligned} \end{equation} in which $\tilde{\theta}_{i}=\theta_{a_{i}}+\tilde{\theta}_{a_{i}i}$, $\beta_{a_{i}i}=-\arctan(q_{ya_{i}i}/q_{xa_{i}i})$, $u_{\theta a_{i}}$ and $u_{x a_{i}}$ are the components of the transformed system control input $\bm{u}_{a_{i}}$, $i=1,2,\cdots,N$, then, all the followers are able to achieve formation tracking with the leader globally and asymptotically.
\end{theorem}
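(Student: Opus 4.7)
The plan is to assemble Theorem \ref{the_multi_tra_for} from three previously established facts: Lemma \ref{le_for_con_rela} (which links formation with a single leader to tracking of the transformed system), Theorem \ref{the_multi_tra} (consensus tracking under DAG), and Lemma \ref{le_convex_consen_tra} (which converts tracking of convex-combination parents into consensus with the root node). The structure mirrors the consensus tracking proof, with the auxiliary transformed systems $\Sigma_{a_i}$ playing the role that the convex combinations $\Sigma_{c_i}$ played there.

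First I would verify that each transformed system $\Sigma_{a_{i}}$ is itself nonholonomic. By the same computation that produced $v_{ya_{1}}=0$ in the single-follower formation discussion, the side component of $\bm{\xi}_{a_{i}}=(\mathrm{Ad}_{\bar g_{c_{i}i}^{-1}}\hat\xi_{c_{i}})^{\vee}$ is
\begin{equation*}
v_{ya_{i}}=-(v_{xc_{i}}-\omega_{c_{i}}\bar y_{c_{i}i})\sin\bar\theta_{c_{i}i}+\omega_{c_{i}}\bar x_{c_{i}i}\cos\bar\theta_{c_{i}i},
\end{equation*}
and substituting the choice of $\bar\theta_{c_{i}i}$ prescribed by (\ref{eq_bar_the_i}) yields $v_{ya_{i}}=0$. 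Hence $\Sigma_{a_{i}}$ is a nonholonomic system for every $i$, which legitimizes using it as the ``leader'' in the consensus tracking machinery.

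Next I would apply Theorem \ref{the_multi_tra} after swapping $\Sigma_{c_{i}}$ with $\Sigma_{a_{i}}$. Since the control law (\ref{eq_ui_for_tr_0}) has exactly the same form as (\ref{eq_ui_con_tr_0}) with the symbols $(\theta_{c_{i}i},q_{xc_{i}i},\ldots,u_{\theta c_{i}},u_{xc_{i}})$ replaced throughout by $(\theta_{a_{i}i},q_{xa_{i}i},\ldots,u_{\theta a_{i}},u_{xa_{i}})$, the consensus tracking result gives
\begin{equation*}
\lim_{t\to\infty}g_{a_{i}}^{-1}g_{i}=I,\quad \lim_{t\to\infty}(\hat{\xi}_{i}-\mathrm{Ad}_{(g_{a_{i}}^{-1}g_{i})^{-1}}\hat{\xi}_{a_{i}})=0,\qquad i=1,\ldots,N,
\end{equation*}
globally and asymptotically. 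Then, reproducing the manipulation in the proof of Lemma \ref{le_for_con_rela} with $g_{0},\bar g_{01},g_{a_{1}}$ replaced by $g_{c_{i}},\bar g_{c_{i}i},g_{a_{i}}$, the identity $g_{a_{i}}^{-1}g_{i}=\bar g_{c_{i}i}^{-1}g_{c_{i}i}$ converts this into the formation relation
\begin{equation*}
\lim_{t\to\infty}g_{c_{i}i}=\bar g_{c_{i}i},\qquad \lim_{t\to\infty}\bigl(\hat\xi_{i}-\mathrm{Ad}_{g_{c_{i}i}^{-1}}\hat\xi_{c_{i}}\bigr)=0,
\end{equation*}
that is, robot $i$ achieves the desired formation $\bar g_{c_{i}i}$ with respect to its virtual leader $\Sigma_{c_{i}}$.

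Finally I would close the argument by reducing the residual asymptotic configuration to the desired global formation. Writing $g_{i}=g_{c_{i}}\bar g_{c_{i}i}+o(1)$ and using the iterative definition (\ref{eq_g_c}) of $g_{c_{i}}$ together with Lemma \ref{le_convex_consen_tra} along a topological ordering of the DAG (starting from the root node and processing each vertex only after all its parents have converged), one obtains $g_{c_{i}}\to g_{0}$, and hence $g_{i}\to g_{0}\bar g_{c_{i}i}$ for every $i$; the desired configurations $\bar g_{0i}$ are then recovered as the compositions of the $\bar g_{c_{j}j}$ along paths in the DAG. Global asymptotic convergence is inherited from the stabilization laws (\ref{eq_ue_sta}) and (\ref{eq_til_u01_sta}) invoked inside Theorem \ref{the_multi_tra}. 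The main subtlety I expect to address is precisely this inductive propagation along the DAG: the virtual leader $\Sigma_{c_{i}}$ of a follower depends on states of its parents, so before applying Theorem \ref{the_multi_tra} to vertex $i$ one must guarantee that the parents' trajectories have already entered an asymptotic regime making $\Sigma_{c_{i}}$ a well-defined, nonholonomic reference whose control input $\hat u_{c_{i}}$ is bounded — this is exactly what the acyclicity of the graph permits, but it needs to be stated explicitly to avoid circularity.
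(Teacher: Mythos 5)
Your first two steps coincide with the paper's own (informal) justification of Theorem \ref{the_multi_tra_for}: the paper likewise verifies $v_{ya_{i}}=0$ once (\ref{eq_bar_the_i}) is substituted, so that each transformed system $\Sigma_{a_{i}}$ is a legitimate nonholonomic reference, and then obtains the controller by replacing $\Sigma_{c_{i}}$ with $\Sigma_{a_{i}}$ in Theorem \ref{the_multi_tra}, transferring the conclusion back through the computation of Lemma \ref{le_for_con_rela} with $(g_{0},\bar{g}_{01},g_{a_{1}})$ replaced by $(g_{c_{i}},\bar{g}_{c_{i}i},g_{a_{i}})$. Up to the statement ``robot $i$ achieves the formation $\bar{g}_{c_{i}i}$ with respect to its virtual leader $\Sigma_{c_{i}}$'' your argument is, if anything, more explicit than the paper, which states the theorem without a formal proof.

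The final aggregation step, however, contains a genuine error. Lemma \ref{le_convex_consen_tra} gives consensus with the root only when each robot tracks the convex combination of its parents; here robot $i$ tracks the offset system $g_{a_{i}}=g_{c_{i}}\bar{g}_{c_{i}i}$, and its parents converge not to $g_{0}$ but to their own offset configurations $g_{0}\bar{g}_{0j}$. Hence $g_{c_{i}}\to g_{0}$ is false for nontrivial formations: by left-invariance of the construction (\ref{eq_g_c}) (if $g_{i}^{k}=g_{0}h_{k}$ then $(g_{i}^{1})^{-1}g_{i}^{2}=h_{1}^{-1}h_{2}$), one obtains $g_{c_{i}}\to g_{0}h_{c_{i}}$, where $h_{c_{i}}$ is the convex combination of the parents' desired offsets, and therefore $g_{i}\to g_{0}h_{c_{i}}\bar{g}_{c_{i}i}$. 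The global pattern $\bar{g}_{0i}$ is recovered only if the local specifications are chosen consistently, $\bar{g}_{c_{i}i}=h_{c_{i}}^{-1}\bar{g}_{0i}$; nor is it a composition of the $\bar{g}_{c_{j}j}$ ``along paths,'' since a node with several parents combines their offsets convexly rather than sequentially. To be fair, the paper itself glosses over exactly this aggregation (``once every mobile robot achieves the formation with its virtual leader, the goal \dots can be realized eventually''), and your inductive processing along a topological ordering of the DAG is the right scaffold for making it precise; but the claim $g_{c_{i}}\to g_{0}$ must be replaced by the left-invariance argument above, and the consistency condition on the local offsets stated explicitly (noting also that $\bar{g}_{c_{i}i}$ is time-varying through (\ref{eq_bar_the_i}), so it is only the asymptotic regime of the parents that fixes it).
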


\section{Simulation Examples}

\begin{example}
\label{ex_1}
To verify the result of Theorem \ref{the_sing_tra}, we firstly consider the tracking control problem for one leader and one follower. The initial states of two mobile robots are given in Table \ref{tab_initial_1}. The reference control input of the leader is predefined as $u_{\theta0}=0.15\cos(0.4t)$ and $u_{x0}=10$, which represents the control force in rotation and translation respectively. The controller (\ref{eq_u1_v3}) is employed to make the follower achieve trajectory tracking. The simulation time is set as $T=20\mbox{s}$, and the simulation result is portrayed in Figure \ref{fig_tra_sing}, which shows the trajectories with orientation of two robots at different instants. As we can see, the follower is able to track the leader under the designed controller.
\end{example}

\begin{table}\small
\caption{Initial conditions of single follower tracking}
\centering
\begin{tabular}{lccccc}
\hline\noalign{\smallskip}
             & $\theta(0)$   & $x(0)$  & $y(0)$  & $\omega(0)$  & $v_{x}(0)$   \\
\noalign{\smallskip}\hline\noalign{\smallskip}
  leader     &  $0$         &  $0$     &   $0$     &   $0$       &   $0$         \\
  follower   &  $-\pi/2$    &  $500$   &  $-500$   &   $2$       &   $10$        \\
\noalign{\smallskip}\hline
\end{tabular}
\label{tab_initial_1}
\end{table}

\begin{figure}
\centering
\subfigure[$t=15\%T$]{
{\includegraphics[width=0.35\textwidth,trim=50 0 50 20,clip]{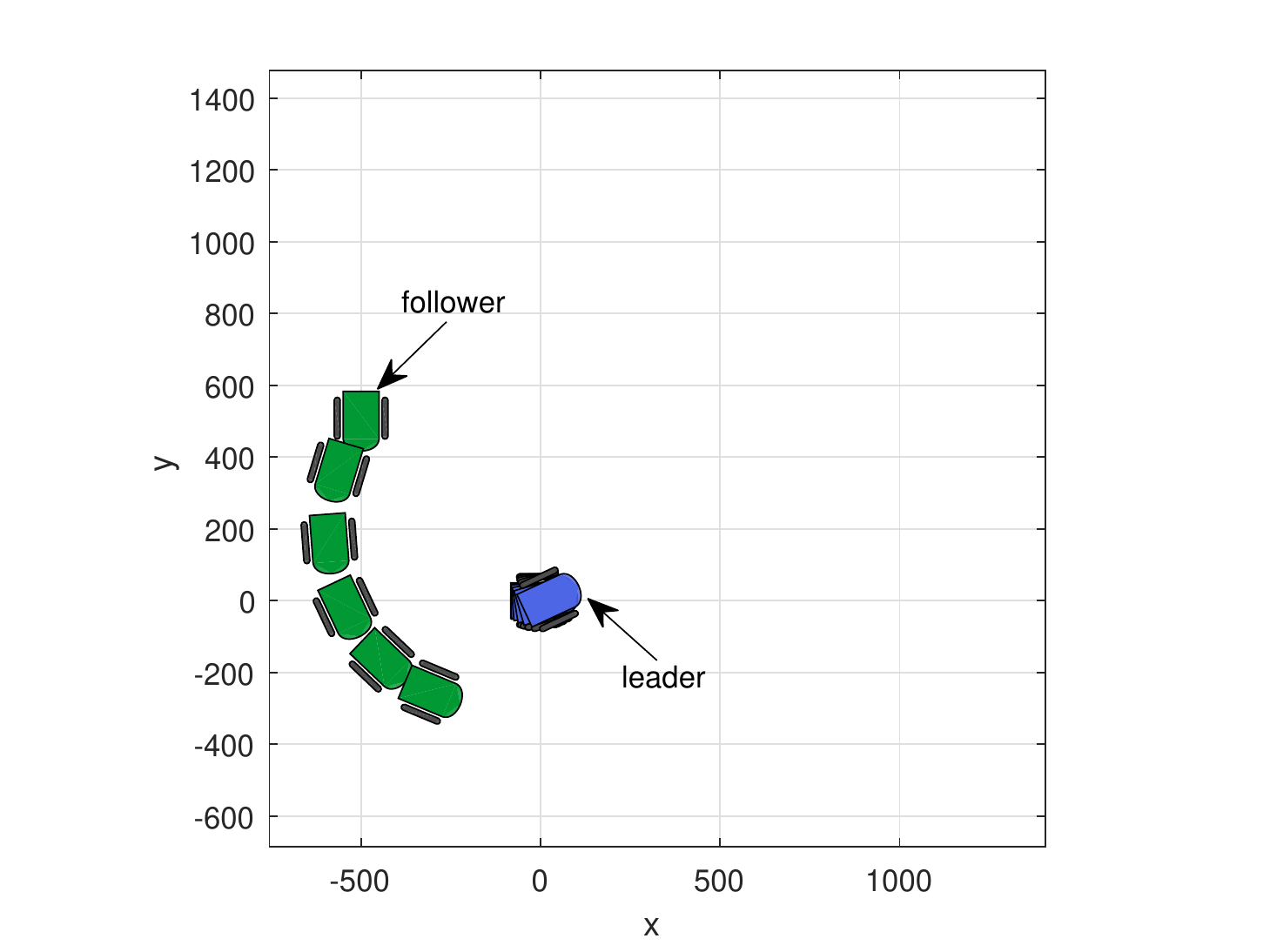}}}\qquad
\subfigure[$t=25\%T$]{
{\includegraphics[width=0.35\textwidth,trim=50 0 50 20,clip]{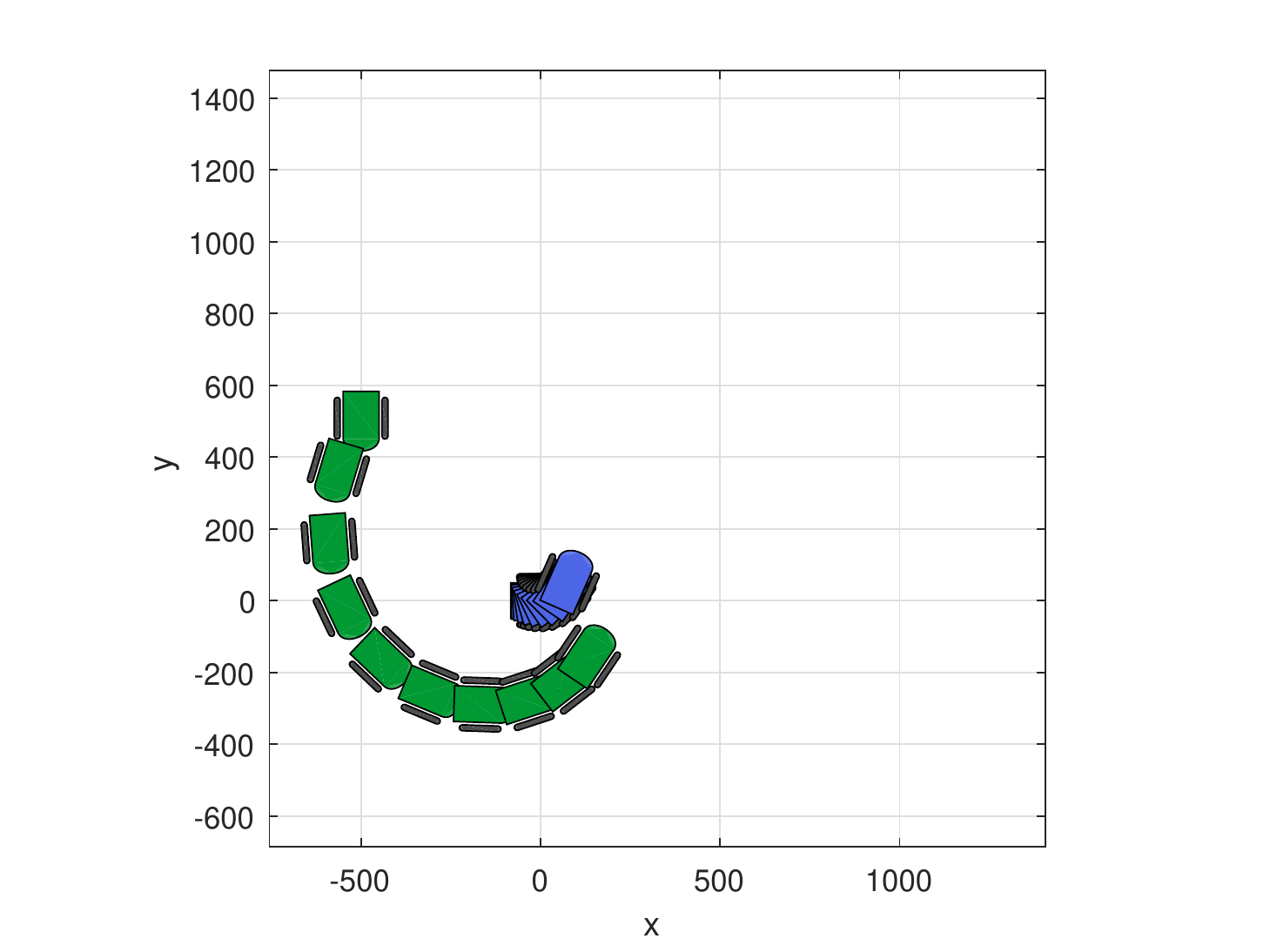}}}

\subfigure[$t=50\%T$]{
{\includegraphics[width=0.35\textwidth,trim=50 0 50 20,clip]{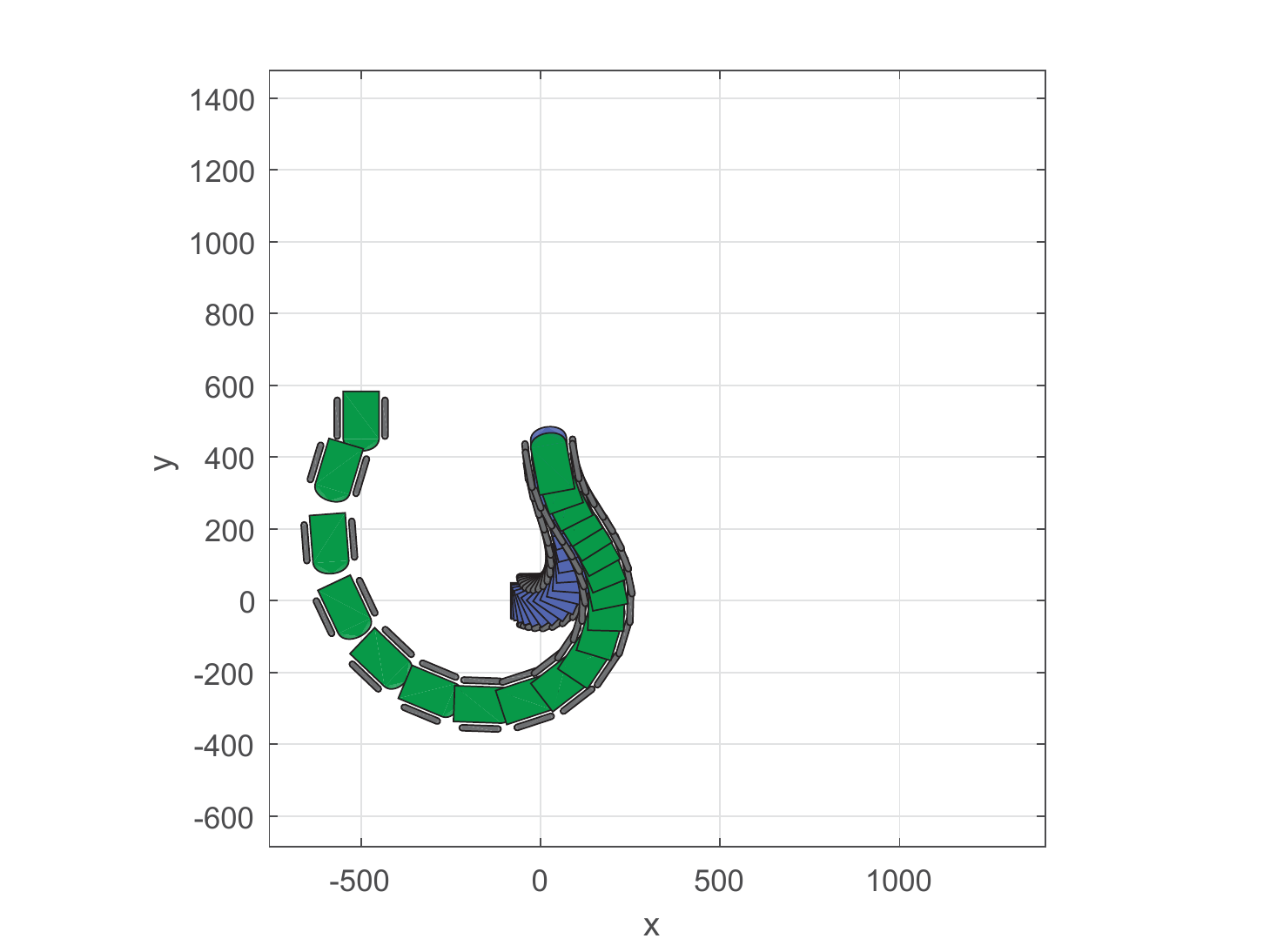}}}\qquad
\subfigure[$t=100\%T$]{
{\includegraphics[width=0.35\textwidth,trim=50 0 50 20,clip]{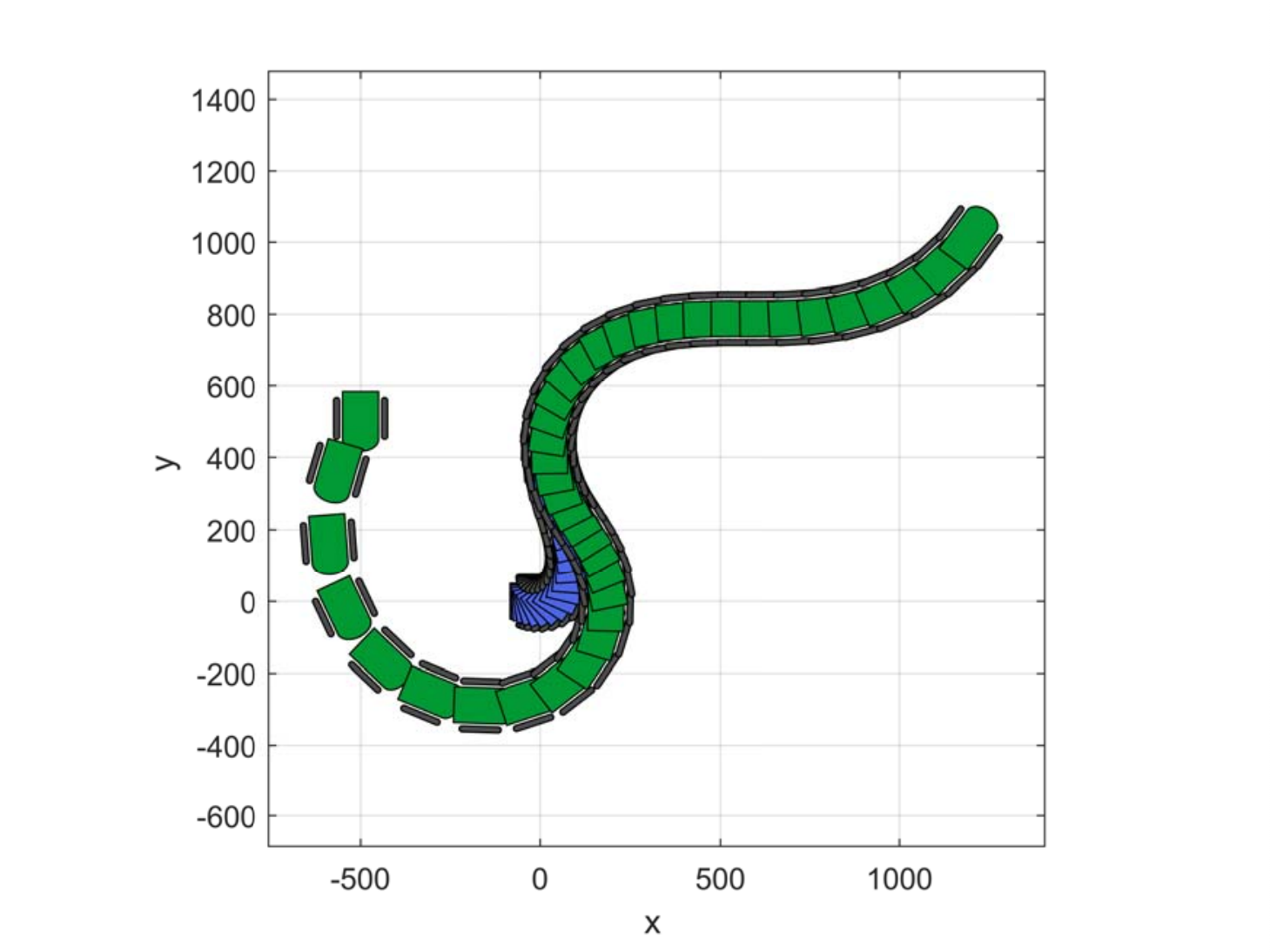}}}
\caption{Trajectories of mobile robots at different instants (smooth trajectory tracking)}
\label{fig_tra_sing}
\end{figure}

\begin{example}
In order to validate the effectiveness of the tracking controller in the presence of zero reference velocities, where the PE condition does not hold, we generate a nonsmooth trajectory with the reference control input in Figure \ref{fig_refer_u}, and Figure \ref{fig_refer_v} shows the leader's velocity under the reference control input. From the evolution of the reference velocity, there exist zero velocities ($\omega_{0}=0$ and $v_{x0}=0$) in the movement of the leader, indicating that the leader does not satisfy the PE condition. Figure \ref{fig_refer_tra} shows the reference trajectory of the leader, which is evidently nonsmooth. The initial states of the follower are defined as $\theta(0)=-\pi/2$, $x(0)=-25$, $y(0)=-10$, $\omega(0)=0$ and $v_{x}(0)=0$. The controller (\ref{eq_u1_v3}) is employed again for the follower. The simulation result is shown in Figure \ref{fig_tra_sing_non}, which illustrates the trajectories of the leader and the follower, verifying that the proposed controller is also effective for trajectories with zero velocities.
\end{example}

\begin{figure}
\centering
\subfigure[Reference control input for the leader]{\label{fig_refer_u}
{\includegraphics[width=0.48\textwidth,trim=10 0 0 0,clip]{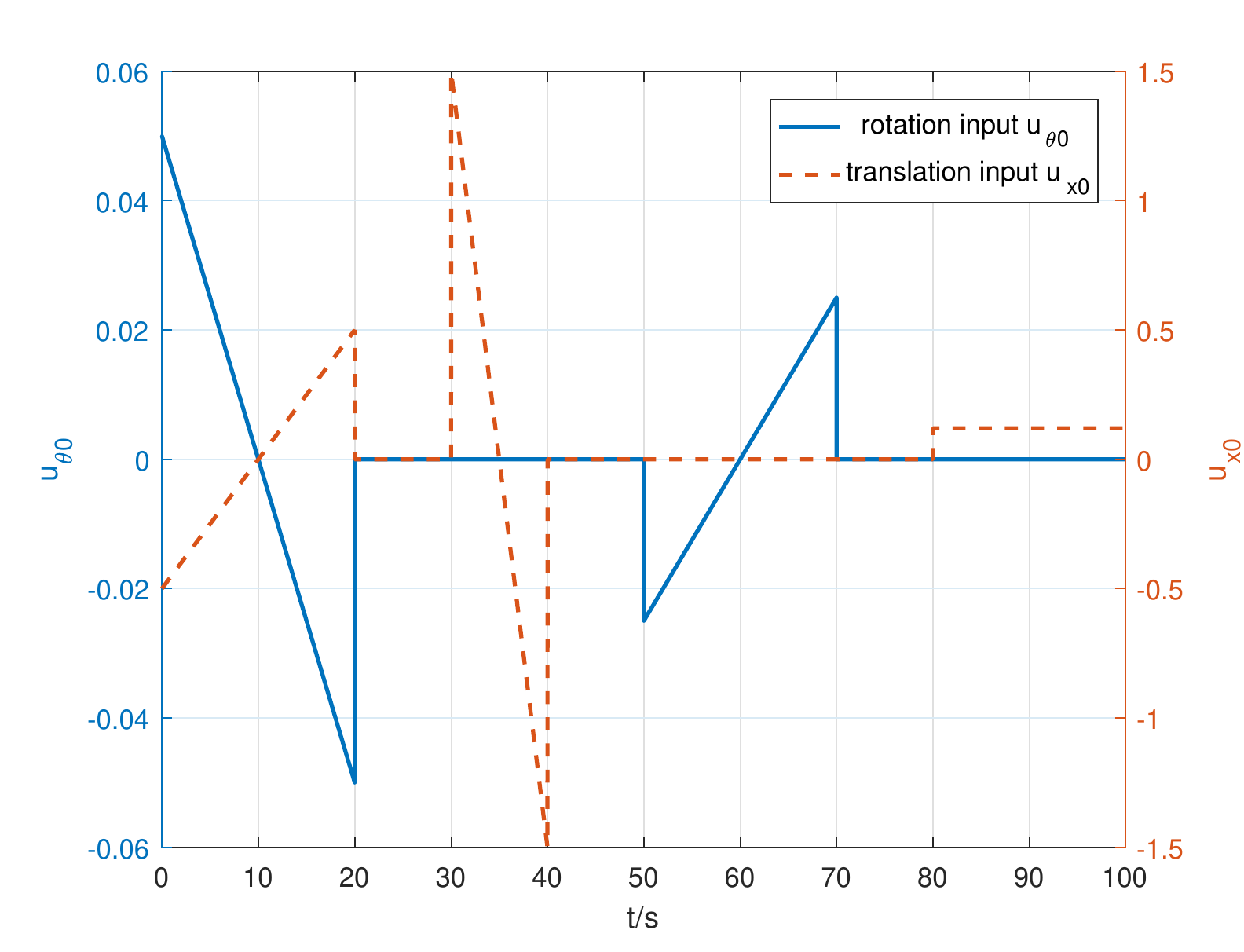}}}
\subfigure[Velocity of the leader corresponding to input]{\label{fig_refer_v}
{\includegraphics[width=0.48\textwidth,trim=10 0 0 0,clip]{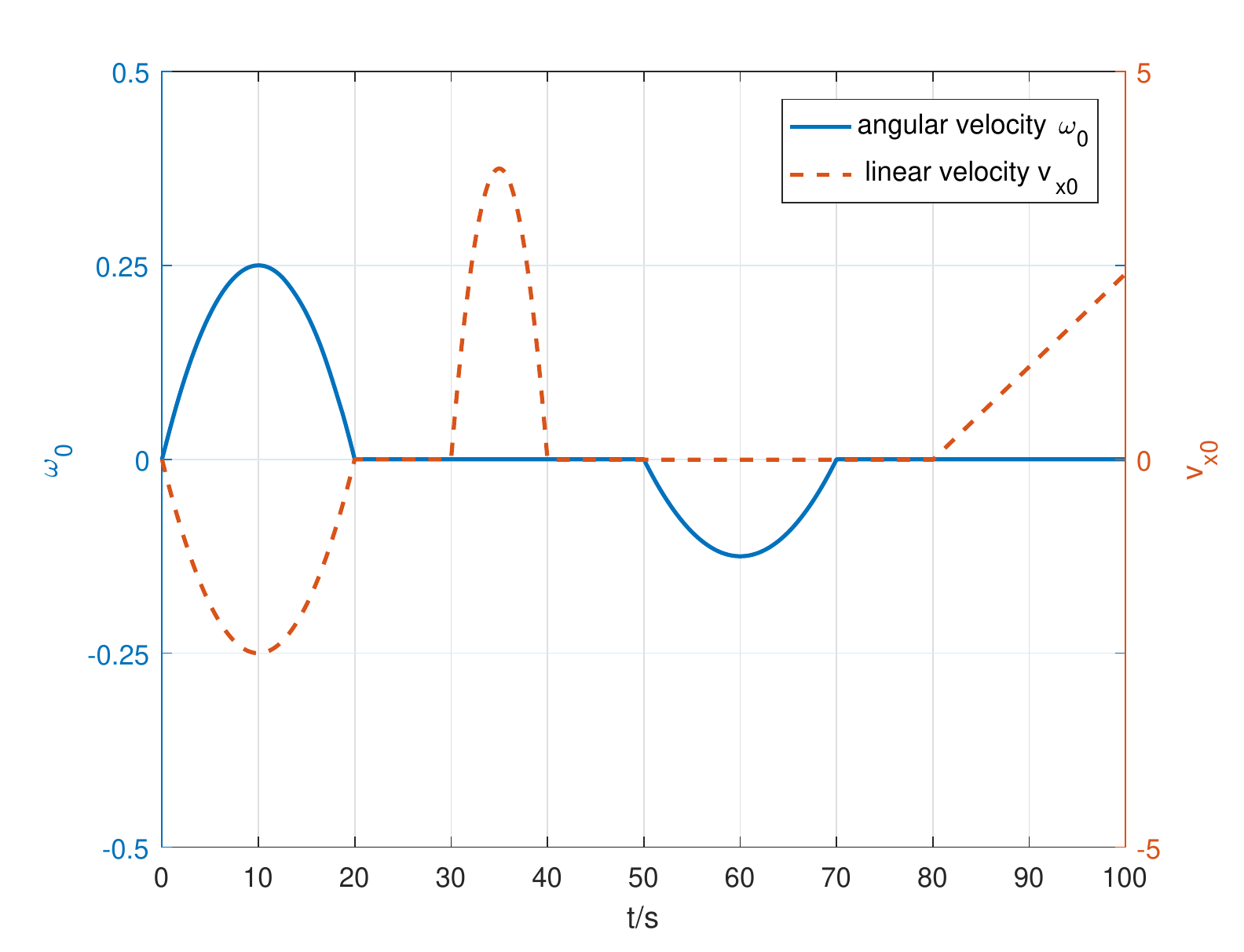}}}
\caption{Reference control input and corresponding velocity of the leader}
\end{figure}

\begin{figure}
\centering
\begin{minipage}{0.48\textwidth}
\centering
\includegraphics[width=1\textwidth,trim=40 5 20 20,clip]{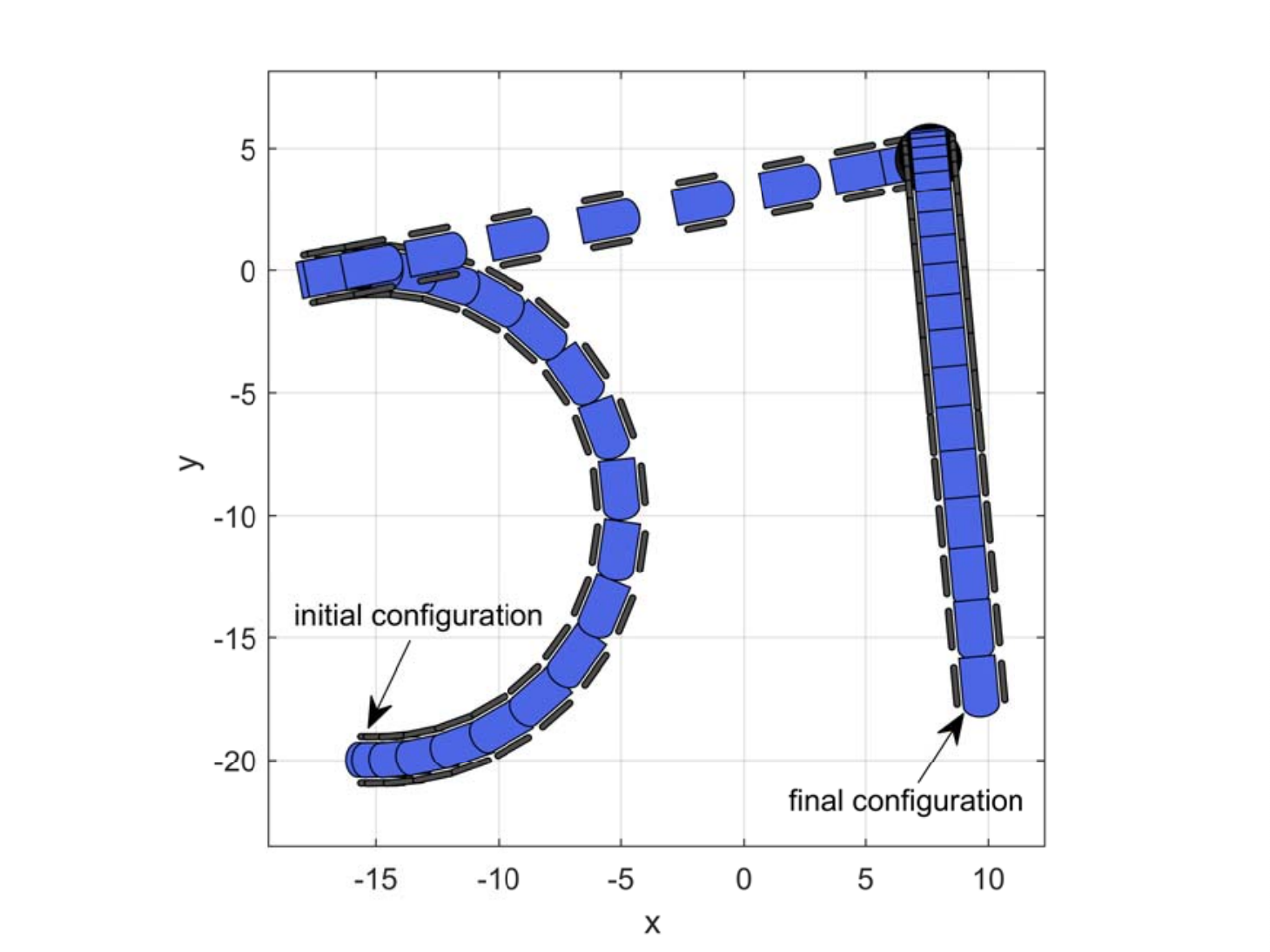}
\caption{Nonsmooth trajectory of the leader under reference inputs}
\label{fig_refer_tra}
\end{minipage}
\quad
\begin{minipage}{0.48\textwidth}
\centering
\includegraphics[width=1\textwidth,trim=40 5 20 20,clip]{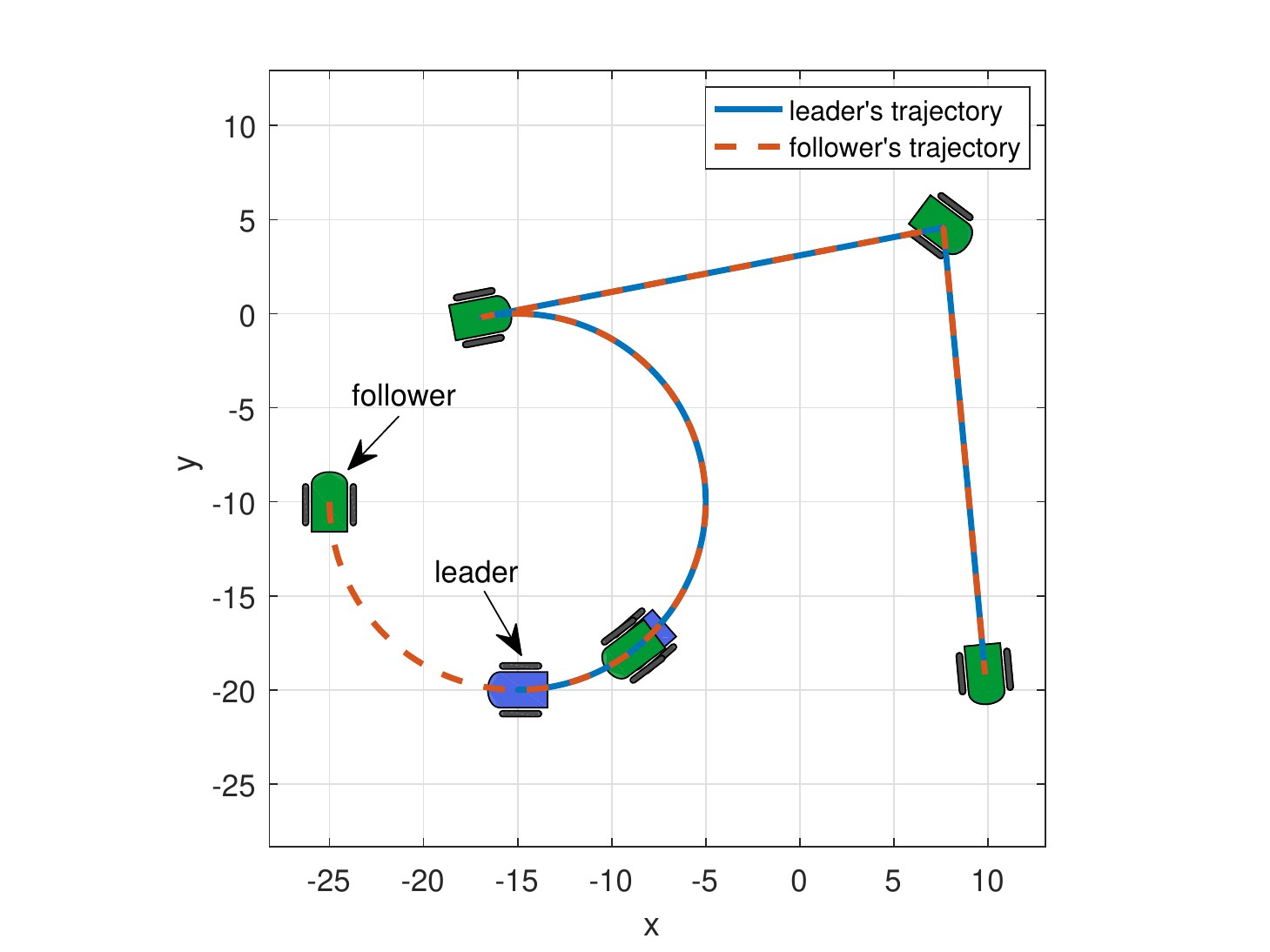}
\caption{Trajectories of mobile robots (nonsmooth trajectory tracking)}
\label{fig_tra_sing_non}
\end{minipage}
\end{figure}

\begin{example}
To show the global convergence property, the case of leader and follower with opposite heading is verified. It is shown in this example that the initial relative attitude angle specified to $\pi$ or $-\pi$ would correspond to different trajectories. We employ the controller in Theorem \ref{the_sing_tra} to make the follower track the leader. The initial configuration of the leader is same as that in Example \ref{ex_1}, and the reference input is defined as $u_{\theta0}=0$ and $u_{x0}=0.1$. The simulation time is set as $T=10\mbox{s}$. Firstly, the initial condition of the follower is chosen to be that $\theta(0)=\pi$, $x(0)=-10$, $y(0)=0$, $\omega(0)=0$, $v_{x}(0)=0$, and we employ the tracking control to acquire the simulation result. Then, we change the initial attitude angle $\theta(0)$ to $-\pi$, and apply the control law again. Finally, these two trajectories are plotted into the same figure, which is shown in Figure \ref{fig_pi}. In fact, the above two initial conditions correspond to the same one configuration, which is marked in yellow color. As is portrayed in the figure, the robot rotates in clockwise direction with $\theta(0)=\pi$, while in anticlockwise direction with $\theta(0)=-\pi$. Thus, if there exist surrounding limitations for movement, the proposed control law can provide choices for a suitable trajectory, which is a significant advantage in real applications.
\end{example}

\begin{figure}
\centering
\subfigure[$t=25\%T$]{
{\includegraphics[width=0.45\textwidth,trim=50 0 50 0,clip]{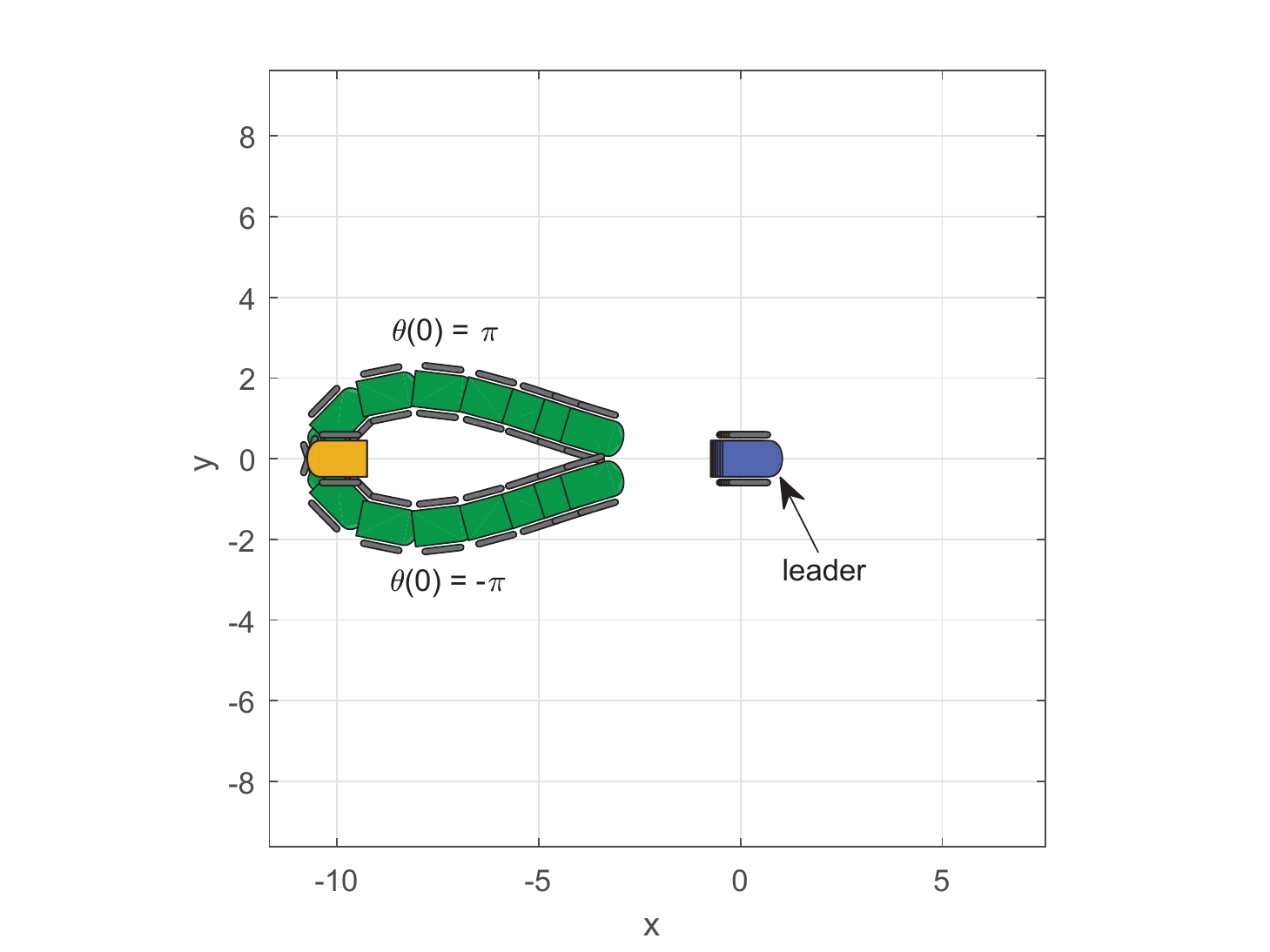}}}
\subfigure[$t=100\%T$]{
{\includegraphics[width=0.45\textwidth,trim=50 0 50 0,clip]{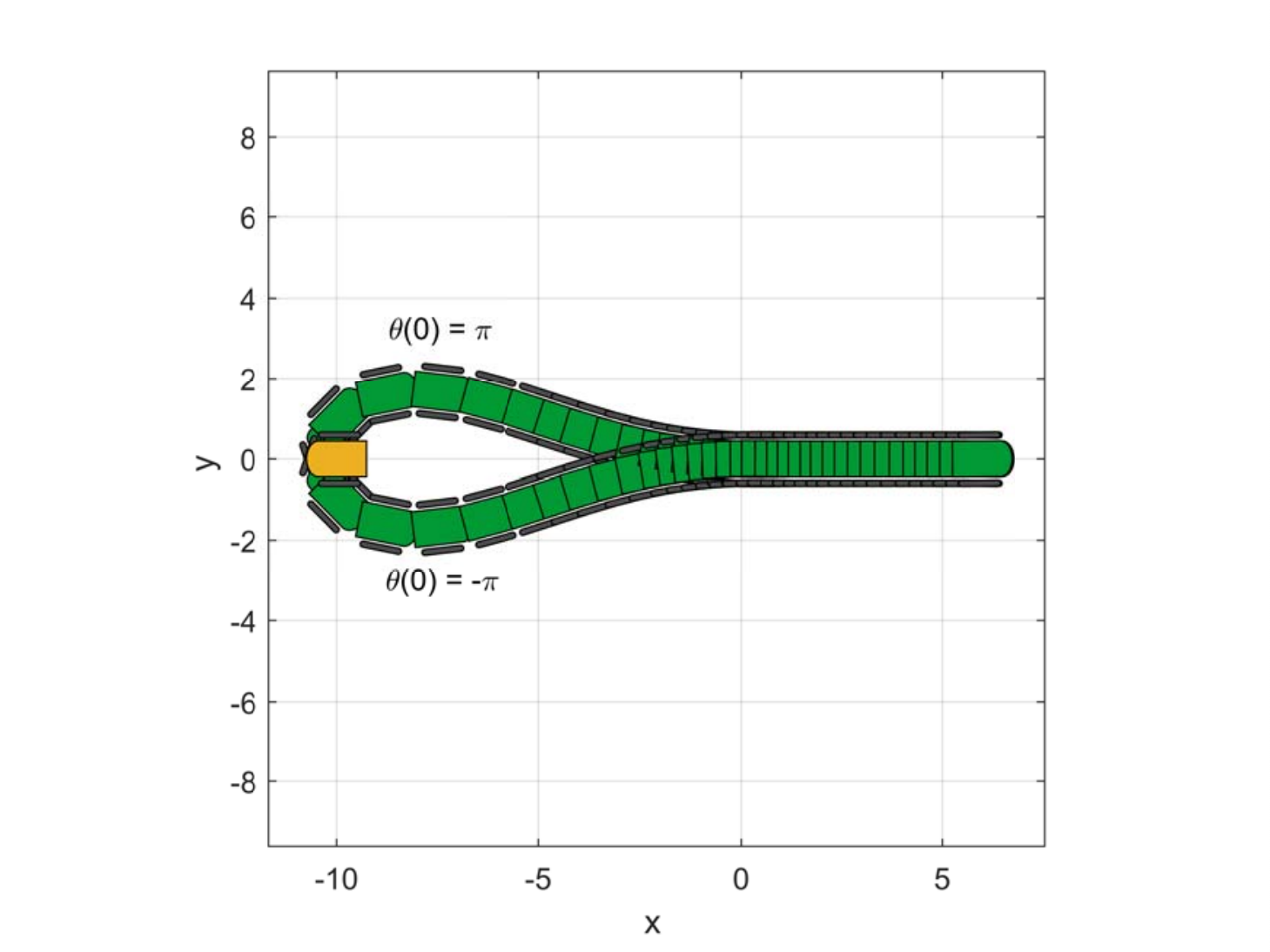}}}
\caption{Trajectories with initial attitude angle $\pi$ and $-\pi$}
\label{fig_pi}
\end{figure}

\begin{example}
This example is to verify the consensus tracking controller (\ref{eq_ui_con_tr_0}). We consider 4 nonholonomic mobile robots, in which the leader is labeled as 0 and the followers are numbered as 1, 2 and 3. The communication topology among the robots is depicted in Figure \ref{fig_topol}, which shows that the robot No.3 has two local leaders (or parent nodes). The reference control input of the leader is predefined as $u_{\theta0}=0.1\sin(0.4t)$ and $u_{x0}=1$. The initial states of all the robots are given in Table \ref{tab_initial_2}. We set the simulation time as $T=10\mbox{s}$. The robots' trajectories with orientation at different instants is demonstrated in Figure \ref{fig_tra_con}. It is can be seen that all the followers are able to achieve consensus with the leader, in other words, the consensus tracking is obtained successfully.
\end{example}

\begin{figure}
\centering
\includegraphics[width=0.15\textwidth]{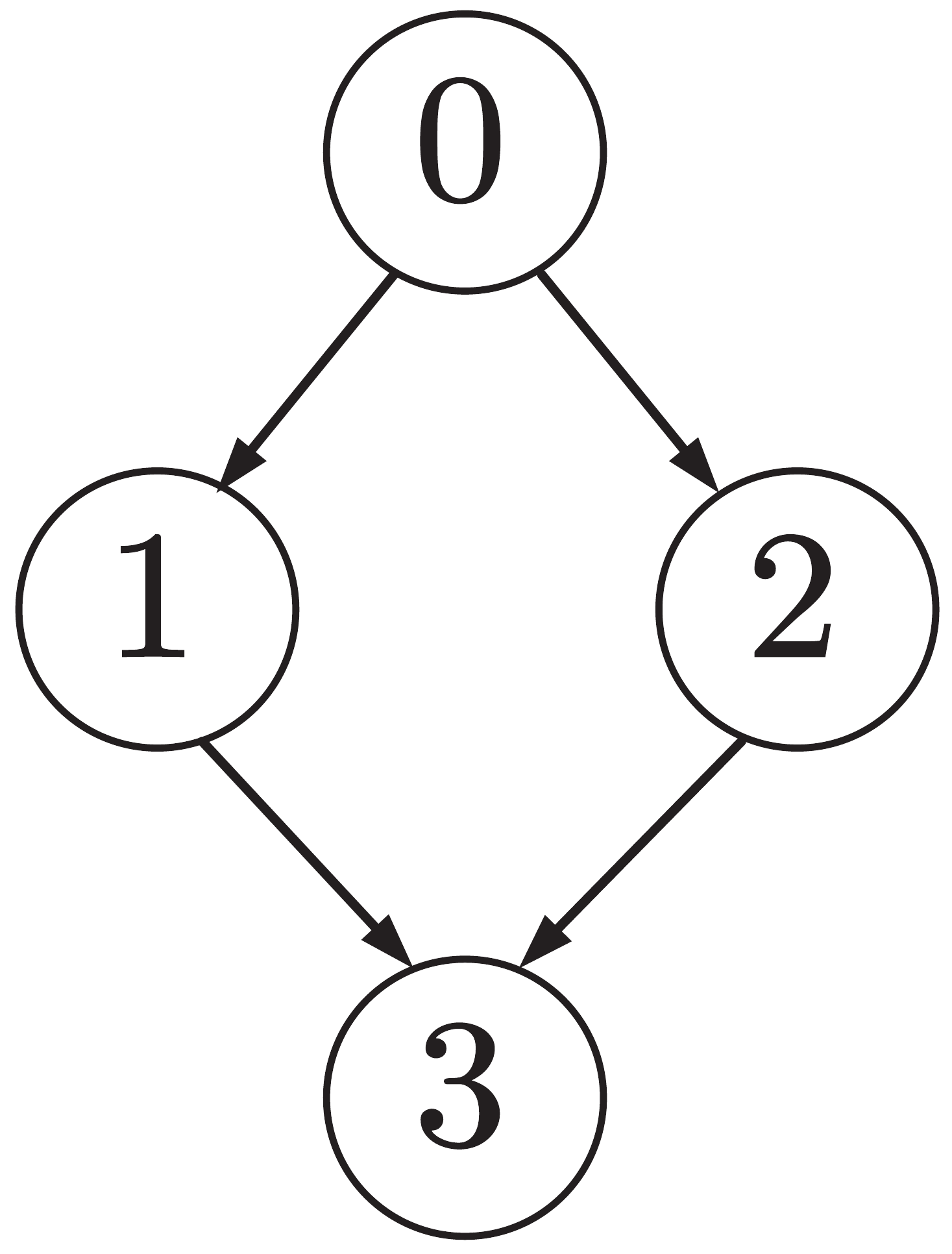}
\caption{The communication topology of mobile robots}
\label{fig_topol}
\end{figure}

\begin{table}\small
\caption{Initial conditions of consensus tracking}
\centering
\begin{tabular}{lccccc}
\hline\noalign{\smallskip}
  No. & $\theta(0)$   & $x(0)$  & $y(0)$  & $\omega(0)$  & $v_{x}(0)$   \\
\noalign{\smallskip}\hline\noalign{\smallskip}
  0   &  $0$         &  $0$     &   $0$      &   $0$       &   $0$         \\
  1   &  $-\pi/4$    &  $-10$   &  $10$      &   $0$       &   $0$         \\
  2   &  $\pi/4 $    &  $-15$   &  $20$      &   $0$       &   $0$         \\
  3   &  $\pi$       &  $-30$   &  $-5$      &   $0$       &   $0$         \\
\noalign{\smallskip}\hline
\end{tabular}
\label{tab_initial_2}
\end{table}

\begin{figure}
\centering
\subfigure[$t=15\%T$]{
{\includegraphics[width=0.35\textwidth,trim=50 0 50 20,clip]{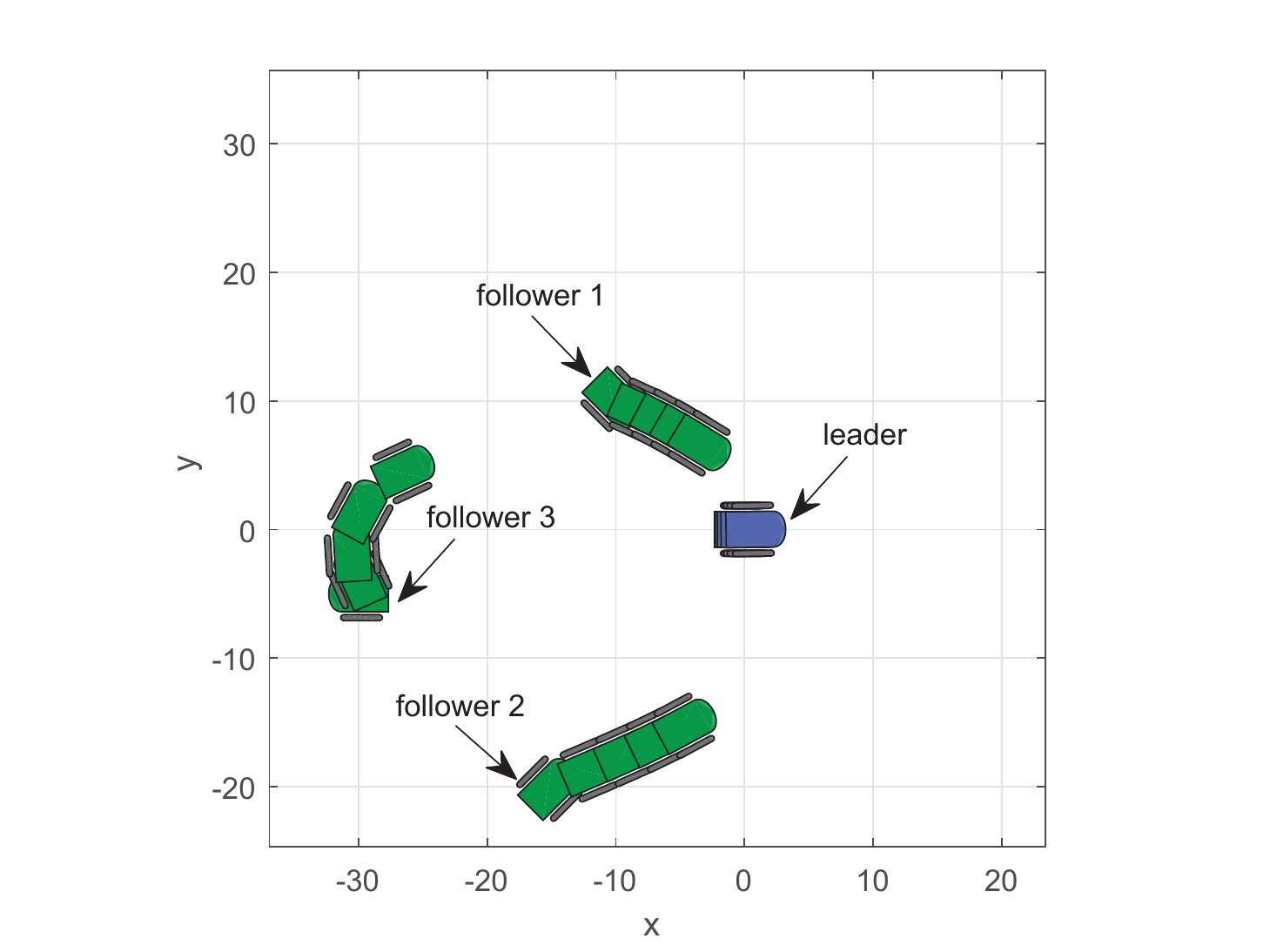}}}\qquad
\subfigure[$t=25\%T$]{
{\includegraphics[width=0.35\textwidth,trim=50 0 50 20,clip]{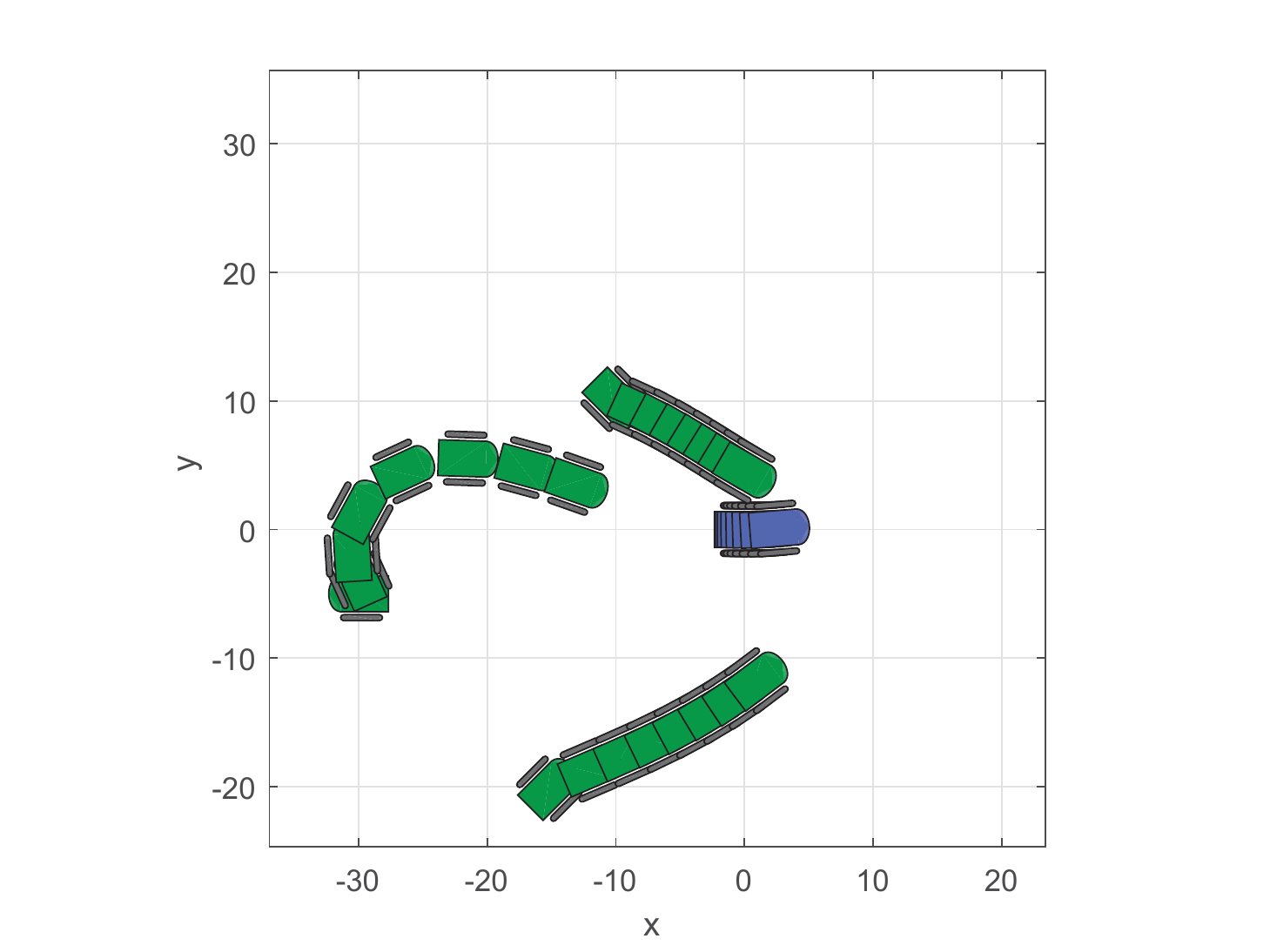}}}

\subfigure[$t=50\%T$]{
{\includegraphics[width=0.35\textwidth,trim=50 0 50 20,clip]{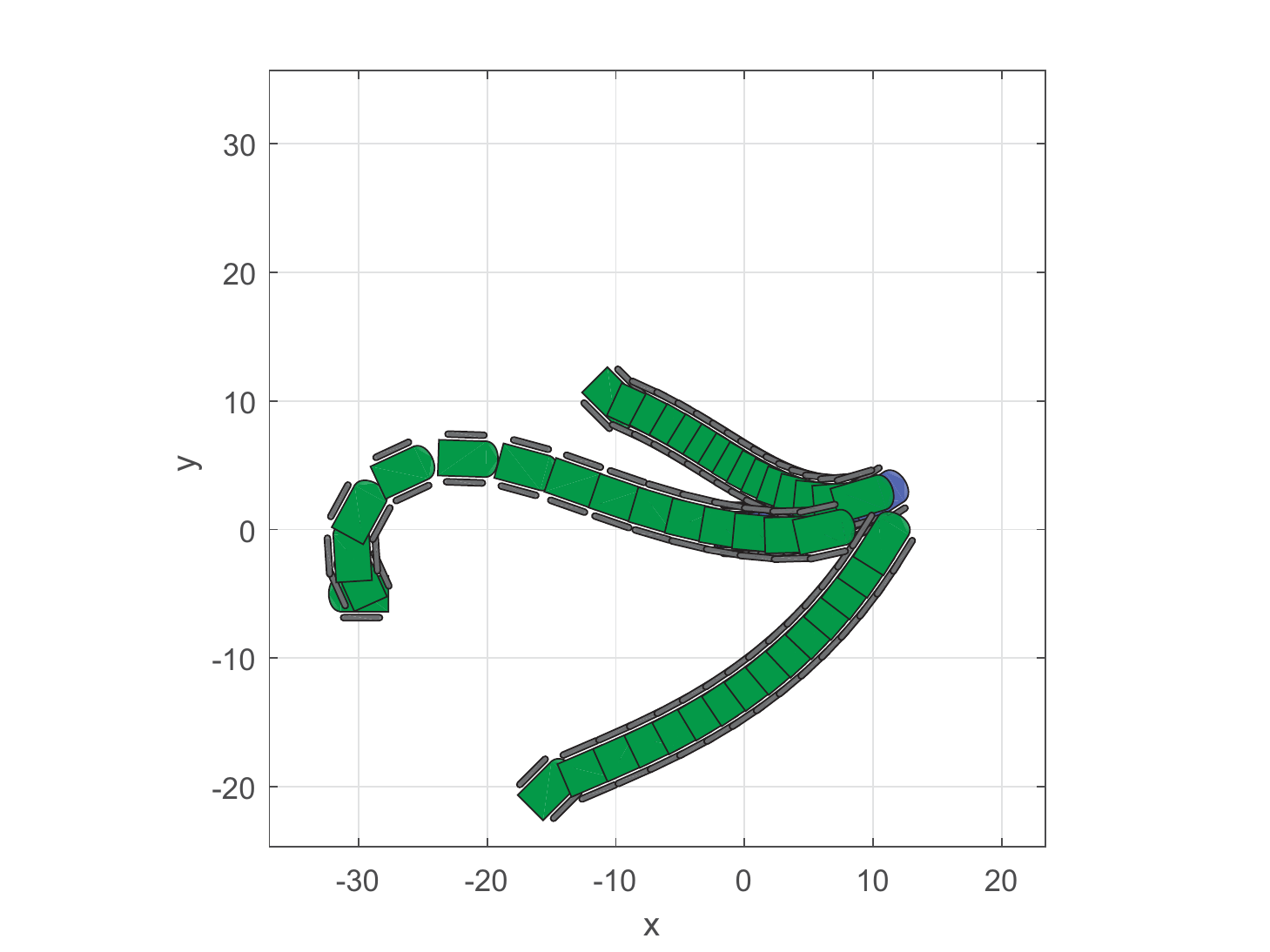}}}\qquad
\subfigure[$t=100\%T$]{
{\includegraphics[width=0.35\textwidth,trim=50 0 50 20,clip]{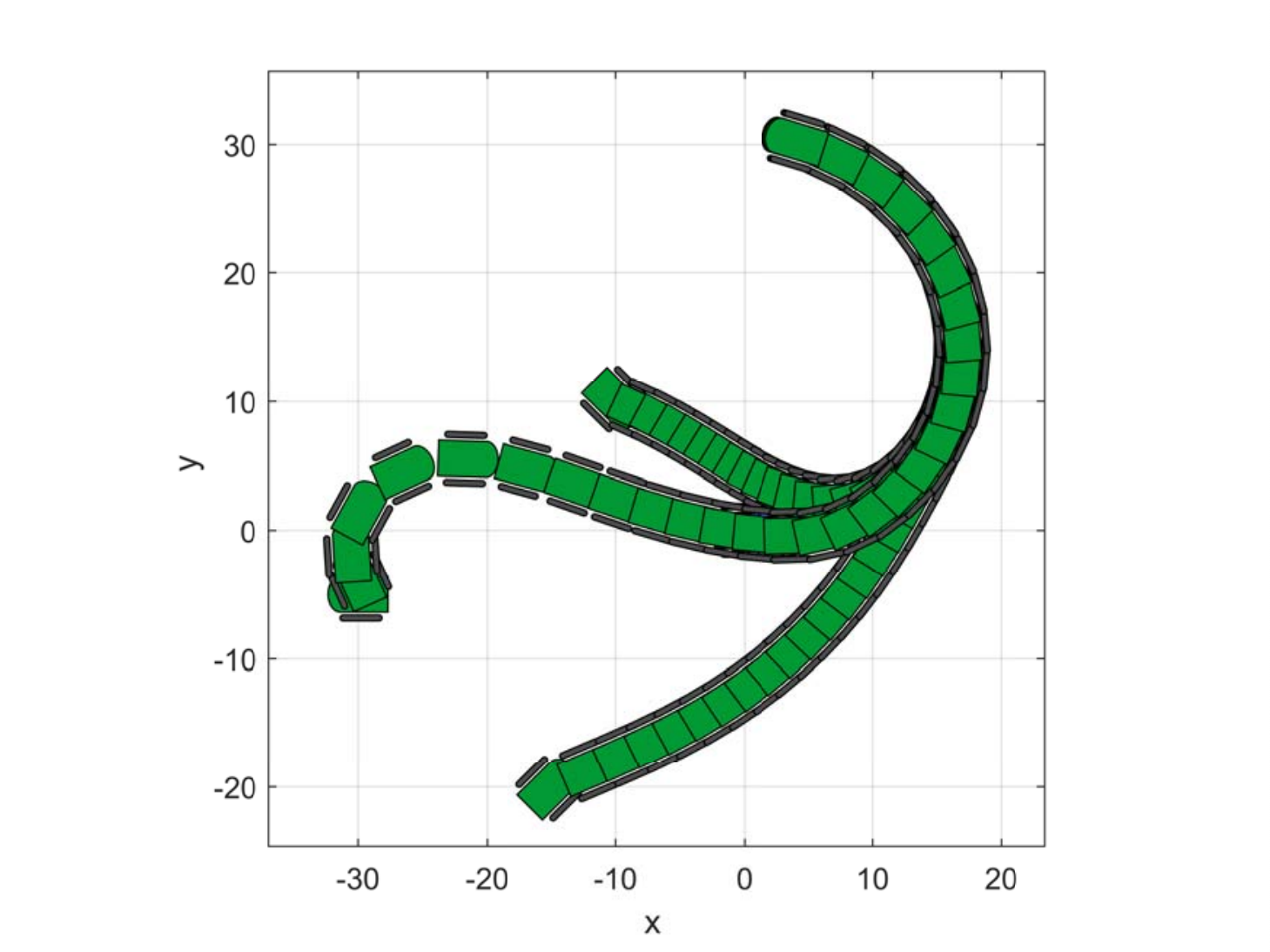}}}
\caption{Trajectories of mobile robots at different instants (consensus tracking)}
\label{fig_tra_con}
\end{figure}

\begin{example}
Finally, we provide an example to illustrate the effectiveness of the formation tracking controller (\ref{eq_ui_for_tr_0}). The number of the nonholonomic mobile robots is 4 and the communication topology is still Figure \ref{fig_topol}. The initial states are given in Table \ref{tab_initial_3}, and the reference control input of the leader is predetermined as \begin{equation*} u_{\theta0}=\left\{ \begin{aligned} 0 \qquad\quad 0\leq t \leq 3 \\ 0.1\sin(0.5t-1.5) \quad t\geq 3 \end{aligned} \right. , \quad u_{x0}=1. \end{equation*} We define the desired formation as follows \begin{equation*} \bar{p}_{c_{1}1}=[-15\ \ 15]^{\trans}, \quad \bar{p}_{c_{2}2}=[-15\ \ -15]^{\trans},\quad \bar{p}_{c_{3}3}=[-15\ \ 0]^{\trans}, \end{equation*} which represents the desired position of the follower with respect to its (virtual) local leader. The simulation time is set as $T=10\mbox{s}$, and we portray the trajectories of the mobile robots in Figure \ref{fig_tra_form}. It is shown in the pictures that all the followers achieve the desired formation shape with the leader.
\end{example}

\begin{table}\small
\caption{Initial conditions of formation tracking}
\centering
\begin{tabular}{lccccc}
\hline\noalign{\smallskip}
  No. & $\theta(0)$   & $x(0)$  & $y(0)$  & $\omega(0)$  & $v_{x}(0)$   \\
\noalign{\smallskip}\hline\noalign{\smallskip}
  0   &  $0$         &  $0$     &   $0$      &   $0$       &   $0$         \\
  1   &  $-\pi/2$    &  $-40$   &  $40$      &   $0$       &   $0$         \\
  2   &  $\pi/2 $    &  $-20$   &  $-25$     &   $0$       &   $0$         \\
  3   &  $\pi$       &  $-70$   &  $-10$     &   $0$       &   $0$         \\
\noalign{\smallskip}\hline
\end{tabular}
\label{tab_initial_3}
\end{table}

\begin{figure}
\centering
\subfigure[$t=25\%T$]{
{\includegraphics[width=0.35\textwidth,trim=50 0 50 20,clip]{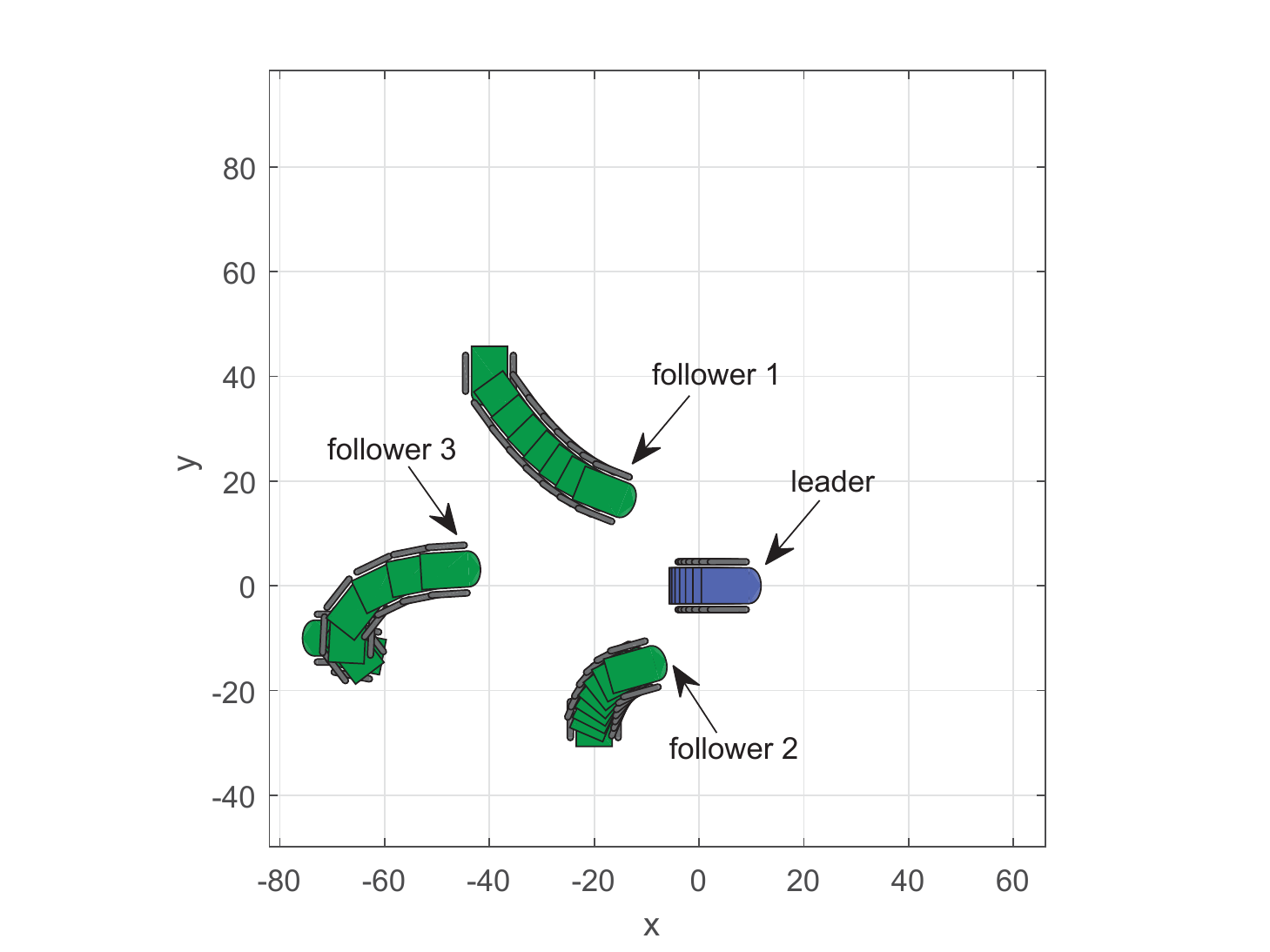}}}\qquad
\subfigure[$t=50\%T$]{
{\includegraphics[width=0.35\textwidth,trim=50 0 50 20,clip]{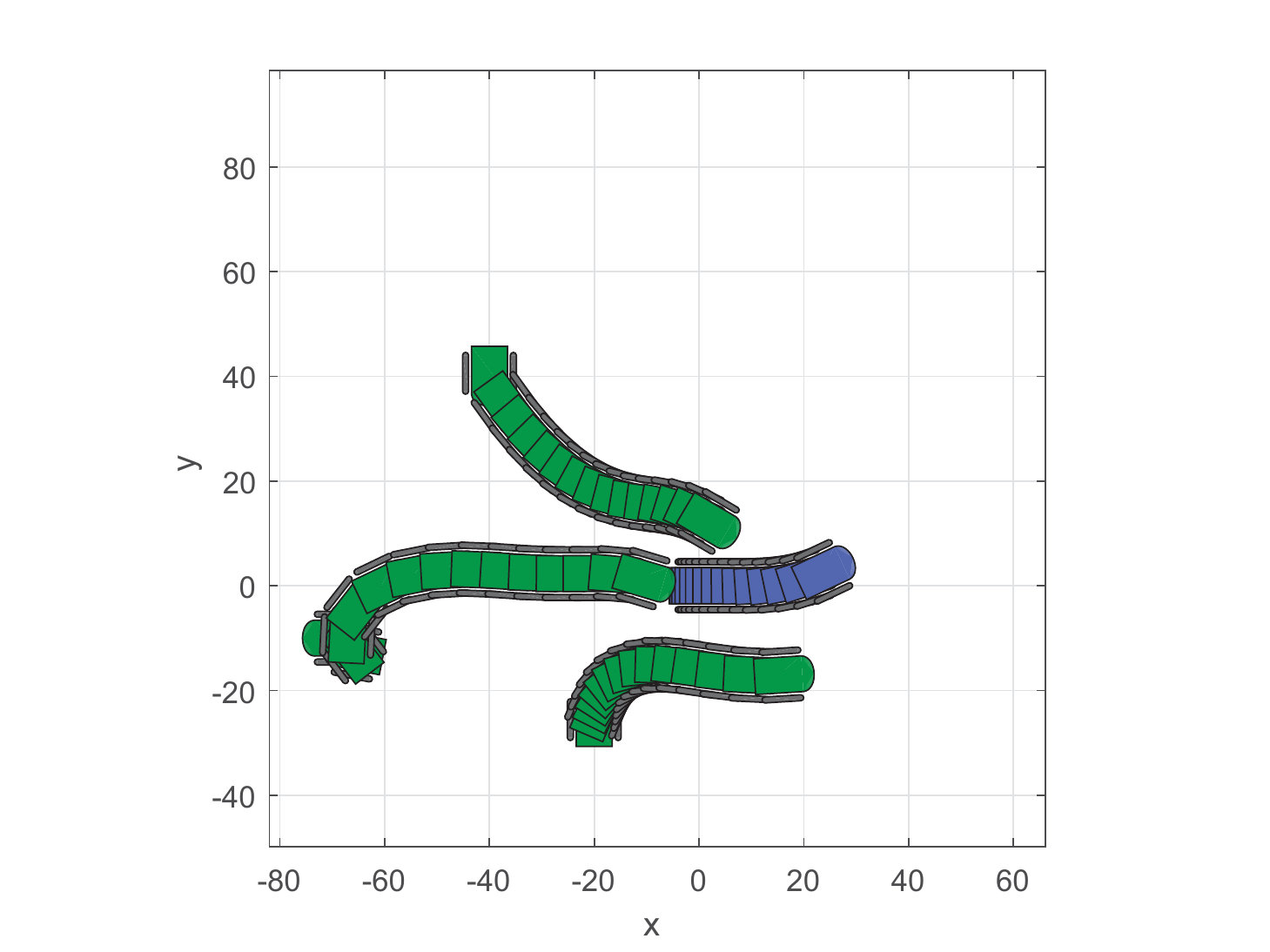}}}

\subfigure[$t=75\%T$]{
{\includegraphics[width=0.35\textwidth,trim=50 0 50 20,clip]{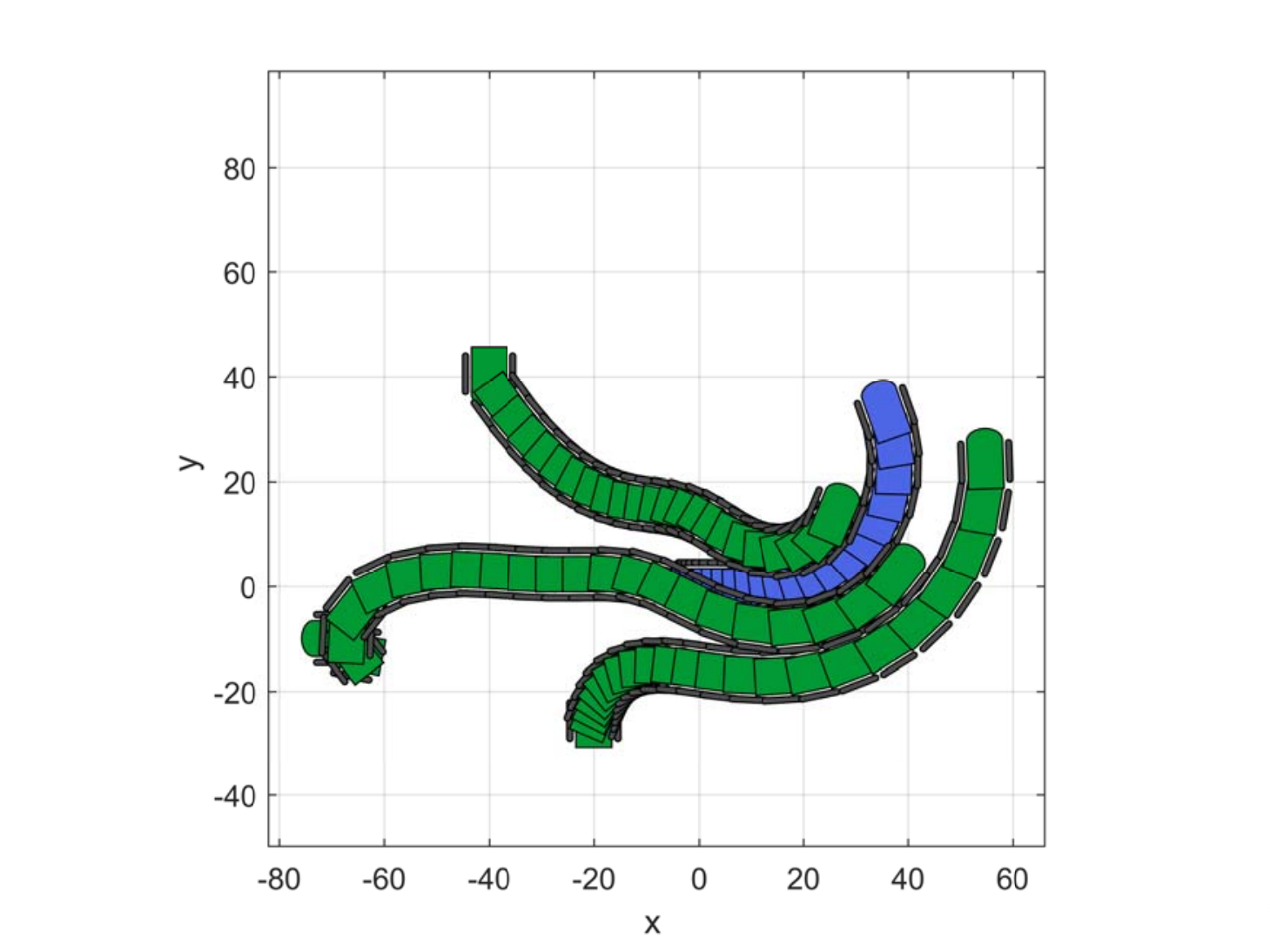}}}\qquad
\subfigure[$t=100\%T$]{
{\includegraphics[width=0.35\textwidth,trim=50 0 50 20,clip]{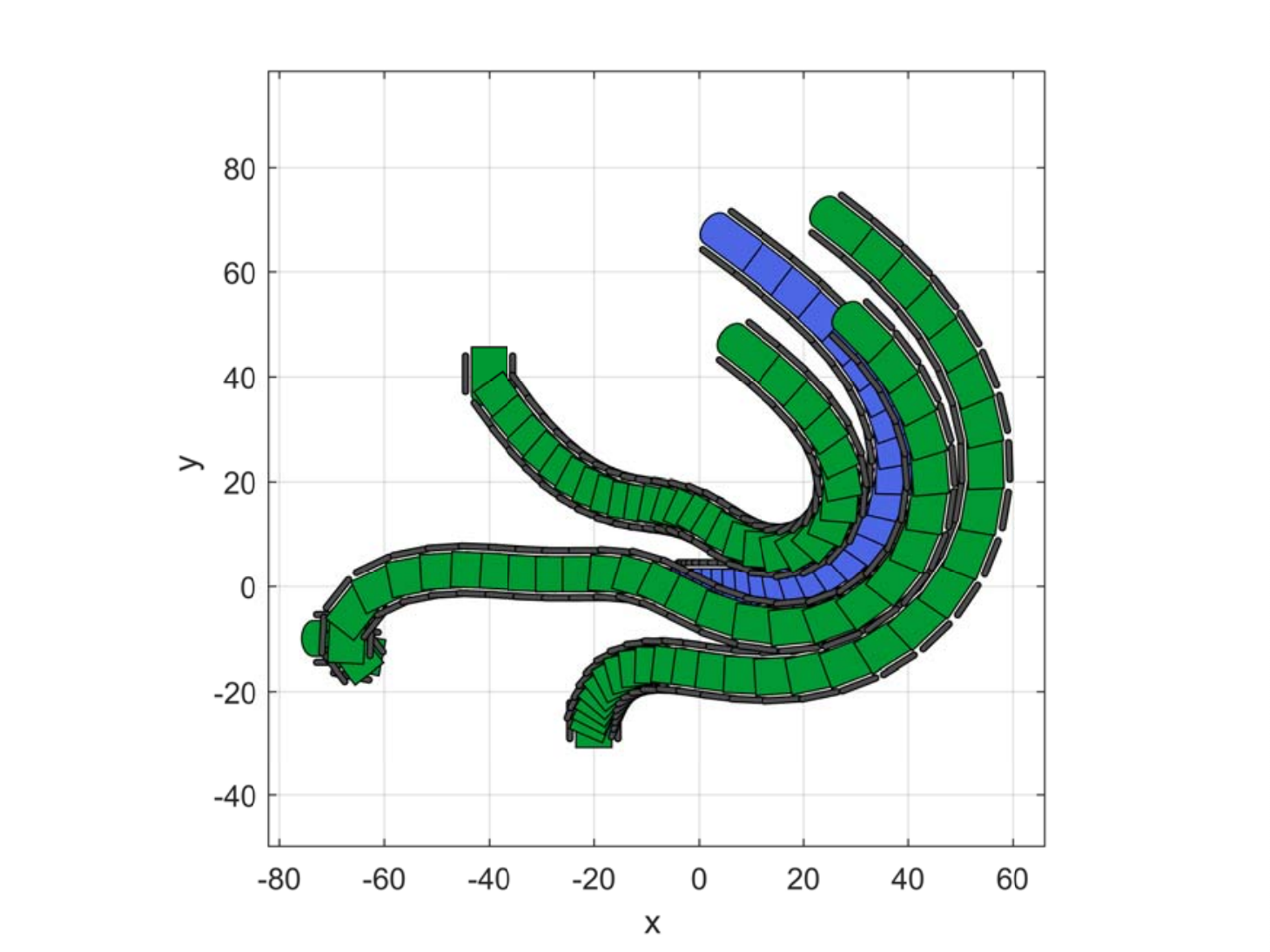}}}
\caption{Trajectories of mobile robots at different instants (formation tracking)}
\label{fig_tra_form}
\end{figure}

\section{Conclusion}

In this paper, we consider the trajectory tracking control problem for nonholonomic mobile robots based on second order dynamics. At first, the single follower tracking controller is presented by the stabilization of two relative subsystems. Later, the control strategy is extended to the consensus tracking and formation tracking problem of multiple robots, which are connected by a directed acyclic communication graph. The greatest novelty of this paper is that the reference trajectory can be arbitrarily chosen, in the sense that the condition of persistency excitation and any other requirements are not imposed on the leader. Besides, the proposed controller possesses the global convergence property.

Based on the obtained results, it is of interest to further investigate the networked system of nonholohomic mobile robots under more practical situation. Firstly, the collision between the robots should be avoided absolutely in practice. Moreover, the robots generally move in the obstacle circumstance. Thus, it is worth studying the problems of collision avoidance and obstacle avoidance in the following. Secondly, the communication topology in this paper is relatively simple. In reality, the robots might communicate with each other in a more complicated network. Therefore, the cooperative control under more general communication topology will be considered in the future.

\section*{Acknowledgements}

This work is supported by the National Natural Science Foundation of China under Grant 61773024.

\end{document}